\documentclass[a4paper,USenglish]{lipics-v2021}
  
\usepackage{libertine}

\usepackage{datetime}
\usepackage{url}
\usepackage{paralist}

\usepackage{algorithm,algpseudocode}
\usepackage{multicol}







\usepackage[font=small,labelfont=bf,justification=centering]{caption}



\usepackage{amsmath,amsthm}

\theoremstyle{definition}

\theoremstyle{definition}




\usepackage{graphicx}

\usepackage{arydshln}

\usepackage{multirow}

\usepackage{threeparttable}

\usepackage{cleveref}
\crefname{section}{\S}{\S\S}
\Crefname{section}{\S}{\S\S}
\crefformat{section}{\S#2#1#3}

\Crefname{assumption}{Assumption}{Assumptions}


\Crefname{invariant}{Invariant}{Invariants}

\Crefname{observation}{Observation}{Observations}

\usepackage{algorithmicx}
\usepackage{algorithm} 
\usepackage{algpseudocode}

\usepackage{soul}

\usepackage{xspace}

\newcommand{\name}{\textsc{DARE}\xspace}

\newcommand{\darestark}{\textsc{DARE-Stark}\xspace}
\newcommand{\sync}{\textsc{Sync}\xspace}

\newcommand{\spread}{\textsc{Disperser}\xspace}
\newcommand{\q}{\textsc{Agreement}\xspace}
\newcommand{\rec}{\textsc{Retriever}\xspace}

\newcommand{\bcastbasedsync}{\textsc{BcastBasedSync}\xspace}

\usepackage{xcolor}
\usepackage{framed}
\definecolor{lightgray}{gray}{0.90}
\renewenvironment{leftbar}[1][\hsize]
{%
\MakeFramed{\hsize#1\advance\hsize-\width\FrameRestore}%
}
{\endMakeFramed}






\algsetblockdefx[Class]{Class}{EndClass}{}{}[1]{\textbf{class} {#1}:}{\phantom}
\algtext*{EndClass}
\algsetblockdefx[Interface]{Interface}{EndInterface}{}{}{\textbf{Interface:}}{\phantom}
\algtext*{EndInterface}
\algsetblockdefx[Requests]{Requests}{EndRequests}{}{}{\textbf{Requests:}}{\phantom}
\algtext*{EndRequests}
\algsetblockdefx[Logging]{Logging}{EndLogging}{}{}{\textbf{Logging:}}{\phantom}
\algtext*{EndLogging}
\algsetblockdefx[Algorithm]{Algorithm}{EndAlgorithm}{}{}{\textbf{Algorithm:}}{\phantom}
\algtext*{EndAlgorithm}
\algsetblockdefx[Constants]{Constants}{EndConstants}{}{}{\textbf{Constants:}}{\phantom}
\algtext*{EndConstants}
\algsetblockdefx[Upon]{Upon}{EndUpon}{}{}[1]{\textbf{upon} $#1$}{\textbf{end upon}}
\algsetblockdefx[UponReceipt]{UponReceipt}{EndUponReceipt}{}{}[2]{\textbf{upon receipt of} $#1$ \textbf{ from} $#2$}{\textbf{end upon receipt}}
\algsetblockdefx[Structure]{Structure}{EndStructure}{}{}[1]{$#1 = \{$}{$\}$}

\algsetblockdefx[Procedure]{Procedure}{EndProcedure}{}{}[2]{\textbf{procedure} {#1}($#2$)}{\phantom}
\algtext*{EndProcedure}
\algsetblockdefx[Function]{Function}{EndFunction}{}{}[2]{\textbf{function} {#1}($#2$)}{\phantom}
\algtext*{EndFunction}

\algsetblockdefx[Leader]{Leader}{EndLeader}{}{}{\textbf{as} a leader}{\phantom}
\algtext*{EndLeader}
\algsetblockdefx[Replica]{Replica}{EndReplica}{}{}{\textbf{as} a replica}{\phantom}
\algtext*{EndReplica}

\algnewcommand{\BlueComment}[1]{\textcolor{blue}{\hfill\(\triangleright\) #1}}
\algnewcommand{\LineComment}[1]{\textcolor{blue}{ \(\triangleright\) #1}}



\usepackage{bold-extra} 

\usepackage{listings}
\crefname{lstlisting}{listing}{listings}
\Crefname{lstlisting}{Listing}{Listings}

\crefname{code}{line}{lines}
\Crefname{code}{Line}{Lines}

\usepackage{xcolor}

\definecolor{mygreen}{rgb}{0.254,0.572,0.294}
\definecolor{mygray}{rgb}{0.5,0.5,0.5}
\definecolor{myorange}{rgb}{1,0.35,0}
\definecolor{mymauve}{rgb}{0.58,0,0.82}
\definecolor{myblue}{rgb}{0.2,0.4,0.6}

\definecolor{rakos4orange}{RGB}{255,165,0}
\definecolor{rakos4blue}{RGB}{14,48,173}
\definecolor{rakos4lblue}{RGB}{92,172,238}
\definecolor{rakos4dgray}{RGB}{77,77,77}
\definecolor{plainred}{RGB}{211,63,63}
\definecolor{plainorange}{RGB}{221,105,41}

\lstdefinelanguage{Golang}%
  {morekeywords=[1]{package,import,struct,defer,panic,%
     recover,select,var,const,iota, class},%
   morekeywords=[2]{string,uint,uint8,uint16,uint32,uint64,int,int8,int16,%
     int32,int64,bool,float32,float64,complex64,complex128,byte,rune,uintptr,%
     error,interface,node},%
   morekeywords=[3]{map,slice,make,new,nil,len,cap,copy,close,%
     delete,append,real,imag,complex,chan,},%
   morekeywords=[4]{break,continue,goto,switch,case,fallthrough,%
    default,},%
   morekeywords=[5]{Println,Printf,Error,Send},%
   sensitive=true,%
   morecomment=[l]{//},%
   morecomment=[s]{/*}{*/},%
   morestring=[b]",%
   morestring=[s]{`}{`},%
   }

\lstset{ 
    basicstyle=\small\ttfamily,
    commentstyle=\small\color{mygray}\textnormal,
    rulecolor=\color{mygray},
    columns=fullflexible,
    breakatwhitespace=false,         
    breaklines=true,                 
    captionpos=b,                    
    numberblanklines=false,
    keepspaces=true,                 
    escapeinside={(*@}{@*)},          
    numbersep=2pt,                   
    numberstyle=\scriptsize\color{mygray}, 
    xleftmargin=6pt,
    keywordstyle=\bfseries,
    keywordstyle={[2]\color{blue}\bfseries},
    numbers=left,
    showstringspaces=false,
    stringstyle=\color{blue},
    tabsize=2,
    frame={n},
    framesep=0pt,
    language=Golang, 
    emph={checkSig, checkDeps, getBalance, broadcast, let, send, deliverBroadcast, deliverUnicast, unicast, validPrepare, validAck, validCommit, fifoBroadcast, floor, union, supset, subset, supseteq, subseteq, in, minus, toString, gossip, broadcast, reliable_broadcast}, emphstyle=\color{rakos4blue},
    emph={[2]tx, balance}, emphstyle={[2]\bfseries\color{rakos4blue}},
    emph={[3]C, toValidate, proofs, buffer, deps, hist, sn, i, pending, acks, Requests, Logging, Algorithm, Procedures, Indications, Definitions, Properties, instance, Interface, Rules}, emphstyle={[3]\color{plainorange}\bfseries},
    emph={[4]}, emphstyle={[4]\color{rakos4blue}},
    emph={[6]func, callback, upon, procedure, Constants, receipt, abort, exists},
    emphstyle={[6]\bfseries},
    moredelim=**[is][\btHL]{`}{`},
}

\usepackage{algorithmicx}
\usepackage{algorithm} 
\usepackage{algpseudocode}
\usepackage{xifthen}
\usepackage{graphicx}
\usepackage{wrapfig}

\usepackage{stmaryrd}
\usepackage{amsmath}
\usepackage{amsthm}

\usepackage{cancel}
\usepackage{soul}

\newif\ifcomments
\commentstrue



\author{Pierre Civit}{Sorbonne University, France}{}{}{}
\author{Seth Gilbert}{NUS Singapore, Singapore}{}{}{Supported in part by the Singapore MOE Tier 2 grant MOE-T2EP20122-0014.}
\author{Rachid Guerraoui}{Ecole Polytechnique Fédérale de Lausanne (EPFL), Switzerland}{}{}{}
\author{Jovan Komatovic}{Ecole Polytechnique Fédérale de Lausanne (EPFL), Switzerland}{}{}{}
\author{Matteo Monti}{Ecole Polytechnique Fédérale de Lausanne (EPFL), Switzerland}{}{}{}
\author{Manuel Vidigueira}{Ecole Polytechnique Fédérale de Lausanne (EPFL), Switzerland}{}{}{Supported in part by the FNS (\#200021\_215383).}

\nolinenumbers 

\hideLIPIcs  

\begin{document}


\title{Every Bit Counts in Consensus \\ (Extended Version)}



\authorrunning{P. Civit, S. Gilbert, R. Guerraoui, J. Komatovic, M. Monti, M. Vidigueira} 

\Copyright{Anon} 

\ccsdesc[500]{Theory of computation~Distributed algorithms} 

\keywords{Byzantine consensus, Bit complexity, Latency} 

\bibliographystyle{plainurl}

\maketitle
  

\begin{abstract}
Consensus enables $n$ processes to agree on a common valid $L$-bit value, despite $t < n/3$ processes being faulty and acting arbitrarily.
A long line of work has been dedicated to improving the worst-case communication complexity of consensus in partial synchrony.
This has recently culminated in the worst-case \emph{word} complexity of $O(n^2)$.
However, the worst-case \emph{bit} complexity of the best solution is still $O(n^2L + n^2\kappa)$ (where $\kappa$ is the security parameter), far from the $\Omega(nL + n^2)$ lower bound.
The gap is significant given the practical use of consensus primitives, where values typically consist of batches of large size ($L > n$).

This paper shows how to narrow the aforementioned gap.
Namely, we present a new algorithm, \name (Disperse, Agree, REtrieve),
that improves upon the $O(n^2L)$ term via a novel dispersal primitive.
DARE achieves $O(n^{1.5}L + n^{2.5}\kappa)$ bit complexity, an effective $\sqrt{n}$-factor improvement over the state-of-the-art (when $L > n\kappa$).
Moreover, we show that employing heavier cryptographic primitives, namely STARK proofs, allows us to devise \darestark, a version of \name which achieves the near-optimal bit complexity of $O(nL + n^2\mathit{poly}(\kappa))$.
Both \name and \darestark achieve optimal $O(n)$ worst-case latency.




\end{abstract}

\maketitle

\section{Introduction} \label{section:introduction}




Byzantine consensus~\cite{Lamport1982} is a fundamental primitive 
in distributed computing.
It has recently risen to prominence due to its use in blockchains~\cite{luu2015scp,buchman2016tendermint,abraham2016solida,chen2016algorand,abraham2016solidus,CGL18,correia2019byzantine} and various forms of state machine replication (SMR)~\cite{adya2002farsite,CL02,kotla2004high,abd2005fault,amir2006scaling,kotla2007zyzzyva,veronese2011efficient,malkhi2019flexible,momose2021multi}. At the same time, the performance of these applications has become directly tied to the performance of consensus and its efficient use of network resources. Specifically, the key limitation on blockchain transaction rates today is network throughput \cite{narwhal, bullshark, chopchop}. This has sparked a large demand for research into Byzantine consensus algorithms with better communication complexity guarantees.

Consensus operates among $n$ processes: each process proposes its value, and all processes eventually agree on a common valid $L$-bit decision.
A process can either be correct or faulty: correct processes follow the prescribed protocol, while faulty processes (up to $t < n / 3$) can behave arbitrarily.
Consensus satisfies the following properties:
\begin{compactitem} 
    \item \emph{Agreement:} No two correct processes decide different values.
    
    \item \emph{Termination:} All correct processes eventually decide.

    \item \emph{(External) Validity:} If a correct process decides a value $v$, then $\mathsf{valid}(v) = \mathit{true}$.
\end{compactitem}
Here, $\mathsf{valid}(\cdot)$ is any predefined logical predicate that indicates whether or not a value is valid.\footnote{For traditional notions of validity, admissible values depend on the proposals of correct processes, e.g., if all correct processes start with value $v$, then $v$ is the only admissible decision. In this paper, we focus on external validity~\cite{Cachin2001}, with the observation that any other validity condition can be achieved by reduction (as shown in~\cite{validity_paper}).}

This paper focuses on improving the worst-case bit complexity of deterministic Byzantine consensus in standard partial synchrony~\cite{Dwork1988}.
The worst-case lower bound is $\Omega(nL + n^2)$ exchanged bits.
This considers all bits sent by correct processes from the moment the network becomes synchronous, i.e., GST (the number of messages sent by correct processes before GST is unbounded due to asynchrony~\cite{Spiegelman2021}).
The $nL$ term comes from the fact that all $n$ processes must receive the decided value at least once, while the $n^2$ term is implied by the seminal Dolev-Reischuk lower bound~\cite{Dolev1985, Spiegelman2021} on the number of messages.
Recently, a long line of work has culminated in Byzantine consensus algorithms which achieve optimal $O(n^2)$ worst-case \emph{word} complexity, where a word is any constant number of values, signatures or hashes~\cite{CivitDGGGKV22,LewisPye}.
However, to the best of our knowledge, no existing algorithm beats the $O(n^2L + n^2\kappa)$ bound on the worst-case bit complexity, where $\kappa$ denotes the security parameter (e.g., the number of bits per hash or signature).
The $n^2L$ term presents a linear gap with respect to the lower bound.

Does this gap matter?
In practice, yes. 
In many cases, consensus protocols are used to agree on a large batch of inputs~\cite{bitcoin, ethereum, chopchop, narwhal, bullshark}.
For example, a block in a blockchain amalgamates many transactions.
Alternatively, imagine that $n$ parties each propose a value, and the protocol agrees on a set of these values.
(This is often known as vector consensus~\cite{BenOr1993, ben1994asynchronous, duan2023practical, neves2005solving,Doudou1998,correia2006consensus}.)  
Typically, the hope is that by batching values/transactions, we can improve the total throughput of the system.
Unfortunately, with current consensus protocols, larger batches do not necessarily yield better performance when applied directly~\cite{decker2016bitcoin}.
This does not mean that batches are necessarily ineffective.
In fact, a recent line of work has achieved significant practical improvements to consensus throughput by entirely focusing on the efficient dissemination of large batches (i.e., large values), so-called ``mempool'' protocols~\cite{narwhal, bullshark, chopchop}.
While these solutions work only optimistically (they perform well in periods of synchrony and without faults), they show that a holistic focus on \emph{bandwidth} usage is fundamental (i.e., bit complexity, and not just word complexity).



\subsection{Contributions}

We introduce DARE (Disperse, Agree, REtrieve), a new Byzantine consensus algorithm for partial synchrony with worst-case $O(n^{1.5}L + n^{2.5}\kappa)$ bit complexity and optimal worst-case $O(n)$ latency. 
Moreover, by enriching DARE with heavier cryptographic primitives, namely STARK proofs, we close the gap near-optimally using only $O(nL + n^2\mathit{poly}(\kappa))$ bits.
Notice that, if you think of $L$ as a batch of $n$ transactions of size $s$, the average communication cost of agreeing on a single transaction is only $\tilde{O}(ns)$ bits -- the same as a best-effort (unsafe) broadcast~\cite{cachin_introduction} of that transaction!

To the best of our knowledge, DARE is the first partially synchronous algorithm to achieve $o(n^2L)$ bit complexity and $O(n)$ latency.
The main idea behind DARE is to separate the problem of agreeing from the problem of retrieving an agreed-upon value (see \Cref{section:technical_overview} for more details).
Figure~\ref{fig:complexities_summary} places DARE in the context of efficient consensus algorithms.

\begin{figure}[ht]
\centering
\begin{tabular}{ |p{3cm}|p{1cm}|p{2.5cm}|p{3cm}|p{2cm}|  }
 \hline
 Protocol & Model & Cryptography & \multicolumn{2}{|c|}{Complexity} \\
 \hline\hline
 & & & $\mathbb{E}[\text{Bits}]^\dagger$ & $\mathbb{E}[\text{Latency}]^\dagger$ \\
 \hline
 ABC \cite{Cachin2001}$^\ddagger$   & Async & PKI, TS \cite{Libert2016}              & $O(n^2L + n^2\kappa + n^{3})$ & $O(1) $\\
 VABA \cite{abraham2019asymptotically}          & Async & above                & $O(n^{2}L + n^2\kappa)$ & $O(1) $\\
 Dumbo-MVBA \cite{Lu2020}    & Async & above + ECC \cite{blahut1983theory}      & $O(nL + n^2\kappa)$ & $O(1) $\\ [0.5ex]
 \hline\hline
 & & & Bits & Latency \\
 \hline
 PBFT \cite{Castro2002, bessani2014state}          & PSync & PKI                  & $O(n^2L + n^{4}\kappa)$ & $O(n) $ \\
 HotStuff \cite{Yin2019}    & PSync & above + TS           & $O(n^2L + n^{3}\kappa)$ & $O(n) $ \\
 Quad \cite{CivitDGGGKV22,LewisPye} & PSync & above                & $O(n^{2}L + n^2\kappa)$ & $O(n) $ \\
 \textbf{DARE} & PSync & above + ECC         & $O(n^{1.5}L + n^{2.5}\kappa)$ & $O(n) $ \\
 \textbf{\textsc{DARE-Stark}} & PSync & above + STARK & $O(nL + n^2\kappa)$ & $O(n) $ \\
 \hline
\end{tabular}
    \caption{Performance of various consensus algorithms with $L$-bit values and $\kappa$-bit security parameter.\\
    $^\dagger$ For asynchronous algorithms, we show the complexity in expectation instead of the worst-case (which is unbounded for deterministic safety guarantees due to the FLP impossibility result \cite{Fischer1985}).\\
    $^\ddagger$ Threshold Signatures (TS) are used to directly improve the original algorithm.}
\label{fig:complexities_summary}
\end{figure}


\subsection{Technical Overview}
\label{section:technical_overview}

\smallskip
\noindent \textbf{The ``curse'' of GST.}
To understand the problem that DARE solves, we must first understand why existing algorithms suffer from an $O(n^2L)$ term.
``Leader-based'' algorithms (such as the state-of-the-art \cite{Momose2021, CivitDGGGKV22, LewisPye}) solve consensus by organizing processes into a rotating sequence of \emph{views}, each with a different leader.
A view's leader broadcasts its value $v$ and drives other processes to decide it.
If all correct processes are timely and the leader is correct, $v$ is decided.

The main issue is that, if synchrony is only guaranteed \emph{eventually} (partial synchrony \cite{Dwork1988}), a view might fail to reach agreement even if its leader is correct: the leader could just be slow (i.e., not yet synchronous).
The inability to distinguish the two scenarios forces protocols to change views even if the current leader is merely ``suspected'' of being faulty.
Since there can be up to $t$ faulty leaders, there must be at least $t+1$ different views.
However, this comes at the risk of sending unnecessary messages if the suspicion proves false, which is what happens in the worst case.

Suppose that, before GST (i.e., the point in time the system becomes synchronous), the first $t$ leaders are correct, but ``go to sleep'' (slow down) immediately before broadcasting their values, and receive no more messages until $\text{GST} + \delta$ due to asynchrony ($\delta$ is the maximum message delay after GST).
Once GST is reached, all $t$ processes wake up and broadcast their value, for a total of $O(tnL) = O(n^2L)$ exchanged bits; this can happen before they have a chance to receive even a single message!
This attack can be considered a ``curse'' of GST: the \emph{adversarial shift} of correct processes in time creates a (seemingly unavoidable) situation where $\Omega(n^2)$ messages are sent at GST (which in this case include $L$ bits each, for a total of $\Omega(n^2L)$).
Figure~\ref{fig:adversarial_shift} illustrates the attack.

\begin{figure}[ht]
   \centering
   \includegraphics[scale=0.8]{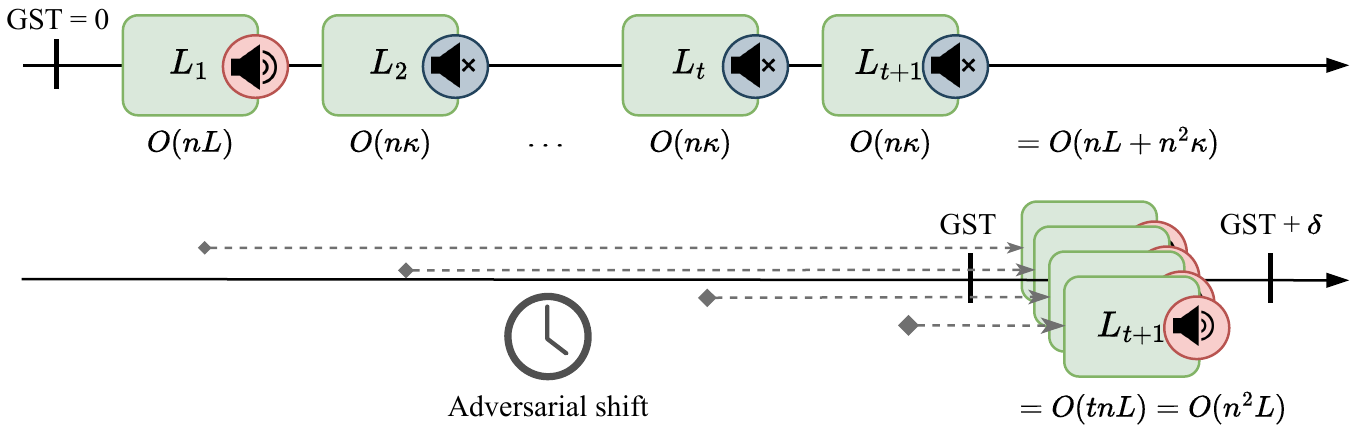}
   \caption{The \emph{adversarial shift} attack on $t+1$ leaders.
   The first line shows how leaders are optimistically ordered in time by the protocol to avoid redundant broadcasts (the blue speaker circle represents an \emph{avoided} redundant broadcast).
   The second line shows how leaders can slow down before GST and overlap at GST, making redundant broadcasts (seem) unavoidable.
   }
   \label{fig:adversarial_shift}
\end{figure}

\smallskip
\noindent \textbf{DARE: Disperse, Agree, REtrieve.}
In a nutshell, DARE follows three phases:
\begin{compactenum}
    \item \textbf{Dispersal}: Processes attempt to disperse their values and obtain a \emph{proof of dispersal} for any value. This proof guarantees that the value is both (1) valid, and (2) retrievable.
    
    \item \textbf{Agreement}: Processes propose a \emph{hash} of the value accompanied by its proof of dispersal to a Byzantine consensus algorithm for small $L$ (e.g., $O(\kappa)$).

    \item \textbf{Retrieval}: Using the decided hash, processes retrieve the corresponding value. The proof of dispersal ensures Retrieval will succeed and output a valid value.
\end{compactenum}
This architecture is inspired by randomized asynchronous Byzantine algorithms~\cite{abraham2019asymptotically,Lu2020} which work with \emph{expected} bit complexity (the worst-case is unbounded in asynchrony~\cite{Fischer1985}).
As these algorithms work in expectation, they can rely on randomness to retrieve a value ($\neq \bot$) that is valid after an expected constant number of tries.
However, in order to achieve the same effect (i.e., a constant number of retrievals) in the worst case in partial synchrony, \name must guarantee that the Retrieval protocol \emph{always} outputs a valid value ($\neq \bot$) \emph{a priori}, which shifts the difficulty of the problem almost entirely to the Dispersal phase.


\smallskip
\noindent \textbf{Dispersal.} To obtain a proof of dispersal, a natural solution is for the leader to broadcast the value $v$.
Correct processes check the validity of $v$ (i.e., if $\mathsf{valid}(v) = \mathit{true}$), store $v$ for the Retrieval protocol, and produce a partial signature attesting to these two facts.
The leader combines the partial signatures into a $(2t+1, n)$-threshold signature (the proof of dispersal), which is sufficient to prove that DARE's Retrieval protocol~\cite{das2021asynchronous} will output a valid value after the Agreement phase.

However, if leaders use best-effort broadcast~\cite{cachin_introduction} (i.e., simultaneously send the value to all other processes), they are still vulnerable to an \emph{adversarial shift} causing $O(n^2L)$ communication.
Instead, we do the following.
First, we use a \emph{view synchronizer}~\cite{Naor2020, Bravo2020, Bravo2022b} to group leaders into \emph{views} in a rotating sequence.
A view has $\sqrt{n}$ leaders and a sequence has $\sqrt{n}$ views.
Leaders of the current view can concurrently broadcast their values while messages of other views are ignored.
Second, instead of broadcasting the value simultaneously to all processes, a leader broadcasts the value to different subgroups of $\sqrt{n}$ processes in intervals of $\delta$ time (i.e., broadcast to the first subgroup, wait $\delta$ time, broadcast to the second subgroup, \ldots) until all processes have received the value.
Neither idea individually is enough to improve over the $O(n^2L)$ term.
However, when they are combined, it becomes possible to balance the communication cost of the synchronizer ($O(n^{2.5}\kappa)$ bits), the maximum cost of an \emph{adversarial shift} attack ($O(n^{1.5}L)$ bits), and the broadcast rate to achieve the improved $O(n^{1.5}L + n^{2.5}\kappa)$ bit complexity with asymptotically optimal $O(\delta n)$ latency as shown in Figure~\ref{fig:dare_overview}.

\begin{figure}[ht]
   \centering
   \includegraphics[scale=0.8]{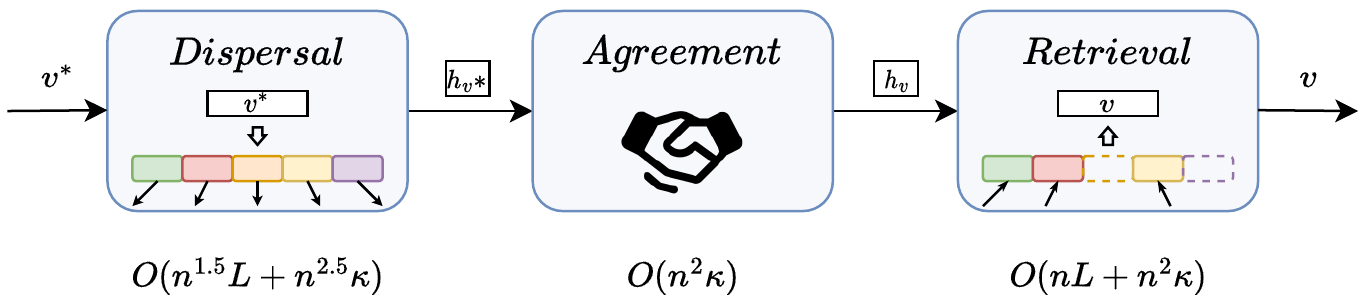}
   \caption{Overview of \name (Disperse, Agree, REtrieve).}
   \label{fig:dare_overview}
\end{figure}

\smallskip
\noindent \textbf{\darestark.}
As we explained, the main cost of the Dispersal phase is associated with obtaining a dispersal proof that a value is valid.
Specifically, it comes from the cost of having to send the entire value ($L$ bits) in a single message.

With Succinct Transparent ARguments of Knowledge (STARKs), we can entirely avoid sending the value in a single message.
STARKs allow a process to compute a proof ($O(\mathit{poly}(\kappa))$ bits) of a statement on some value without having to share that value.
As an example, a process $P_i$ can send $\langle h, \sigma_{\textsf{STARK}} \rangle$ to a process $P_j$, which can use $\sigma_{\textsf{STARK}}$ to verify the statement ``$\exists v: \mathsf{valid}(v) = \mathit{true} \land \mathsf{hash}(v) = h$'', all without $P_j$ ever receiving $v$.
As we detail in \Cref{section:darestark}, by carefully crafting a more complex statement, we can modify DARE's Dispersal and Retrieval phases to function with at most $O(\mathit{poly}(\kappa))$ bit-sized messages, obtaining \darestark.
This yields the overall near-optimal bit complexity of $O(nL + n^2\mathit{poly}(\kappa))$.
Currently, the main drawback of STARKs is their size and computation time in practice\footnote{The associated constants hidden by the ``big O'' notation result in computation in the order of seconds, proofs in the hundreds of KB, and memory usage several times greater \cite{winterfell}.}, which we hope will improve in the future.

\smallskip
\noindent \textbf{Roadmap.}
We discuss related work in \Cref{section:related_work}.
In \Cref{section:preliminaries}, we define the system model.
We give an overview of DARE in \Cref{section:general_framework}.
In \Cref{section:spread}, we detail our Dispersal protocol.
We go over \darestark in \Cref{section:darestark}.
Lastly, we conclude the paper in \Cref{section:conclusion}.
Detailed proofs are relegated to the optional appendix. 




\section{Related Work}\label{section:related_work}

We address the communication complexity of deterministic authenticated Byzantine consensus~\cite{Lamport1982,Cachin2001} in partially synchronous  distributed systems \cite{Dwork1988} for large inputs.
Here, we discuss existing results in closely related contexts, and provide a brief overview of techniques, tools and building blocks which are often employed to tackle Byzantine consensus.\footnote{We use ``consensus'' and ``agreement'' interchangeably.}

\smallskip
\noindent \textbf{Asynchrony}.
In the asynchronous setting, Byzantine agreement is commonly known as Multi-valued Validated Byzantine Agreement, or MVBA~\cite{Cachin2001}.
Due to the FLP impossibility result \cite{Fischer1985}, deterministic Byzantine agreement is unsolvable in asynchrony (which implies unbounded worst-case complexity).
Hence, asynchronous MVBA solutions focus on expected complexity.
This line of work was revitalized by HoneyBadgerBFT \cite{miller2016}, the first practical fully asynchronous MVBA implementation.
Like most other modern asynchronous MVBA protocols, it leverages randomization via a common coin \cite{MostefaouiMR15}, and it terminates in expected $O(\log n)$ time with an expected bit complexity of $O(n^2L + n^3\kappa\log n)$.
\cite{abraham2019asymptotically} improves this to $O(1)$ expected time and $O(n^2L + n^2\kappa)$ expected bits, which is asymptotically optimal with $L,\kappa \in O(1)$.
Their result is later extended by \cite{Lu2020} to large values, improving the complexity to $O(nL + n^2\kappa)$ expected bits.
This matches the best known lower bound \cite{Spiegelman2021, Abraham2019b, Dolev1985}, assuming $\kappa \in O(1)$.

\smallskip
\noindent \textbf{Extension protocols}
\cite{Nayak2020,Ganesh2016,Ganesh2017,Ganesh2021}.
An extension protocol optimizes for long inputs via a reduction to the same problem with small inputs (considered an oracle).
Using extension protocols, several state-of-the-art results were achieved in the authenticated and unauthenticated models, both in synchronous and fully asynchronous settings for Byzantine consensus, Byzantine broadcast and reliable broadcast \cite{Nayak2020}. 
Applying the extension protocol of \cite{Nayak2020} to \cite{Momose2021}, synchronous Byzantine agreement can be implemented with optimal resiliency ($t<n/2$) and a bit complexity of $O(nL + n^2\kappa)$.
Interestingly, it has been demonstrated that synchronous Byzantine agreement can be implemented with a bit complexity of $O(n (L+\mathit{poly}(\kappa)))$ using randomization \cite{Bhangale2022}. The Dolev-Reischuk bound \cite{Dolev1985} is not violated in this case since the implementation tolerates a negligible (with $\kappa$) probability of failure, whereas the bound holds for deterministic protocols.
In asynchrony, by applying the (asynchronous) extension protocol of \cite{Nayak2020} to \cite{abraham2019asymptotically}, the same asymptotic result as \cite{Lu2020} is achieved, solving asynchronous MVBA with an expected bit complexity of $O(nL + n^2\kappa)$.

Unconditionally secure Byzantine agreement with large inputs has been addressed by \cite{Chen2020, Chen2021} under synchrony and \cite{Li2021} under asynchrony, assuming a common coin (implementable via unconditionally-secure Asynchronous Verifiable Secret Sharing~\cite{Choudhury2023}).
Despite \cite{kaklamanis2022} utilizing erasure codes to alleviate leader bottleneck, and the theoretical construction of \cite{validity_paper} with exponential latency, there is, to the best of our knowledge, no viable extension protocol for Byzantine agreement in partial synchrony achieving results similar to ours ($o(n^2L)$).

\smallskip
\noindent \textbf{Error correction.}
Coding techniques, such as erasure codes \cite{blahut1983theory, Hendricks2007b, alhaddad2021succinct} or error-correction codes \cite{reed1960polynomial, BenOr1993}, appear in state-of-the-art implementations of various distributed tasks: Asynchronous Verifiable Secret Sharing (AVSS) against a computationally bounded \cite{das2021asynchronous, Yurek2022, Shoup2023} or unbounded \cite{Choudhury2023} adversary,
Random Beacon \cite{Das2022}, Atomic Broadcast in both the asynchronous \cite{Gagol2019, Keidar2021} and partially synchronous \cite{Camenisch2022} settings, Information-Theoretic (IT) Asynchronous State Machine Replication (SMR) \cite{Duan2022}, Gradecast in synchrony and Reliable Broadcast in asynchrony \cite{Abraham2022}, Asynchronous Distributed Key Generation (ADKG) \cite{das2021asynchronous, Das2022b}, Asynchronous Verifiable Information Dispersal (AVID) \cite{Alhaddad2022}, Byzantine Storage \cite{Dobre2019, Androulaki2014, Hendricks2007}, and MVBA \cite{Lu2020, Nayak2020}.
Coding techniques are often used to reduce the worst-case complexity by allowing a group of processes to balance and share the cost of sending a value to an individual (potentially faulty) node and are also used in combination with other techniques, such as commitment schemes \cite{catalano2013vector, Lu2020}.

\smallskip
\noindent {We now list several problems related to or used in solving Byzantine agreement.}

\smallskip
\noindent \textbf{Asynchronous Common Subset (ACS)}.
The goal in ACS \cite{BenOr1993, ben1994asynchronous, duan2023practical} (also known as Vector Consensus \cite{neves2005solving,Doudou1998,correia2006consensus}) is to agree on a subset of $n - t$ proposals. 
When considering a generalization of the validity property, this problem represents the strongest variant of consensus \cite{validity_paper}.
Atomic Broadcast can be trivially reduced to ACS \cite{Doudou1998, Cachin2001, miller2016}. 
There are well-known simple asynchronous constructions that allow for the reduction of ACS to either (1) Reliable Broadcast and Binary Byzantine Agreement \cite{ben1994asynchronous}, or (2) MVBA \cite{Cachin2001} in the authenticated setting, where the validation predicate requires the output to be a vector of signed inputs from at least $n-t$ parties.
The first reduction enables the implementation of ACS with a cubic bit complexity, using the broadcast of \cite{Abraham2022}. 
The second reduction could be improved further with a more efficient underlying MVBA protocol, such as \darestark.

\smallskip
\noindent \textbf{Asynchronous Verifiable Information Dispersal (AVID)}. 
AVID \cite{cachin2005} is a form of "retrievable" broadcast that allows the dissemination of a value while providing a cryptographic proof that it can be retrieved. 
This primitive can be implemented with a total dispersal cost of $O(L + n^2\kappa)$ bits exchanged and a retrieval cost of $O(L + n\kappa)$ per node, relying only on the existence of collision-resistant hash functions \cite{Alhaddad2022}. 
AVID is similar to our Dispersal and Retrieval phases, but has two key differences.
First, AVID's retrieval protocol only guarantees that a valid value will be retrieved if the original process dispersing the information was correct.
Second, it is a broadcast protocol, having stricter delivery guarantees for each process.
Concretely, if a correct process initiates the AVID protocol, it should eventually disperse its \emph{own} value.
In contrast, we only require that a correct process obtains a proof of dispersal for \emph{some} value.

\smallskip
\noindent \textbf{Provable Broadcast (PB) and Asynchronous Provable Dispersal Broadcast (APDB)}.
PB \cite{Abraham2019a} is a primitive used to acquire a succinct proof of external validity. 
It is similar to our Dispersal phase, including the algorithm itself, but without the provision of a proof of dispersal (i.e., retrievability, only offering proof of validity). 
The total bit complexity for $n$ PB-broadcasts from distinct processes amounts to $O(n^2L)$.
APDB \cite{Lu2020} represents an advancement of AVID, drawing inspiration from PB. 
It sacrifices PB's validity guarantees to incorporate AVID's dissemination and retrieval properties.
By leveraging the need to retrieve and validate a value a constant number of times in expectation, \cite{Lu2020} attains optimal $O(nL + n^2\kappa)$ expected complexity in asynchrony.
However, this approach falls short in the worst-case scenario of a partially synchronous solution, where $n$ reconstructions would cost $\Omega(n^2L)$.
    
\smallskip
\noindent \textbf{Asynchronous Data Dissemination (ADD)}. 
In ADD \cite{das2021asynchronous}, a subset of $t+1$ correct processes initially share a common $L$-sized value $v$, and the goal is to disseminate $v$ to all correct processes, despite the presence of up to $t$ Byzantine processes. The approach of \cite{das2021asynchronous} is information-theoretically secure, tolerates up to one-third malicious nodes and has a bit complexity of $O(nL + n^2\log n)$.
(In \name, we rely on ADD in a ``closed-box'' manner; see \Cref{section:general_framework}.)

\section{Preliminaries} \label{section:preliminaries}



\noindent\textbf{Processes.}
We consider a static set $\mathsf{Process} = \{P_1, P_2, ..., P_n\}$ of $n = 3t + 1$ processes, out of which (at most) $t > 0$ can be Byzantine and deviate arbitrarily from their prescribed protocol.
A Byzantine process is said to be \emph{faulty}; a non-faulty process is said to be \emph{correct}.
Processes communicate by exchanging messages over an authenticated point-to-point network.
Furthermore, the communication network is reliable: if a correct process sends a message to a correct process, the message is eventually received.
Processes have local hardware clocks.
Lastly, we assume that local steps of processes take zero time, as the time needed for local computation is negligible compared to the message delays.

\smallskip
\noindent\textbf{Partial synchrony.}
We consider the standard partially synchronous model~\cite{Dwork1988}.
For every execution, there exists an unknown Global Stabilization Time (GST) and a positive duration $\delta$ such that the message delays are bounded by $\delta$ after GST.
We assume that $\delta$ is known by processes.
All correct processes start executing their prescribed protocol by GST.
The hardware clocks of processes may drift arbitrarily before GST, but do not drift thereafter.
We underline that our algorithms require minimal changes to preserve their correctness even if $\delta$ is unknown (these modifications are specified in \Cref{appendix:dare_unknown}), although their complexity might be higher.

\smallskip
\noindent \textbf{Cryptographic primitives.}
Throughout the paper, $\mathsf{hash}(\cdot)$ denotes a collision-resistant hash function.
The codomain of the aforementioned $\mathsf{hash}(\cdot)$ function is denoted by $\mathsf{Hash\_Value}$.

Moreover, we assume a $(k, n)$-threshold signature scheme~\cite{Shoup00}, where $k = n - t = 2t + 1$.
In this scheme, each process holds a distinct private key, and there is a single public key.
Each process $P_i$ can use its private key to produce a partial signature for a message $m$ by invoking $\mathsf{share\_sign}_i(m)$.
A set of partial signatures $S$ for a message $m$ from $k$ distinct processes can be combined into a single threshold signature for $m$ by invoking $\mathsf{combine}(S)$; a threshold signature for $m$ proves that $k$ processes have (partially) signed $m$.
Furthermore, partial and threshold signatures can be verified: given a message $m$ and a signature $\Sigma_m$, $\mathsf{verify\_sig}(m, \Sigma_m)$ returns $\mathit{true}$ if and only if $\Sigma_m$ is a valid signature for $m$.
Where appropriate, the verifications are left implicit.
We denote by $\mathsf{P\_Signature}$ and $\mathsf{T\_Signature}$ the set of partial and threshold signatures, respectively.
The size of cryptographic objects (i.e., hashes, signatures) is denoted by $\kappa$; we assume that $\kappa > \log n$.\footnote{For $\kappa \leq \log n$, $t \in O(n)$ faulty processes would have computational power exponential in $\kappa$, breaking cryptographic hardness assumptions.}

\smallskip
\noindent \textbf{Reed-Solomon codes~\cite{reed1960polynomial}.}
Our algorithms rely on Reed-Solomon (RS) codes~\cite{reed1960}.
Concretely, \name utilizes (in a ``closed-box'' manner) an algorithm which internally builds upon error-correcting RS codes.
\darestark directly uses RS erasure codes (no error correction is required).

We use $\mathsf{encode}(\cdot)$ and $\mathsf{decode}(\cdot)$ to denote RS' encoding and decoding algorithms.
In a nutshell, $\mathsf{encode}(\cdot)$ takes a value $v$, chunks it into the coefficients of a polynomial of degree $t$ (the maximum number of faults), and outputs $n$ (the total number of processes) evaluations of the polynomial (RS symbols); $\mathsf{Symbol}$ denotes the set of RS symbols.
$\mathsf{decode}(\cdot)$ takes a set of $t + 1$ RS symbols $S$ and interpolates them into a polynomial of degree $t$, whose coefficients are concatenated and output.

\smallskip
\noindent \textbf{Complexity of Byzantine consensus.}
Let $\mathsf{Consensus}$ be a partially synchronous Byzantine consensus algorithm, and let $\mathcal{E}(\mathsf{Consensus})$ denote the set of all possible executions.
Let $\alpha \in \mathcal{E}(\mathsf{Consensus})$ be an execution, and $t_d(\alpha)$ be the first time by which all correct processes have decided in $\alpha$.
The bit complexity of $\alpha$ is the total number of bits sent by correct processes during the time period $[\text{GST}, \infty)$.
The latency of $\alpha$ is $\max(0, t_d(\alpha) - \text{GST})$.

The \emph{bit complexity} of $\mathsf{Consensus}$ is defined as
\begin{equation*}
\max_{\alpha \in \mathcal{E}(\mathsf{Consensus})}\bigg\{\text{bit complexity of } \alpha\bigg\}.
\end{equation*}

Similarly, the \emph{latency} of $\mathsf{Consensus}$ is defined as 
\begin{equation*}
\max_{\alpha \in \mathcal{E}(\mathsf{Consensus})}\bigg\{\text{latency of } \alpha\bigg\}.
\end{equation*}

\section{\name} \label{section:general_framework}

This section presents \name (Disperse, Agree, REtrieve), which is composed of three algorithms: 
\begin{compactenum}
    \item \spread, which disperses the proposals; 

    \item \q, which ensures agreement on the hash of a previously dispersed proposal; and

    \item \rec, which rebuilds the proposal corresponding to the agreed-upon hash.
\end{compactenum}

We start by introducing the aforementioned building blocks (\Cref{subsection:building_blocks}).
Then, we show how they are composed into \name (\Cref{subsection:algorithm}).
Finally, we prove the correctness and complexity of \name (\Cref{subsection:dare_proof}).



\subsection{Building Blocks: Overview} \label{subsection:building_blocks}

In this subsection, we formally define the three building blocks of \name.
Concretely, we define their interface and properties, as well as their complexity.

\subsubsection{\spread} \label{subsubsection:spread}

\noindent \textbf{Interface \& properties.}
\spread solves a problem similar to that of AVID~\cite{cachin2005}.
In a nutshell, each correct process aims to disperse its value to all correct processes: eventually, all correct processes acquire a proof that a value with a certain hash has been successfully dispersed.

Concretely, \spread exposes the following interface:
\begin{compactitem}
    \item \textbf{request} $\mathsf{disperse}(v \in \mathsf{Value})$: a process disperses a value $v$; each correct process invokes $\mathsf{disperse}(v)$ exactly once and only if $\mathsf{valid}(v) = \mathit{true}$.

    \item \textbf{indication} $\mathsf{acquire}(h \in \mathsf{Hash\_Value}, \Sigma_h \in \mathsf{T\_Signature})$: a process acquires a pair $(h, \Sigma_h)$.
\end{compactitem}
We say that a correct process \emph{obtains} a threshold signature (resp., a value) if and only if it stores the signature (resp., the value) in its local memory.
(Obtained values can later be retrieved by all correct processes using \rec; see \Cref{subsubsection:rec_overview} and \Cref{algorithm:root}.) 
\spread ensures the following:
\begin{compactitem}
    \item \emph{Integrity:} If a correct process acquires a hash-signature pair $(h, \Sigma_h)$, then $\mathsf{verify\_sig}(h, \Sigma_h) = \mathit{true}$.

    \item \emph{Termination:} Every correct process eventually acquires at least one hash-signature pair.

    \item \emph{Redundancy:} Let a correct process obtain a threshold signature $\Sigma_h$ such that $\mathsf{verify\_sig}(h, \Sigma_h) = \mathit{true}$, for some hash value $h$.
    Then, (at least) $t + 1$ correct processes have obtained a value $v$ such that (1) $\mathsf{hash}(v) = h$, and (2) $\mathsf{valid}(v) = \mathit{true}$.
    
    
\end{compactitem}
Note that it is not required for all correct processes to acquire the same hash value (nor the same threshold signature).
Moreover, the specification allows for multiple acquired pairs.

\smallskip
\noindent \textbf{Complexity.}
\spread exchanges $O(n^{1.5}L + n^{2.5}\kappa)$ bits after GST. 
Moreover, it terminates in $O(n)$ time after GST.

\smallskip
\noindent \textbf{Implementation.} The details on \spread's implementation are relegated to \Cref{section:spread}.






\subsubsection{\q} \label{subsubsection:quad_overview}

\noindent \textbf{Interface \& properties.}
\q is a Byzantine consensus algorithm.\footnote{Recall that the interface and properties of Byzantine consensus algorithms are introduced in \Cref{section:introduction}.}
In \q, processes propose and decide pairs $(h \in \mathsf{Hash\_Value}, \Sigma_h \in \mathsf{T\_Signature})$; moreover, $\mathsf{valid}(h, \Sigma_h) \equiv \mathsf{verify\_sig}(h, \Sigma_h)$.

\smallskip
\noindent \textbf{Complexity.} \q achieves $O(n^2\kappa)$ bit complexity and $O(n)$ latency.

\smallskip
\noindent \textbf{Implementation.}
We ``borrow'' the implementation from~\cite{CivitDGGGKV22}.
In brief, \q is a ``leader-based'' consensus algorithm whose computation unfolds in views.
Each view has a single leader, and it employs a ``leader-to-all, all-to-leader'' communication pattern.
\q's safety relies on standard techniques~\cite{Yin2019,Castro2002,Buchman2018,LewisPye}: (1) quorum intersection (safety within a view), and (2) ``locking'' mechanism (safety across multiple views).
As for liveness, \q guarantees termination once all correct processes are in the same view (for ``long enough'' time) with a correct leader.
(For full details on \q, see~\cite{CivitDGGGKV22}.)

\subsubsection{\rec} \label{subsubsection:rec_overview}

\noindent \textbf{Interface \& properties.}
In \rec, each correct process starts with either (1) some value, or (2) $\bot$.
Eventually, all correct processes output the same value.
Formally, \rec exposes the following interface:
\begin{compactitem}
    \item \textbf{request} $\mathsf{input}(v \in \mathsf{Value} \cup \{\bot\})$: a process inputs a value or $\bot$; each correct process invokes $\mathsf{input}(\cdot)$ exactly once.
    Moreover, the following is assumed:
    \begin{compactitem}
        \item No two correct processes invoke $\mathsf{input}(v_1 \in \mathsf{Value})$ and $\mathsf{input}(v_2 \in \mathsf{Value})$ with $v_1 \neq v_2$.
    
        \item At least $t + 1$ correct processes invoke $\mathsf{input}(v \in \mathsf{Value})$ (i.e., $v \neq \bot)$.
    \end{compactitem}

    \item \textbf{indication} $\mathsf{output}(v' \in \mathsf{Value})$: a process outputs a value $v'$.
\end{compactitem}
The following properties are ensured:
\begin{compactitem}
    \item \emph{Agreement:} No two correct processes output different values.

    \item \emph{Validity:} Let a correct process input a value $v$.
    No correct process outputs a value $v' \neq v$.


    \item \emph{Termination:} Every correct process eventually outputs a value.
\end{compactitem}

\smallskip
\noindent \textbf{Complexity.} \rec exchanges $O(nL + n^2\log n)$ bits after GST (and before every correct process outputs a value).
Moreover, \rec terminates in $O(1)$ time after GST.

\smallskip
\noindent \textbf{Implementation.}
\rec's implementation is ``borrowed'' from~\cite{das2021asynchronous}.
In summary, \rec relies on Reed-Solomon codes~\cite{reed1960polynomial} to encode the input value $v \neq \bot$ into $n$ symbols.
Each correct process $Q$ which inputs $v \neq \bot$ to \rec encodes $v$ into $n$ RS symbols $s_1, s_2, ..., s_n$.
$Q$ sends each RS symbol $s_i$ to the process $P_i$.
When $P_i$ receives $t + 1$ identical RS symbols $s_i$, $P_i$ is sure that $s_i$ is a ``correct'' symbol (i.e., it can be used to rebuild $v$) as it was computed by at least one correct process.
At this moment, $P_i$ broadcasts $s_i$.
Once each correct process $P$ receives $2t + 1$ (or more) RS symbols, $P$ tries to rebuild $v$ (with some error-correction).
(For full details on \rec, see~\cite{das2021asynchronous}.)


\subsection{Pseudocode} \label{subsection:algorithm}

\Cref{algorithm:root} gives \name's pseudocode.
We explain it from the perspective of a correct process $P_i$.
An execution of \name consists of three phases (each of which corresponds to one building block):
\begin{compactenum}

    \item \emph{Dispersal:}
    Process $P_i$ disperses its proposal $v_i$ using \spread (line~\ref{line:disseminate_value}).
    Eventually, $P_i$ acquires a hash-signature pair $(h_i, \Sigma_{i})$ (line~\ref{line:acquire}) due to the termination property of \spread.

    \item \emph{Agreement:}
    Process $P_i$ proposes the previously acquired hash-signature pair $(h_i, \Sigma_i)$ to \q (line~\ref{line:quad_propose}).
    As \q satisfies termination and agreement, all correct processes eventually agree on a hash-signature pair $(h, \Sigma_h)$ (line~\ref{line:quad_decide}).

    \item \emph{Retrieval:}
    Once process $P_i$ decides $(h, \Sigma_h)$ from \q, it checks whether it has previously obtained a value $v$ with $\mathsf{hash}(v) = h$ (line~\ref{line:check_obtained}).
    If it has, $P_i$ inputs $v$ to \rec; otherwise, $P_i$ inputs $\bot$ (line~\ref{line:vec_rec_input}).
    The required preconditions for \rec are met:
    \begin{compactitem}
        \item No two correct processes input different non-$\bot$ values to \rec as $\mathsf{hash}(\cdot)$ is collision-resistant.

        \item At least $(t + 1)$ correct processes input a value (and not $\bot$) to \rec.
        Indeed, as $\Sigma_h$ is obtained by a correct process, $t + 1$ correct processes have obtained a value $v \neq \bot$ with $\mathsf{hash}(v) = h$ (due to redundancy of \spread), and all of these processes input $v$.
    \end{compactitem}
    Therefore, all correct processes (including $P_i$) eventually output the same value $v'$ from \rec (due to the termination property of \rec; line~\ref{line:vec_rec_output}), which represents the decision of \name (line~\ref{line:decide}).
    Note that $v' = v \neq \bot$ due to the validity of \rec.
\end{compactenum}

\begin{algorithm} [t]
\caption{\name: Pseudocode (for process $P_i$)}
\label{algorithm:root}
\begin{algorithmic} [1]
\footnotesize

\State \textbf{Uses:}

\smallskip
\State \hskip2em \LineComment{bits: $O(n^{1.5}L + n^{2.5}\kappa)$, latency: $O(n)$ (see \Cref{section:spread})}
\State \hskip2em \spread, \textbf{instance} $\mathit{disperser}$

\smallskip
\State \hskip2em \LineComment{bits: $O(n^2\kappa)$, latency: $O(n)$ (see~\cite{CivitDGGGKV22})}
\State \hskip2em \q, \textbf{instance} $\mathit{agreement}$

\smallskip
\State \hskip2em \LineComment{bits: $O(nL + n^2 \log n)$, latency: $O(1)$ (see~\cite{das2021asynchronous})}
\State \hskip2em \rec, \textbf{instance} $\mathit{retriever}$



\smallskip
\State \textbf{upon} $\mathsf{propose}(v_i \in \mathsf{Value})$:
\State \hskip2em \textbf{invoke} $\mathit{disperser}.\mathsf{disperse}(v_i)$ \label{line:disseminate_value}


\smallskip
\State \textbf{upon} $\mathit{disperser}.\mathsf{acquire}(h_i \in \mathsf{Hash\_Value}, \Sigma_i \in \mathsf{T\_Signature})$: \label{line:acquire}
\State \hskip2em \textbf{invoke} $\mathit{agreement}.\mathsf{propose}(h_i, \Sigma_i)$ \label{line:quad_propose}

\smallskip
\State \textbf{upon} $\mathit{agreement}.\mathsf{decide}(h \in \mathsf{Hash\_Value}, \Sigma_h \in \mathsf{T\_Signature})$: \label{line:quad_decide} \BlueComment{$P_i$ obtains $\Sigma_h$}
\State \hskip2em $v \gets $ an obtained value such that $\mathsf{hash}(v) = h$ (if such a value was not obtained, $v = \bot$) \label{line:check_obtained}
\State \hskip2em \textbf{invoke} $\mathit{retriever}.\mathsf{input}(v)$ \label{line:vec_rec_input}

\smallskip
\State \textbf{upon} $\mathit{retriever}.\mathsf{output}(\mathsf{Value} \text{ } v')$: \label{line:vec_rec_output}
\State \hskip2em \textbf{trigger} $\mathsf{decide}(v')$ \label{line:decide}
\end{algorithmic}
\end{algorithm}

\subsection{Proof of Correctness \& Complexity} \label{subsection:dare_proof}

We start by proving the correctness of \name.

\begin{theorem} \label{lemma:root_helper}
\name is correct.
\end{theorem}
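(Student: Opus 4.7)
The plan is to show that \name satisfies the three consensus properties (Agreement, Termination, External Validity) by composing the guarantees of its three building blocks \spread, \q, and \rec, following the control flow of \Cref{algorithm:root}.

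\textbf{Agreement.} I would argue that any two correct processes that decide in \name must decide the same value. Since a decision at line~\ref{line:decide} is triggered by an output of \rec at line~\ref{line:vec_rec_output}, the agreement property of \rec directly implies that no two correct processes decide different values. (The earlier stage, \q, is not strictly required for this specific property, but it plays its role in ensuring the \rec inputs are consistent.)

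\textbf{Termination.} The plan is a straightforward chain through the three phases. First, since every correct process invokes \q{}'s propose (line~\ref{line:quad_propose}) only after receiving an acquire from \spread, I need to invoke \spread's termination to conclude that every correct process eventually reaches line~\ref{line:quad_propose}. Then, by \q's termination, every correct process decides at line~\ref{line:quad_decide} some common pair $(h, \Sigma_h)$ (using \q's agreement), and thus invokes \rec's input at line~\ref{line:vec_rec_input}. The main obstacle here is verifying \rec's preconditions, which the text already sketches: (i) no two correct processes input different non-$\bot$ values, because all correct processes that input a non-$\bot$ value input a value whose hash equals the common $h$, and collision-resistance of $\mathsf{hash}(\cdot)$ makes two distinct such values negligibly unlikely; and (ii) at least $t+1$ correct processes input a non-$\bot$ value, because the decided $\Sigma_h$ was (by \q's validity, $\mathsf{valid}(h,\Sigma_h) \equiv \mathsf{verify\_sig}(h, \Sigma_h)$) obtained by some correct process, so \spread's redundancy guarantees that at least $t+1$ correct processes obtained a value $v$ with $\mathsf{hash}(v)=h$ and $\mathsf{valid}(v) = \mathit{true}$, and each such process inputs $v$ at line~\ref{line:check_obtained}. \rec's termination then yields an output at line~\ref{line:vec_rec_output} and thus a decision at line~\ref{line:decide}.

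\textbf{External Validity.} Suppose a correct process decides $v'$ at line~\ref{line:decide}. Then $v'$ was output by \rec. By the same application of \spread's redundancy as above, at least $t+1$ correct processes input the same value $v$ to \rec with $\mathsf{hash}(v) = h$ and $\mathsf{valid}(v) = \mathit{true}$. By \rec's validity, no correct process outputs any value different from $v$, hence $v' = v$, and therefore $\mathsf{valid}(v') = \mathit{true}$.

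The step I expect to be delicate is the argument for precondition (i) of \rec: it hinges on the \q instance deciding a valid hash-signature pair, and on collision-resistance being used to rule out two distinct preimages of $h$ both being obtained and input by correct processes. Once that is carefully unpacked, the rest of the proof is a pure composition of the already-proved properties of the three building blocks.
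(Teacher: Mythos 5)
Your proposal is correct and follows essentially the same route as the paper's proof: compose \spread's termination, \q's agreement and termination, \spread's redundancy (applied to the decided pair, whose signature verifies by \q's validity), collision-resistance of $\mathsf{hash}(\cdot)$, and then \rec's agreement, validity, and termination. The only difference is presentational — you organize the argument explicitly around the three consensus properties, while the paper gives a single sequential composition — but the substance is identical.
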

\begin{proof}
Every correct process starts the dispersal of its proposal (line~\ref{line:disseminate_value}).
Due to the termination property of \spread, every correct process eventually acquires a hash-signature pair (line~\ref{line:acquire}).
Hence, every correct process eventually proposes to \q (line~\ref{line:quad_propose}), which implies that every correct process eventually decides the same hash-signature pair $(h, \Sigma_h)$ from \q (line~\ref{line:quad_decide}) due to the agreement and termination properties of \q.

As $(h, \Sigma_h)$ is decided by all correct processes, at least $t + 1$ correct processes $P_i$ have obtained a value $v$ such that (1) $\mathsf{hash}(v) = h$, and (2) $\mathsf{valid}(v) = \mathit{true}$ (due to the redundancy property of \spread).
Therefore, all of these correct processes input $v$ to \rec (line~\ref{line:vec_rec_input}).
Moreover, no correct process inputs a different value (as $\mathsf{hash}(\cdot)$ is collision-resistant).
Thus, the conditions required by \rec are met, which implies that all correct processes eventually output the same valid value (namely, $v$) from \rec (line~\ref{line:vec_rec_output}), and decide it (line~\ref{line:decide}). 
\end{proof}



Next, we prove the complexity of \name.

\begin{theorem}
\name achieves $O(n^{1.5}L + n^{2.5}\kappa)$ bit complexity and $O(n)$ latency.   
\end{theorem}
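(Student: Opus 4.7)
The plan is to obtain the bit complexity and latency of \name by simply summing the contributions of its three constituent phases, since \Cref{algorithm:root} is a straight-line composition: each phase is triggered only once per process, and upon termination of one phase, the next is invoked locally with no extra inter-process communication beyond what is accounted for inside each building block.

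For the bit complexity after GST, I would add up the three terms previously stated in \Cref{subsection:building_blocks}: the \spread instance contributes $O(n^{1.5}L + n^{2.5}\kappa)$ bits, the \q instance contributes $O(n^2\kappa)$ bits, and the \rec instance contributes $O(nL + n^2 \log n)$ bits. I would then observe that the $O(n^2\kappa)$ term of \q is absorbed by the $O(n^{2.5}\kappa)$ term of \spread, and that the $O(nL)$ term of \rec is absorbed by $O(n^{1.5}L)$. For the $O(n^2 \log n)$ term, I would invoke the standing assumption $\kappa > \log n$ (stated in \Cref{section:preliminaries}), which implies $n^2 \log n = O(n^{2.5}\kappa)$. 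The total is therefore $O(n^{1.5}L + n^{2.5}\kappa)$.

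For latency, I would argue that the phases run sequentially from the perspective of each correct process: by \Cref{lemma:root_helper} every correct process eventually reaches line~\ref{line:vec_rec_output}, and each ``upon'' handler fires immediately after the corresponding building block terminates (local steps take zero time by the model in \Cref{section:preliminaries}). Hence the latency is at most the sum $O(n) + O(n) + O(1) = O(n)$ time after GST.

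The only subtlety worth stating explicitly is that the complexity bounds of the building blocks are quoted with respect to GST of the combined execution, so one must verify that no building block is ``delayed'' arbitrarily long after GST by an earlier phase. This follows because \spread terminates $O(n)$ time after GST, which triggers the proposal to \q; \q in turn is guaranteed by its specification to decide within $O(n)$ time after every correct process has proposed \emph{or} GST (whichever is later); and \rec terminates $O(1)$ time after the last correct input, which occurs $O(n)$ time after GST. Thus all building-block complexities compose additively as claimed, yielding the announced bounds.
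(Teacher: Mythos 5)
Your proposal is correct and follows essentially the same route as the paper: the complexity of \name is bounded by summing the stated costs of \spread, \q, and \rec, using $\kappa > \log n$ to absorb the $n^2\log n$ term, and the latency is the sum of the sequential phases' latencies. Your added remark that the building blocks' bounds compose additively after GST (no phase is delayed beyond its own bound) is a reasonable elaboration of what the paper leaves implicit in the phrase ``sequential composition''.
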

\begin{proof}
As \name is a sequential composition of its building blocks, its complexity is the sum of the complexities of (1) \spread, (2) \q, and (3) \rec.
Hence, the bit complexity is
\begin{equation*}
    \underbrace{O(n^{1.5}L + n^{2.5}\kappa)}_\text{\spread} + \underbrace{O(n^2\kappa)}_\text{\q} + \underbrace{O(nL + n^2\log n)}_\text{\rec} = O(n^{1.5}L + n^{2.5}\kappa).
\end{equation*}
Similarly, the latency is $O(n)$.
\end{proof}
\section{\spread: Implementation \& Analysis} \label{section:spread}

This section focuses on \spread.
Namely, we present its implementation (\Cref{subsection:spread_algorithm}), and (informally) analyze its correctness and complexity (\Cref{subsection:informal_analysis}).
Formal proofs are can be found in \Cref{appendix:spread_proof}.

\subsection{Implementation} \label{subsection:spread_algorithm}

\spread's pseudocode is given in \Cref{algorithm:spread}.
In essence, each execution unfolds in \emph{views}, where each view has $X$ \emph{leaders} ($0 < X \leq n$ is a generic parameter); the set of all views is denoted by $\mathsf{View}$.
Given a view $V$, $\mathsf{leaders}(V)$ denotes the $X$-sized set of leaders of the view $V$.
In each view, a leader disperses its value to $Y$-sized groups of processes ($0 < Y \leq n$ is a generic parameter) at a time (line~\ref{line:multicast_dispersal}), with a $\delta$-waiting step in between (line~\ref{line:wait_spread}).
Before we thoroughly explain the pseudocode, we introduce \sync, \spread's view synchronization~\cite{CivitDGGGKV22,Yin2019,LewisPye} algorithm.

\smallskip
\noindent \textbf{\sync.}
Its responsibility is to bring all correct processes to the same view with a correct leader for (at least) $\Delta = \delta\frac{n}{Y} + 3\delta$ time.
Precisely, \sync exposes the following interface: 
\begin{compactitem}
    \item \textbf{indication} $\mathsf{advance}(V \in \mathsf{View})$: a process enters a new view $V$.
\end{compactitem}
\sync guarantees \emph{eventual synchronization}: there exists a time $\tau_{\mathit{sync}} \geq \text{GST}$ (\emph{synchronization time}) such that (1) all correct processes are in the same view $V_{\mathit{sync}}$ (\emph{synchronization view}) from time $\tau_{\mathit{sync}}$ to (at least) time $\tau_{\mathit{sync}} + \Delta$, and (2) $V_{\mathit{sync}}$ has a correct leader.
We denote by $V_{\mathit{sync}}^*$ the smallest synchronization view, whereas $\tau_{\mathit{sync}}^*$ denotes the first synchronization time.
Similarly, $V_{\mathit{max}}$ denotes the greatest view entered by a correct process before GST.\footnote{When such a view does not exist, $V_{\mathit{max}}=0$}

The implementation of \sync (see \Cref{subsection:sync_pseudocode}) is highly inspired by \textsc{RareSync}, a view synchronization algorithm introduced in~\cite{CivitDGGGKV22}.
In essence, when a process enters a new view, it stays in the view for $O(\Delta) = O(\frac{n}{Y})$ time.
Once it wishes to proceed to the next view, the process engages in an ``all-to-all'' communication step (which exchanges $O(n^2\kappa)$ bits); this step signals the end of the current view, and the beginning of the next one.
Throughout views, leaders are rotated in a round-robin manner: each process is a leader for exactly one view in any sequence of $\frac{n}{X}$ consecutive views.
As $O(\frac{n}{X})$ views (after GST) are required to reach a correct leader, \sync exchanges $O(\frac{n}{X}) \cdot O(n^2\kappa) = O(\frac{n^3\kappa}{X})$ bits (before synchronization, i.e., before $\tau_{\mathit{sync}}^* + \Delta$); since each view takes $O(\frac{n}{Y})$ time, synchronization is ensured within $O(\frac{n}{X}) \cdot O(\frac{n}{Y}) = O(\frac{n^2}{XY})$ time.

\spread relies on the following properties of \sync (along with eventual synchronization):
\begin{compactitem}
    \item \emph{Monotonicity:} Any correct process enters monotonically increasing views.

    \item \emph{Stabilization:} Any correct process enters a view $V \geq V_{\mathit{max}}$ by time $\text{GST} + 3\delta$.

    \item \emph{Limited entrance:} In the time period $[\text{GST}, \text{GST} + 3\delta)$, any correct process enters $O(1)$ views.

    \item \emph{Overlapping:} For any view $V > V_{\mathit{max}}$, all correct processes overlap in $V$ for (at least) $\Delta$ time.
    

    \item \emph{Limited synchronization view:} $V_{\mathit{sync}}^* - V_{\mathit{max}} = O(\frac{n}{X})$.

    \item \emph{Complexity:} \sync exchanges $O(\frac{n^3\kappa}{X})$ bits during the time period $[\text{GST}, \tau_{\mathit{sync}}^* + \Delta]$, and it synchronizes all correct processes within $O(\frac{n^2}{XY})$ time after GST ($\tau_{\mathit{sync}}^* + \Delta - \text{GST} = O(\frac{n^2}{XY})$).
\end{compactitem}
The aforementioned properties of \sync are formally proven in \Cref{subsection:sync_pseudocode}.

\smallskip
\noindent \textbf{Algorithm description.}
Correct processes transit through views based on \sync's indications (line~\ref{line:sync_indication}): when a correct process receives $\mathsf{advance}(V)$ from \sync, it stops participating in the previous view and starts participating in $V$.

Once a correct leader $P_l$ enters a view $V$, it disperses its proposal via \textsc{dispersal} messages.
As already mentioned, $P_l$ sends its proposal to $Y$-sized groups of processes (line~\ref{line:multicast_dispersal}) with a $\delta$-waiting step in between (line~\ref{line:wait_spread}).
When a correct (non-leader) process $P_i$ (which participates in the view $V$) receives a \textsc{dispersal} message from $P_l$, $P_i$ checks whether the dispersed value is valid (line~\ref{line:receive_dispersal}).
If it is, $P_i$ partially signs the hash of the value, and sends it back to $P_l$ (line~\ref{line:send_ack}).
When $P_l$ collects $2t + 1$ \textsc{ack} messages, it (1) creates a threshold signature for the hash of its proposal (line~\ref{line:combine_ack}), and (2) broadcasts the signature (along with the hash of its proposal) to all processes via a \textsc{confirm} message (line~\ref{line:broadcast_confirm}).
Finally, when $P_l$ (or any other correct process) receives a \textsc{confirm} message (line~\ref{line:receive_confirm}), it (1) acquires the received hash-signature pair (line~\ref{line:acquire_spread}), (2) disseminates the pair to ``help'' the other processes (line~\ref{line:echo_confirm}), and (3) stops executing \spread (line~\ref{line:stop_executing_spread}).

\begin{algorithm} [h]
\caption{\spread: Pseudocode (for process $P_i$)}
\label{algorithm:spread}
\begin{algorithmic} [1]
\footnotesize
\State \textbf{Uses:}
\State \hskip2em \sync, \textbf{instance} $\mathit{sync}$ \label{line:view_sync} \BlueComment{ensures a $\Delta = \delta\frac{n}{Y} + 3\delta$ overlap in a view with a correct leader}

\smallskip
\State \textbf{upon} $\mathsf{init}$:
\State \hskip2em $\mathsf{Value}$ $\mathit{proposal}_i \gets \bot$
\State \hskip2em $\mathsf{Integer}$ $\mathit{received\_acks}_i \gets 0$
\State \hskip2em $\mathsf{Map}(\mathsf{Hash\_Value} \to \mathsf{Value})$ $\mathit{obtained\_values}_i \gets \text{empty}$

\smallskip
\State \textbf{upon} $\mathsf{disperse}(\mathsf{Value} \text{ } v_i)$:
\State \hskip2em $\mathit{proposal}_i \gets v_i$
\State \hskip2em \textbf{start} $\mathit{sync}$

\smallskip
\State \textbf{upon} $\mathit{sync}.\mathsf{advance}(\mathsf{View} \text{ } V)$: \BlueComment{$P_i$ stops participating in the previous view} \label{line:sync_indication}
\State \hskip2em \textcolor{blue}{\(\triangleright\) First part of the view}
\State \hskip2em \textbf{if} $P_i \in \mathsf{leaders}(V)$:
\State \hskip4em \textbf{for} $\mathsf{Integer} \text{ } k \gets 1 \text{ to } \frac{n}{Y}$: \label{line:loop_dispersal}
\State \hskip6em \textbf{send} $\langle \textsc{dispersal}, \mathit{proposal}_i \rangle$ to $P_{(k - 1)Y + 1}, P_{(k - 1)Y + 2}, ..., P_{kY}$ \label{line:multicast_dispersal}
\State \hskip6em \textbf{wait} $\delta$ time \label{line:wait_spread}

\smallskip
\State \hskip2em every process:
\State \hskip4em \textbf{upon} reception of $\langle \textsc{dispersal}, \mathsf{Value} \text{ } v_j \rangle$ from process $P_j \in \mathsf{leaders}(V)$ and $\mathsf{valid}(v_j) = \mathit{true}$: \label{line:receive_dispersal}
\State \hskip6em $\mathsf{Hash\_Value}$ $h \gets \mathsf{hash}(v_j)$
\State \hskip6em $\mathit{obtained\_values}_i[h] \gets v_j$ \label{line:obtain}
\label{line:obtained_value}
\State \hskip6em \textbf{send} $\langle \textsc{ack}, \mathsf{share\_sign}_i(h) \rangle$ to $P_j$ \label{line:send_ack} \label{line:share_sign_hash}
 
\medskip
\State \hskip2em \textcolor{blue}{\(\triangleright\) Second part of the view}
\State \hskip2em \textbf{if} $P_i \in \mathsf{leaders}(V)$:
\State \hskip4em \textbf{upon} exists $\mathsf{Hash\_Value}$ $h$ such that $\langle \textsc{ack}, h, \cdot \rangle$ has been received from $2t + 1$ processes: \label{line:receive_ack}
\State \hskip6em $\mathsf{T\_Signature}$ $\Sigma_h \gets \mathsf{combine}\big( \{ \mathsf{P\_Signature} \text{ } \mathit{sig} \,|\, \mathit{sig} \text{ is received in the \textsc{ack} messages}\} \big)$ \label{line:combine_ack}
\State \hskip6em \textbf{broadcast} $\langle \textsc{confirm}, h, \Sigma_h \rangle$ \label{line:broadcast_confirm}

\smallskip
\State \hskip2em every process:
\State \hskip4em \textbf{upon} reception of $\langle \textsc{confirm}, \mathsf{Hash\_Value} \text{ } h, \mathsf{T\_Signature} \text{ } \Sigma_h \rangle$ and $\mathsf{verify\_sig}(h, \Sigma_h) = \mathit{true}$: \label{line:receive_confirm}
\State \hskip6em \textbf{trigger} $\mathsf{acquire}(h, \Sigma_h)$ \label{line:acquire_spread}
\State \hskip6em \textbf{broadcast} $\langle \textsc{confirm}, h, \Sigma_h \rangle$ \label{line:echo_confirm}
\State \hskip6em \textbf{stop} executing \spread (and \sync) \label{line:stop_executing_spread}
\end{algorithmic}
\end{algorithm}

\subsection{Analysis} \label{subsection:informal_analysis}

\noindent \textbf{Correctness.}
Once all correct processes synchronize in the view $V_{\mathit{sync}}^*$ (the smallest synchronization view), all correct processes acquire a hash-signature pair.
Indeed, $\Delta = \delta\frac{n}{Y} + 3\delta$ time is sufficient for a correct leader $P_l \in \mathsf{leaders}(V_{\mathit{sync}}^*)$ to (1) disperse its proposal $\mathit{proposal}_l$ to all processes (line~\ref{line:multicast_dispersal}), (2) collect $2t + 1$ partial signatures for $h = \mathsf{hash}(\mathit{proposal}_l)$ (line~\ref{line:receive_ack}), and (3) disseminate a threshold signature for $h$ (line~\ref{line:broadcast_confirm}).
When a correct process receives the aforementioned threshold signature (line~\ref{line:receive_confirm}), it acquires the hash-signature pair (line~\ref{line:acquire_spread}) and stops executing \spread (line~\ref{line:stop_executing_spread}).

\smallskip
\noindent \textbf{Complexity.}
\spread terminates once all correct processes are synchronized in a view with a correct leader.
The synchronization is ensured in $O(\frac{n^2}{XY})$ time after GST (as $\tau_{\mathit{sync}}^* + \Delta - \text{GST} = O(\frac{n^2}{XY})$).
Hence, \spread terminates in $O(\frac{n^2}{XY})$ time after GST.

Let us analyze the number of bits \spread exchanges.
Any execution of \spread can be separated into two post-GST periods: (1) \emph{unsynchronized}, from GST until $\text{GST} + 3\delta$, and (2) \emph{synchronized}, from $\text{GST} + 3\delta$ until $\tau_{\mathit{sync}}^* + \Delta$.
First, we study the number of bits correct processes send via \textsc{dispersal}, \textsc{ack} and \textsc{confirm} message in the aforementioned periods:
\begin{compactitem}
    \item Unsynchronized period:
    Due to the $\delta$-waiting step (line~\ref{line:wait_spread}), each correct process sends \textsc{dispersal} messages (line~\ref{line:multicast_dispersal}) to (at most) $3 = O(1)$ $Y$-sized groups.
    Hence, each correct process sends $O(1) \cdot O(Y) \cdot L = O(YL)$ bits through \textsc{dispersal} messages.

    Due to the limited entrance property of \sync, each correct process enters $O(1)$ views during the unsynchronized period.
    In each view, each correct process sends (at most) $O(X)$ \textsc{ack} messages (one to each leader; line~\ref{line:send_ack}) and $O(n)$ \textsc{confirm} messages (line~\ref{line:broadcast_confirm}).
    As each \textsc{ack} and \textsc{confirm} message carries $\kappa$ bits, all correct processes send
    \begin{equation*}
        \begin{split}
        &n \cdot \big( \underbrace{O(YL)}_\text{\textsc{dispersal}} + \underbrace{O(X\kappa)}_{\text{\textsc{ack}}} + \underbrace{O(n\kappa)}_\text{\textsc{confirm}} \big) \\&= O(nYL + n^2\kappa) \text{ bits via \textsc{dispersal}, \textsc{ack} and \textsc{confirm} messages.}
        \end{split}
    \end{equation*}


    \item Synchronized period:
    Recall that all correct processes acquire a hash-signature pair (and stop executing \spread) by time $\tau_{\mathit{sync}}^* + \Delta$, and they do so in the view $V_{\mathit{sync}}^*$.
    As correct processes enter monotonically increasing views, no correct process enters a view greater than $V_{\mathit{sync}}^*$.
    
    By the stabilization property of \sync, each correct process enters a view $V \geq V_{\mathit{max}}$ by time $\text{GST} + 3\delta$.
    Moreover, until $\tau_{\mathit{sync}}^* + \Delta$, each correct process enters (at most) $O(\frac{n}{X})$ views (due to the limited synchronization view and monotonicity properties of \sync).
    Importantly, no correct leader exists in any view $V$ with $V_{\mathit{max}} < V < V_{\mathit{sync}}^*$; otherwise, $V = V_{\mathit{sync}}^*$ as processes overlap for $\Delta$ time in $V$ (due to the overlapping property of \sync).
    Hence, for each view $V$ with $V_{\mathit{max}} < V < V_{\mathit{sync}}^*$, all correct processes send $O(nX\kappa)$ bits (all through \textsc{ack} messages; line~\ref{line:send_ack}).
    In $V_{\mathit{max}}$ and $V_{\mathit{sync}}^*$, all correct processes send (1) $2 \cdot O(XnL)$ bits through \textsc{dispersal} messages (line~\ref{line:multicast_dispersal}), (2) $2 \cdot O(nX\kappa)$ bits through \textsc{ack} messages (line~\ref{line:send_ack}), and (3) $2 \cdot O(Xn\kappa)$ bits through \textsc{confirm} messages (line~\ref{line:broadcast_confirm}).
    Therefore, all correct processes send
    \begin{equation*}
        \begin{split}
        &\underbrace{O(\frac{n}{X})}_\text{$V_{\mathit{sync}}^* - V_{\mathit{max}}$} \cdot \underbrace{O(nX\kappa)}_\text{\textsc{ack}} + \underbrace{O(XnL)}_\text{\textsc{dispersal} in $V_{\mathit{max}}$ and $V_{\mathit{sync}}^*$} + \underbrace{O(nX\kappa)}_\text{\textsc{ack} in $V_{\mathit{max}}$ and $V_{\mathit{sync}}^*$} + \underbrace{O(Xn\kappa)}_\text{\textsc{confirm} in $V_{\mathit{max}}$ and $V_{\mathit{sync}}^*$} \\&= O(nXL + n^2\kappa) \text{ bits via \textsc{dispersal}, \textsc{ack} and \textsc{confirm} messages.}
        \end{split}
    \end{equation*}
\end{compactitem}
We cannot neglect the complexity of \sync, which exchanges $O(\frac{n^3\kappa}{X})$ bits during the time period $[\text{GST}, \tau_{\mathit{sync}}^* + \Delta]$.
Hence, the total number of bits \spread exchanges is
\begin{equation*}
    \underbrace{O(nYL + n^2\kappa)}_\text{unsynchronized period} + \underbrace{O(nXL + n^2\kappa)}_\text{synchronized period} + \underbrace{O(\frac{n^3\kappa}{X})}_\text{\sync} = O(nYL + nXL + \frac{n^3\kappa}{X}).
\end{equation*}
With $X = Y = \sqrt{n}$, \spread terminates in optimal $O(n)$ time, and exchanges $O(n^{1.5}L + n^{2.5}\kappa)$ bits.
Our analysis is illustrated in \Cref{fig:analysis_summary}.

\begin{figure}[ht]
   \centering
   \includegraphics[scale=0.9]{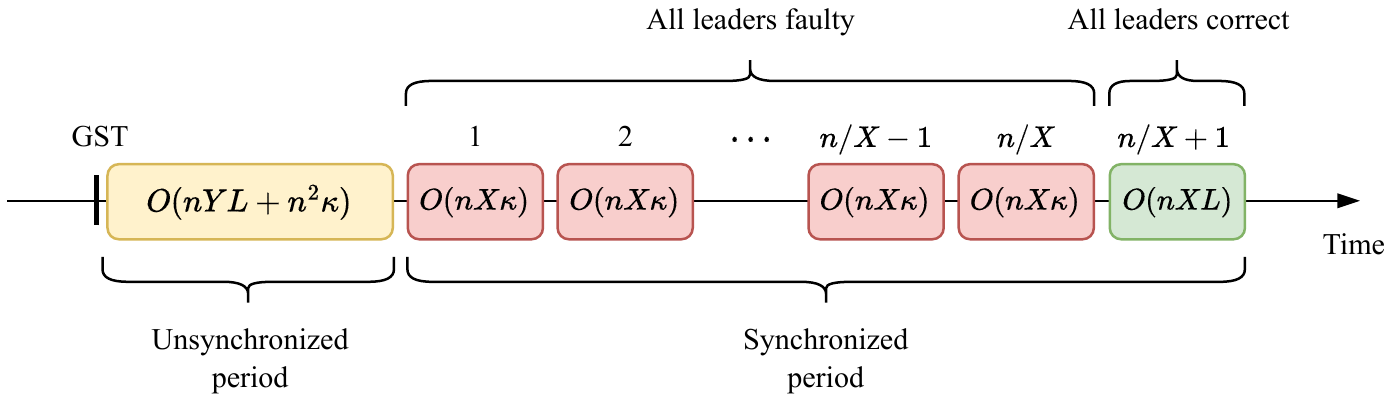}
   \caption{Illustration of \spread's bit complexity.}
   \label{fig:analysis_summary}
\end{figure}

\section{\darestark}
\label{section:darestark}

In this section, we present \darestark, a variant of \name which relies on STARK proofs.
Importantly, \darestark achieves $O(nL + n^2\mathit{poly}(\kappa))$ bit complexity, nearly tight to the $\Omega(nL + n^2)$ lower bound, while preserving optimal $O(n)$ latency.

First, we revisit \spread, pinpointing its complexity on proving RS encoding (\Cref{subsection:revisit}).
We then provide an overview on STARKs, a cryptographic primitive providing succinct proofs of knowledge (\Cref{subsection:starks}).
We finally present \darestark, which uses STARKs for provable RS encoding, thus improving on \name's complexity (\Cref{subsection:darestark_pseudocode}).

\subsection{Revisiting \name: What Causes \spread's Complexity?} \label{subsection:revisit}

Recall that \spread exchanges $O(n^{1.5}L + n^{2.5}\kappa)$ bits.
This is due to a fundamental requirement of \rec: at least $t + 1$ correct processes must have obtained the value $v$ by the time \q decides $h = \mathsf{hash}(v)$.
\rec leverages this requirement to prove the correct encoding of RS symbols.
In brief (as explained in \Cref{subsubsection:rec_overview}): (1) every correct process $P$ that obtained $v \neq \bot$ encodes it in $n$ RS symbols $s_1, \ldots, s_n$; (2) $P$ sends each $s_i$ to $P_i$; (3) upon receiving $t + 1$ identical copies of $s_i$, $P_i$ can trust $s_i$ to be the $i$-th RS symbol for $v$ (note that $s_i$ can be trusted only because it was produced by at least one correct process - nothing else proves $s_i$'s relationship to $v$!); (4) every correct process $P_i$ disseminates $s_i$, enabling the reconstruction of $v$ by means of error-correcting decoding.
In summary, \name bottlenecks on \spread, and \spread's complexity is owed to the need to prove the correct encoding of RS symbols in \rec.
Succinct arguments of knowledge (such as STARKs), however, allow to publicly prove the relationship between an RS symbol and the value it encodes, eliminating the need to disperse the entire value to $t + 1$ correct processes -- a dispersal of provably correct RS symbols suffices.
\darestark builds upon this idea.




\subsection{STARKs} \label{subsection:starks}

First introduced in \cite{ben2018scalable}, STARKs are succinct, universal, transparent arguments of knowledge.
For any function $f$ (computable in polynomial time) and any (polynomially-sized) $y$, a STARK can be used to prove the knowledge of some $x$ such that $f(x) = y$.
Remarkably, the size of a STARK proof is $O(\mathit{poly}(\kappa))$.
At a very high level, a STARK proof is produced as follows: (1) the computation of $f(x)$ is unfolded on an execution trace; (2) the execution trace is (RS) over-sampled for error amplification; (3) the correct computation of $f$ is expressed as a set of algebraic constraints over the trace symbols; (4) the trace symbols are organized in a Merkle tree~\cite{merkle-tree-crypto87}; (5) the tree's root is used as a seed to pseudo-randomly sample the trace symbols. The resulting collection of Merkle proofs proves that, for some known (but not revealed) $x$, $f(x) \neq y$ only with cryptographically low probability (negligible in $\kappa$).
STARKs are non-interactive, require no trusted setup (they are transparent), and their security reduces to that of cryptographic hashes in the Random Oracle Model (ROM) \cite{bellare1993}.

\subsection{Implementation} \label{subsection:darestark_pseudocode}

\noindent \textbf{Provably correct encoding.}
At its core, \darestark uses STARKs to attest the correct RS encoding of values.
For every $i \in [1, n]$, we define $\mathsf{shard}_i(\cdot)$ by
\begin{equation}
\label{equation:starkshard}
    \mathsf{shard}_i(v \in \mathsf{Value}) = \begin{cases}
        \big( \mathsf{hash}(v), \mathsf{encode}_i(v) \big), &\text{if and only if}\; \mathsf{valid}(v) = \mathit{true} \\
        \bot, &\text{otherwise,}
    \end{cases}
\end{equation}
where $\mathsf{encode}_i(v)$ represents the $i$-th RS symbol obtained from $\mathsf{encode}(v)$ (see \Cref{section:preliminaries}).
We use $\mathsf{proof}_i(v)$ to denote the STARK proving the correct computation of $\mathsf{shard}_i(v)$.
The design and security of \darestark rests on the following theorem.

\begin{theorem}
\label{theorem:starkshard}
    Let $i_1, \ldots, i_{t + 1}$ be distinct indices in $[1, n]$. Let $h$ be a hash, let $s_1, \ldots, s_{t + 1}$ be RS symbols, let $\mathit{stark}_1, \ldots, \mathit{stark}_{t + 1}$ be STARK proofs such that, for every $k \in [1, t + 1]$, $\mathit{stark}_k$ proves knowledge of some (undisclosed) $v_k$ such that $\mathsf{shard}_{i_k}(v_k) = (h, s_k)$. We have that
    \begin{equation*}
        v = \mathsf{decode}(\{s_1, \ldots, s_k\})
    \end{equation*}
    satisfies $\mathsf{valid}(v) = \mathit{true}$ and $\mathsf{hash}(v) = h$.
\end{theorem}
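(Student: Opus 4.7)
The plan is to combine the knowledge soundness of STARKs, the collision resistance of $\mathsf{hash}(\cdot)$, and the standard interpolation property of Reed--Solomon codes. First, I would invoke knowledge soundness on each $\mathit{stark}_k$ (in the ROM, as STARKs are arguments of knowledge) to obtain, for every $k \in [1, t+1]$, a value $v_k$ such that $\mathsf{shard}_{i_k}(v_k) = (h, s_k)$. Unfolding equation~(\ref{equation:starkshard}), this immediately yields three consequences per $k$: (i) $\mathsf{valid}(v_k) = \mathit{true}$, (ii) $\mathsf{hash}(v_k) = h$, and (iii) the $i_k$-th RS symbol of $v_k$ equals $s_k$, i.e., $\mathsf{encode}_{i_k}(v_k) = s_k$.

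Next, I would collapse the $t+1$ extracted preimages into a single one. Since $\mathsf{hash}(v_1) = \cdots = \mathsf{hash}(v_{t+1}) = h$ and $\mathsf{hash}(\cdot)$ is collision-resistant, any two distinct $v_k, v_{k'}$ would constitute a hash collision, which occurs only with negligible probability in $\kappa$. Thus, except with negligible probability, there exists a common $v^\star$ with $v_1 = \cdots = v_{t+1} = v^\star$, and by construction $\mathsf{valid}(v^\star) = \mathit{true}$ and $\mathsf{hash}(v^\star) = h$.

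Finally, I would invoke the RS interpolation property from Section~\ref{section:preliminaries}: $\mathsf{encode}(v^\star)$ consists of $n$ evaluations of a polynomial of degree $t$, and the $s_k = \mathsf{encode}_{i_k}(v^\star)$ are precisely $t+1$ such evaluations at the pairwise distinct positions $i_1, \ldots, i_{t+1}$. Since $t+1$ evaluations uniquely determine a degree-$t$ polynomial, $\mathsf{decode}(\{s_1, \ldots, s_{t+1}\})$ recovers exactly $v^\star$. Hence $v = v^\star$ inherits both $\mathsf{valid}(v) = \mathit{true}$ and $\mathsf{hash}(v) = h$, closing the argument.

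The main obstacle is purely in the first step: one must be careful that the STARK proofs are used in a way that justifies extraction. STARKs only give knowledge soundness (a PPT extractor in the ROM), so the statement is implicitly ``except with negligible failure probability in $\kappa$''. Once extraction and collision resistance are granted, the remainder is a syntactic chain through the definition of $\mathsf{shard}_i$ and the standard correctness of Reed--Solomon decoding, requiring no calculation beyond counting distinct evaluation points.
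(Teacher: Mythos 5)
Your proposal is correct and follows essentially the same route as the paper's proof: extract the preimages $v_k$ from the STARK proofs and \cref{equation:starkshard}, collapse them to a single value via collision resistance of $\mathsf{hash}(\cdot)$, and conclude via the correctness of RS encoding/decoding. You are merely more explicit than the paper about knowledge soundness (extraction in the ROM, with the implicit negligible-in-$\kappa$ failure probability) and about why $t+1$ symbols at distinct indices suffice for decoding, which is a faithful elaboration rather than a different argument.
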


    \begin{proof}
        For all $k$, by the correctness of $\mathit{stark}_k$ and \cref{equation:starkshard}, we have that (1) $h = \mathsf{hash}(v_k)$, (2) $s_k = \mathsf{encode}_{i_k}(v_k)$, and (3) $\mathsf{valid}(v_k) = \mathit{true}$.
        By the collision-resistance of $\mathsf{hash}(\cdot)$, for all $k, k'$, we have $v_k = v_k'$.
        By the definition of $\mathsf{encode}(\cdot)$ and $\mathsf{decode}(\cdot)$, we then have
        \begin{equation*}
            v = \mathsf{decode}(\{s_1, \ldots, s_k\}) = v_1 = \ldots = v_{t + 1},
        \end{equation*}
        which implies that $\mathsf{valid}(v) = \mathit{true}$ and $\mathsf{hash}(v) = h$.
    \end{proof}

\begin{algorithm} [ht]
\caption{\darestark: Pseudocode (for process $P_i$)}
\label{algorithm:darestark}
\begin{algorithmic} [1]
\footnotesize
\State \textbf{Uses:}
\State \hskip2em \q, \textbf{instance} $\mathit{agreement}$

\smallskip
\State \textbf{upon} $\mathsf{init}$:
\State \hskip2em $\mathsf{Hash\_Value}$ $\mathit{proposed\_hash}_i \gets \bot$
\State \hskip2em $\mathsf{Map}\big( \mathsf{Hash\_Value} \to (\mathsf{Symbol}, \mathsf{STARK}) \big)$ $\mathit{proposal\_shards}_i \gets \text{empty}$
\State \hskip2em $\mathsf{Map}\big( \mathsf{Hash\_Value} \to \mathsf{Set}(\mathsf{Symbol}) \big)$ $\mathit{decision\_symbols}_i \gets \text{empty}$
\State \hskip2em $\mathsf{Bool}$ $\mathit{decided}_i \gets \mathit{false}$

\smallskip
\State \textcolor{blue}{\(\triangleright\) Dispersal}
\State \textbf{upon} $\mathsf{propose}(\mathsf{Value} \text{ } v_i)$: \label{line:stark_propose}
\State \hskip2em $\mathit{proposed\_hash}_i \gets \mathsf{hash}(v_i)$
\State \hskip2em \textbf{for} $\mathsf{Integer} \text{ } k \gets 1 \text{ to } n$:
\State \hskip4em $(\mathsf{Hash} \text{ } h_k, \mathsf{Symbol} \text{ } s_k) \gets \mathsf{shard}_k(v_i)$ \label{line:compute_shard}
\State \hskip4em $\mathsf{STARK}$ $\mathit{stark}_k \gets \mathsf{proof}_k(v_i)$ \label{line:compute_stark}
\State \hskip4em \textbf{send} $\langle \textsc{dispersal}, h_k, s_k, \mathit{stark}_k \rangle$ to $P_k$ \label{line:stark_send_dispersal}

\smallskip
\State \textbf{upon} reception of $\langle \textsc{dispersal}, \mathsf{Hash\_Value} \text{ } h, \mathsf{Symbol} \text{ } s, \mathsf{STARK} \text{ } \mathit{stark} \rangle$ from process $P_j$ and $\mathit{stark}$ proves $\mathsf{shard}_i(?) = (h, s)$: \label{line:stark_receive_shard}
\State \hskip2em $\mathit{proposal\_shards}_i[h] \gets (s, \mathit{stark})$ \label{line:stark_store_shard}
\State \hskip2em \textbf{send} $\langle \textsc{ack}, \mathsf{share\_sign}_i(h) \rangle$ to $P_j$
\label{line:stark_send_ack}

\smallskip
\State \textcolor{blue}{\(\triangleright\) Agreement}
\State \textbf{upon} $\langle \textsc{ack}, \mathsf{P\_Signature} \text{ } \mathit{sig} \rangle$ is received from $2t + 1$ processes: \label{line:stark_received_acks}
\State \hskip2em $\mathsf{T\_Signature}$ $\Sigma \gets \mathsf{combine}\big( \{\mathit{sig} \,|\, \mathit{sig} \text{ is received in the \textsc{ack} messages}\} \big)$ \label{line:combine_ack_2}
\State \hskip2em \textbf{invoke} $\mathit{agreement}.\mathsf{propose}(\mathit{proposed\_hash}_i, \Sigma)$ \label{line:stark_agreement_propose}

\smallskip
\State \textbf{upon} $\mathit{agreement}.\mathsf{decide}(\mathsf{Hash\_Value} \text{ } h, \mathsf{T\_Signature} \text{ } \Sigma)$ with $\mathit{proposal\_shards}_i[h] \neq \bot$: \label{line:stark_q_decide}
\State \hskip2em $(\mathsf{Symbol} \text{ } s, \mathsf{STARK} \text{ } \mathit{stark}) \gets \mathit{proposal\_shards}_i[h]$
\State \hskip2em \textbf{broadcast} $\langle \textsc{retrieve}, h, s, \mathit{stark} \rangle$ \label{line:stark_broadcast_retrieval}

\smallskip
\State \textcolor{blue}{\(\triangleright\) Retrieval}
\State \textbf{upon} reception of $\langle \textsc{retrieve}, \mathsf{Hash\_Value} \text{ } h, \mathsf{Symbol} \text{ } s, \mathsf{STARK} \text{ } \mathit{stark} \rangle$ from process $P_j$ and $\mathit{stark}$ proves $\mathsf{shard}_j(?) = (h, s)$: \label{line:stark_check_sample}
\State \hskip2em $\mathit{decision\_symbols}_i[h] \gets \mathit{decision\_symbols}_i[h] \cup \{s\}$

\smallskip
\State \textbf{upon} (1) exists $\mathsf{Hash\_Value}$ $h$ such that $\mathit{decision\_symbols}_i[h]$ has $t + 1$ elements, and (2) $\mathit{decided}_i = \mathit{false}$: \label{line:stark_decide_rule}
\State \hskip2em $\mathit{decided}_i \gets \mathit{true}$
\State \hskip2em \textbf{trigger} $\mathsf{decide}\big( \mathsf{decode}(\mathit{decision\_symbols}_i[h]) \big)$ \label{line:stark_decide}

\end{algorithmic}
\end{algorithm}

\smallskip
\noindent \textbf{Algorithm description.}
The pseudocode of \darestark is presented in \Cref{algorithm:darestark} from the perspective of a correct process $P_i$.
Similarly to \name, \darestark unfolds in three phases:
\begin{compactenum}
    \item \emph{Dispersal:} Upon proposing a value $v_i$ (line~\ref{line:stark_propose}), $P_i$ sends (line~\ref{line:stark_send_dispersal}) to each process $P_k$ (1) $(h_k, s_k) = \mathsf{shard}_k(v_i)$ (computed at line~\ref{line:compute_shard}), and (2) $\mathit{stark}_k = \mathsf{proof}_k(v_i)$ (computed at line~\ref{line:compute_stark}).
    In doing so (see \Cref{theorem:starkshard}), $P_i$ proves to $P_k$ that $h_k = \mathsf{hash}(v_i)$ is the hash of a valid proposal, whose $k$-th RS symbol is $\mathsf{encode}_k(v_i)$.
    $P_k$ checks $\mathit{stark}_k$ against $(h_k, s_k)$ (line~\ref{line:stark_receive_shard}), stores $(s_k, \mathit{stark}_k)$ (line~\ref{line:stark_store_shard}), and sends a partial signature for $h_k$ back to $P_i$ (line~\ref{line:stark_send_ack}).

    \item \emph{Agreement:} Having collected a threshold signature $\Sigma$ for $\mathsf{hash}(v_i)$ (line~\ref{line:combine_ack_2}), $P_i$ proposes $(\mathsf{hash}(v_i), \Sigma)$ to \q (line~\ref{line:stark_agreement_propose}).

    \item \emph{Retrieval:}
    Upon deciding a hash $h$ from \q (line~\ref{line:stark_q_decide}), $P_i$ broadcasts (if available) the $i$-th RS symbol for $h$, along with the relevant proof (line~\ref{line:stark_broadcast_retrieval}).
    Upon receiving $t + 1$ symbols $S$ for the same hash (line~\ref{line:stark_decide_rule}), $P_i$ decides $\mathsf{decode}(S)$ (line~\ref{line:stark_decide}).
\end{compactenum}

\smallskip
\noindent \textbf{Analysis.}
Upon proposing a value $v_i$ (line~\ref{line:stark_propose}), a correct process $P_i$ sends $\mathsf{shard}_k(v_i)$ and $\mathsf{proof}_k(v_i)$ to each process $P_k$ (line~\ref{line:stark_send_dispersal}).
Checking $\mathsf{proof}_k(v_i)$ against $\mathsf{shard}_k(v_i)$ (line~\ref{line:stark_receive_shard}), $P_k$ confirms having received the $k$-th RS symbol for $v_i$ (note that this does not require the transmission of $v_i$, just $\mathsf{hash}(v_i)$).
As $2t + 1$ processes are correct, $P_i$ is guaranteed to eventually gather a $(2t + 1)$-threshold signature $\Sigma$ for $\mathsf{hash}(v_i)$ (line~\ref{line:combine_ack_2}). 
Upon doing so, $P_i$ proposes $(\mathsf{hash}(v_i), \Sigma)$ to \q (line~\ref{line:stark_agreement_propose}).
Since every correct process eventually proposes a value to \q, every correct process eventually decides some hash $h$ from \q (line~\ref{line:stark_q_decide}).
Because $2t + 1$ processes signed $h$, at least $t + 1$ correct processes (without loss of generality, $P_1, \ldots, P_{t + 1}$) received a correctly encoded RS-symbol for $h$.
More precisely, for every $k \in [1, t + 1]$, $P_k$ received and stored the $k$-th RS symbol encoded from the pre-image $v$ of $h$.
Upon deciding from \q, each process $P_k$ broadcasts its RS symbol, along with the relevant proof (line~\ref{line:stark_broadcast_retrieval}).
Because at most $t$ processes are faulty, no correct process receives $t + 1$ RS symbols pertaining to a hash other than $h$.
As $P_1, \ldots, P_{t + 1}$ all broadcast their symbols and proofs, eventually every correct process collects $t + 1$ (provably correct) RS symbols $S$ pertaining to $h$ (line~\ref{line:stark_decide_rule}), and decides $\mathsf{decode}(S)$ (line~\ref{line:stark_decide}).
By \Cref{theorem:starkshard}, every correct process eventually decides the same valid value $v$ (with $h = \mathsf{hash}(v)$).

Concerning bit complexity, throughout an execution of \darestark, a correct process engages once in \q (which exchanges $O(n^2\kappa)$ bits in total) and sends: (1) $n$ \textsc{dispersal} messages, each of size $O(\frac{L}{n} + \mathit{poly}(\kappa))$, (2) $n$ \textsc{ack} messages, each of size $O(\kappa)$, and (3) $n$ \textsc{retrieve} messages, each of size $O(\frac{L}{n} + \mathit{poly}(\kappa))$. 
Therefore, the bit complexity of \darestark is $O(nL + n^2\mathit{poly}(\kappa))$.
As for the latency, it is $O(n)$ (due to the linear latency of \q).
\section{Concluding Remarks} \label{section:conclusion}

This paper introduces \name (Disperse, Agree, REtrieve), the first partially synchronous Byzantine agreement algorithm on values of $L$ bits with better than $O(n^2L)$ bit complexity and sub-exponential latency.
\name achieves $O(n^{1.5}L + n^{2.5}\kappa)$ bit complexity ($\kappa$ is the security parameter) and optimal $O(n)$ latency, which is an effective $\sqrt{n}$ factor bit-improvement for $L \geq n\kappa$ (typical in practice).
\name achieves its complexity in two steps.
First, \name decomposes problem of agreeing on large values ($L$ bits) into three sub-problems: (1) value dispersal, (2) validated agreement on small values ($O(\kappa)$), and (3) value retrieval.
(\name effectively acts as an extension protocol for Byzantine agreement.)
Second, \name's novel dispersal algorithm solves the main challenge, value dispersal, using only $O(n^{1.5}L)$ bits and linear latency.

Moreover, we prove that the lower bound of $\Omega(nL + n^2)$ is near-tight by matching it near-optimally with \darestark, a modified version of \name using STARK proofs that reaches $O(nL + n^2\mathit{poly}(\kappa))$ bits and maintains optimal $O(n)$ latency.
We hope \darestark motivates research into more efficient STARK schemes in the future, which currently have large hidden constants affecting their practical use.



\bibstyle{plainurl}
\bibliography{references}

\appendix
\section{\spread: Complete Pseudocode \& Proof}
\label{appendix:spread_proof}

In this section, we give a complete pseudocode of \spread, and formally prove its correctness and complexity.
First, we focus on \sync: we present \sync's implementation, as well as its proof of correctness and complexity (\Cref{subsection:sync_pseudocode}).
Then, we prove the correctness and complexity of \spread (\Cref{subsection:spread_proof}).
Throughout the entire section, $X = Y = \sqrt{n}$.
For simplicity, we assume that $\sqrt{n}$ is an integer.

\subsection{\sync: Implementation \& Proof} \label{subsection:sync_pseudocode}

\noindent \textbf{Algorithm description.}
The pseudocode of \sync is given in \Cref{algorithm:sync}, and it highly resembles \textsc{RareSync}~\cite{CivitDGGGKV22}.
Each process $P_i$ has an access to two timers: (1) $\mathit{view\_timer}_i$, and (2) $\mathit{dissemination\_timer}_i$.
A timer exposes the following interface:
\begin{compactitem}
    \item $\mathsf{measure}(\mathsf{Time} \text{ } x)$: After exactly $x$ time as measured by the local clock, the timer expires (i.e., an expiration event is received by the host).
    As local clocks can drift before GST, timers might not be precise before GST: $x$ time as measured by the local clock may not amount to $x$ real time.

    \item $\mathsf{cancel}()$:
    All previously invoked $\mathsf{measure}(\cdot)$ methods (on that timer) are cancelled.
    Namely, all pending expiration events (associated with that timer) are removed from the event queue.
\end{compactitem}

We now explain how \sync works, and we do so from the perspective of a correct process $P_i$.
When $P_i$ starts executing \sync (line~\ref{line:start_synchronizer}), it starts measuring $\mathit{view\_duration} = \Delta + 2\delta$ time using $\mathit{view\_timer}_i$ (line~\ref{line:measure_first}) and enters the first view (line~\ref{line:enter_first_view}).

Once $\mathit{view\_timer}_i$ expires (line~\ref{line:view_expires}), which signals that $P_i$'s current view has finished, $P_i$ notifies every process of this via a \textsc{view-completed} message (line~\ref{line:broadcast_view_completed}).
When $P_i$ learns that $2t + 1$ processes have completed some view $V \geq \mathit{view}_i$ (line~\ref{line:receive_view_completed}) or any process started some view $V' > \mathit{view}_i$ (line~\ref{line:receive_enter_view}), $P_i$ prepares to enter a new view (either $V + 1$ or $V'$).
Namely, $P_i$ (1) cancels $\mathit{view\_timer}_i$ (line~\ref{line:cancel_view_1} or line~\ref{line:cancel_view_2}), (2) cancels $\mathit{dissemination\_timer}_i$ (line~\ref{line:cancel_dissemination_timer_1} or line~\ref{line:cancel_dissemination_timer_2}), and (3) starts measuring $\delta$ time using $\mathit{dissemination\_timer}_i$ (line~\ref{line:measure_dissemination_timer_1} or line~\ref{line:measure_dissemination_timer_2}).
Importantly, $P_i$ measures $\delta$ time (using $\mathit{dissemination\_timer}_i$) before entering a new view in order to ensure that $P_i$ enters only $O(1)$ views during the time period $[\text{GST}, \text{GST} + 3\delta)$.
Finally, once $\mathit{dissemination\_timer}_i$ expires (line~\ref{line:dissemination_timer_expire}), $P_i$ enters a new view (line~\ref{line:enter_v}).
 
\begin{algorithm} [t]
\caption{\sync: Pseudocode (for process $P_i$)}
\label{algorithm:sync}
\begin{algorithmic} [1]
\footnotesize

\State \textbf{Uses:}
\State \hskip2em $\mathsf{Timer}$ $\mathit{view\_timer}_i$
\State \hskip2em $\mathsf{Timer}$ $\mathit{dissemination\_timer}_i$

\smallskip
\State \textbf{Functions:}
\State \hskip2em $\mathsf{leaders}(\mathsf{View} \text{ } V) \equiv \{P_{\big( (V \mathsf{mod} \sqrt{n}) - 1\big)\sqrt{n} + 1}, P_{\big( (V \mathsf{mod} \sqrt{n}) - 1\big)\sqrt{n} + 2}, ..., P_{\big( (V \mathsf{mod} \sqrt{n}) - 1\big)\sqrt{n} + \sqrt{n}}\}$ \label{line:round_robin}

\smallskip
\State \textbf{Constants:}
\State \hskip2em $\mathsf{Time}$ $\mathit{view\_duration} = \Delta + 2\delta = \delta\sqrt{n} + 3\delta + 2\delta$

\smallskip
\State \textbf{Variables:}
\State \hskip2em $\mathsf{View}$ $\mathit{view}_i \gets 1$
\State \hskip2em $\mathsf{T\_Signature}$ $\mathit{view\_sig}_i \gets \bot$

\smallskip
\State \textbf{upon} $\mathsf{init}$: \BlueComment{start of the algorithm} \label{line:start_synchronizer}
\State \hskip2em $\mathit{view\_timer}_i.\mathsf{measure}(\mathit{view\_duration})$ \label{line:measure_first}
\State \hskip2em \textbf{trigger} $\mathsf{advance}(1)$ \label{line:enter_first_view}

\smallskip
\State \textbf{upon} $\mathit{view\_timer}_i$ \textbf{expires:} \label{line:view_expires}
\State \hskip2em \textbf{broadcast} $\langle \textsc{view-completed}, \mathit{view}_i, \mathsf{share\_sign}_i(\mathit{view}_i) \rangle$\label{line:broadcast_view_completed}

\smallskip
\State \textbf{upon} exists $\mathsf{View}$ $V \geq \mathit{view}_i$ with $2t + 1$ $\langle \textsc{view-completed}, V, \mathsf{P\_Signature} \text{ } \mathit{sig} \rangle$ received messages: \label{line:receive_view_completed}
\State \hskip2em $\mathit{view\_sig}_i \gets \mathsf{combine}\big( \{ \mathit{sig} \,|\, \mathit{sig} \text{ is received in the } \textsc{view-completed} \text{ messages}  \} \big)$
\State \hskip2em $\mathit{view}_i \gets V + 1$ \label{line:update_view_view_completed}
\State \hskip2em $\mathit{view\_timer}_i.\mathsf{cancel()}$ \label{line:cancel_view_1}
\State \hskip2em $\mathit{dissemination\_timer}_i.\mathsf{cancel()}$ \label{line:cancel_dissemination_timer_1}
\State \hskip2em $\mathit{dissemination\_timer}_i.\mathsf{measure}(\delta)$ \label{line:measure_dissemination_timer_1}

\smallskip
\State \textbf{upon} reception of $\langle \textsc{enter-view}, \mathsf{View} \text{ } V, \mathsf{T\_Signature} \text{ } \mathit{sig} \rangle$ with $V > \mathit{view}_i$: \label{line:receive_enter_view}
\State \hskip2em $\mathit{view\_sig}_i \gets \mathit{sig}$
\State \hskip2em $\mathit{view}_i \gets V$ \label{line:update_view_enter_view}
\State \hskip2em $\mathit{view\_timer}_i.\mathsf{cancel()}$ \label{line:cancel_view_2}
\State \hskip2em $\mathit{dissemination\_timer}_i.\mathsf{cancel()}$ \label{line:cancel_dissemination_timer_2}
\State \hskip2em $\mathit{dissemination\_timer}_i.\mathsf{measure}(\delta)$ \label{line:measure_dissemination_timer_2}

\smallskip
\State \textbf{upon} $\mathit{dissemination\_timer}_i$ \textbf{expires:} \label{line:dissemination_timer_expire}
\State \hskip2em \textbf{broadcast} $\langle \textsc{enter-view}, \mathit{view}_i, \mathit{view\_sig}_i \rangle$ \label{line:broadcast_enter_epoch}
\State \hskip2em $\mathit{view\_timer}_i.\mathsf{measure}(\mathit{view\_duration})$ \label{line:measure_second}
\State \hskip2em \textbf{trigger} $\mathsf{advance}(\mathit{view}_i)$ \label{line:enter_v}

\end{algorithmic}
\end{algorithm}

\smallskip
\noindent \textbf{Proof of correctness \& complexity.}
We prove that \sync satisfies eventual synchronization and all the properties on which \spread relies (recall \Cref{subsection:spread_algorithm}).

Given a view $V$, $\tau_V$ denotes the first time a correct process enters $V$.
First, we prove that, if any view $V > 1$ is entered by a correct process, then view $V - 1$ was previously entered by a correct process.

\begin{lemma} \label{lemma:entering}
Consider any view $V > 1$ which is entered by a correct process (line~\ref{line:enter_v}).
View $V - 1$ is entered by a correct process by time $\tau_V$.
\end{lemma}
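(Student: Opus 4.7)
\textit{Proof proposal.} The plan is to trace backward through the conditions under which a correct process can enter view $V$ and show that each path forces a correct process to have already entered $V-1$. A correct process $P_i$ enters $V$ only via line~\ref{line:enter_v}, which fires upon expiration of $\mathit{dissemination\_timer}_i$ after $\mathit{view}_i$ has been set to $V$. For $V > 1$, inspection of the pseudocode shows that $\mathit{view}_i$ becomes $V$ exclusively through line~\ref{line:update_view_view_completed} (triggered by receiving $2t+1$ \textsc{view-completed} messages for view $V-1$) or line~\ref{line:update_view_enter_view} (triggered by receiving an \textsc{enter-view} message carrying a valid threshold signature on view $V-1$). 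By unforgeability of the $(2t+1, n)$-threshold scheme, in either case $2t+1$ distinct processes have produced partial signatures on the message ``$V-1$ completed'', so at least $t+1$ correct processes have executed the broadcast at line~\ref{line:broadcast_view_completed} with $\mathit{view}_j = V-1$ at the moment of that broadcast.

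The next step is to argue that any such correct $P_j$ must have previously entered $V-1$. Line~\ref{line:broadcast_view_completed} fires only on expiration of $\mathit{view\_timer}_j$, and I would establish the invariant ``whenever $\mathit{view\_timer}_j$ is armed, it was armed with $\mathit{view}_j$ equal to the current value of $\mathit{view}_j$'' by a straightforward induction over the local events of $P_j$. The invariant holds because every update to $\mathit{view}_j$ at line~\ref{line:update_view_view_completed} or line~\ref{line:update_view_enter_view} is immediately followed by $\mathit{view\_timer}_j.\mathsf{cancel}()$ at line~\ref{line:cancel_view_1} or line~\ref{line:cancel_view_2}, and the timer is only (re)armed at line~\ref{line:measure_first} during init (when $\mathit{view}_j = 1$) or at line~\ref{line:measure_second} within the same atomic block that fires $\mathsf{advance}(\mathit{view}_j)$. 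It follows that $P_j$ entered $V-1$ either at line~\ref{line:enter_first_view} (if $V-1 = 1$) or at line~\ref{line:enter_v} (if $V-1 \geq 2$) strictly before broadcasting \textsc{view-completed} for $V-1$.

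Finally, $P_i$ must have received and processed the relevant \textsc{view-completed} or \textsc{enter-view} messages before triggering line~\ref{line:enter_v} at time $\tau_V$. Since those messages were themselves sent only after the $t+1$ correct signers had entered $V-1$, at least one correct process entered $V-1$ at some time strictly earlier than $\tau_V$, proving the claim. The hard part, I expect, will be cleanly formalizing the timer-state invariant: the pseudocode implicitly relies on $\mathsf{cancel}$ atomically discarding pending expirations and on local steps being instantaneous (explicitly assumed in the system model), so the induction must carefully rule out stale expirations left over from previously-cancelled arms. Once that invariant is in place, the rest of the argument reduces to direct inspection of the two update rules combined with standard threshold-signature unforgeability.
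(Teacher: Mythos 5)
Your proposal is correct and follows essentially the same route as the paper's proof: trace the entry into $V$ back to the $2t+1$ partial signatures on $V-1$ (directly via \textsc{view-completed} messages or via the threshold signature carried by \textsc{enter-view}), conclude that at least $t+1$ correct processes broadcast \textsc{view-completed} for $V-1$, and then observe that any such process must have entered $V-1$ when it armed its (uncancelled) $\mathit{view\_timer}$, since every view change cancels the timer. Your explicit timer-state invariant and the unforgeability step are just spelled-out versions of what the paper leaves implicit, so no further work is needed.
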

\begin{proof}
As $V > 1$ is entered by a correct process at time $\tau_V$, at least $t + 1$ correct processes have broadcast a \textsc{view-completed} message for $V - 1$ (line~\ref{line:broadcast_view_completed}) by time $\tau_V$.
Therefore, any such correct process $P_i$ has previously invoked the $\mathsf{measure}(\cdot)$ operation on $\mathit{view\_timer}_i$ (line~\ref{line:measure_first} or line~\ref{line:measure_second}) at some time $\tau_i \leq \tau_V$.
As this specific invocation of the $\mathsf{measure}(\cdot)$ operation is not cancelled (otherwise, $P_i$ would not broadcast a \textsc{view-completed} message because of this invocation), $\mathit{view}_i = V - 1$ at $\tau_i$.
Immediately after the invocation (still at time $\tau_i$ as local computation takes zero time), $P_i$ enters $\mathit{view}_i = V - 1$ (line~\ref{line:enter_first_view} or line~\ref{line:enter_v}), which concludes the proof.
\end{proof}

A direct consequence of Lemma~\ref{lemma:entering} is that $\tau_{V'} \geq \tau_{V}$ for any two views $V'$ and $V$ with $V' > V$.
Next, we prove that no correct process updates its $\mathit{view}_i$ variable to a view $V' > V$ before time $\tau_V + \mathit{view\_duration}$ if $\tau_V \geq \text{GST}$.

\begin{lemma} \label{lemma:view_updated}
Consider any view $V$ such that $\tau_V \geq \text{GST}$.
No correct process $P_i$ updates its $\mathit{view}_i$ variable to a view $V' > V$ before time $\tau_V + \mathit{view\_duration}$.
\end{lemma}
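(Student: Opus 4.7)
The plan is to argue by contradiction: suppose some correct process $P_i$ updates $\mathit{view}_i$ to a view $V' > V$ at some time $\tau < \tau_V + \mathit{view\_duration}$. Such an update can only happen at line~\ref{line:update_view_view_completed} (triggered by $2t+1$ \textsc{view-completed} messages for some $W$ with $V' = W+1$) or at line~\ref{line:update_view_enter_view} (triggered by an \textsc{enter-view} message for $V'$ whose signature $\mathit{sig}$ certifies the completion of view $V' - 1$). In either case, since $V' > V$, the certified view satisfies $W \geq V$ (respectively $V' - 1 \geq V$), and its threshold signature requires $2t + 1$ partial signatures on a view $\geq V$. Hence at least $t + 1$ correct processes have broadcast a \textsc{view-completed} message (line~\ref{line:broadcast_view_completed}) for some view $W \geq V$ by time $\tau$.

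The main step is to show this is impossible. Fix such a correct $P_j$ broadcasting $\langle \textsc{view-completed}, W, \cdot \rangle$ at time $\tau_j \leq \tau$. The broadcast is triggered by the expiry of $\mathit{view\_timer}_j$ while $\mathit{view}_j = W$, and since this measurement is never cancelled (cancellation at lines~\ref{line:cancel_view_1},\ref{line:cancel_view_2} always coincides with an update of $\mathit{view}_j$), the current measurement was initiated when $P_j$ entered $W$: either through $\mathsf{init}$ (line~\ref{line:measure_first}, only for $W = 1$) or through line~\ref{line:measure_second}. Let $e_j$ be that entry time; by definition of $\tau_W$, $e_j \geq \tau_W$.

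The crucial inequality is $\tau_W \geq \tau_V$. This follows by a downward induction using Lemma~\ref{lemma:entering}: if a correct process enters view $W > V$, then some correct process has entered $W - 1$ by time $\tau_W$, giving $\tau_{W-1} \leq \tau_W$; iterating until we reach $V$ yields $\tau_V \leq \tau_W$. Hence $e_j \geq \tau_V \geq \text{GST}$, so the timer's $\mathit{view\_duration}$-long measurement runs entirely after GST, i.e., without clock drift. Therefore $\tau_j = e_j + \mathit{view\_duration} \geq \tau_V + \mathit{view\_duration}$, contradicting $\tau_j \leq \tau < \tau_V + \mathit{view\_duration}$.

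The main obstacle I expect is a bookkeeping one rather than a conceptual one: carefully justifying that the surviving timer measurement is indeed the one started at $e_j$ (so that no earlier, possibly pre-GST, measurement can be responsible for the expiry), and handling the edge case $W = 1$ where the measurement begins at $\mathsf{init}$ rather than at line~\ref{line:measure_second}. In the $W = 1$ case the hypothesis $\tau_V \geq \text{GST}$ (with $V \leq W = 1$) still forces the entry time $e_j$ to lie after GST, so the no-drift argument survives, and the rest of the proof carries through identically.
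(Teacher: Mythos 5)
Your proof is correct and takes essentially the same route as the paper's: both reduce a premature update of $\mathit{view}_i$ (through either update path) to some correct process having broadcast a \textsc{view-completed} message for a view $W \geq V$ before $\tau_V + \mathit{view\_duration}$, and rule this out via Lemma~\ref{lemma:entering} (which gives $\tau_W \geq \tau_V \geq \text{GST}$) combined with the absence of clock drift after GST. You merely make explicit details the paper leaves implicit, such as the threshold-signature check on \textsc{enter-view} messages and the timer-cancellation bookkeeping.
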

\begin{proof}
No correct process sends a \textsc{view-completed} message (line~\ref{line:broadcast_view_completed}) for $V$ (or, by Lemma~\ref{lemma:entering}, any greater view) before time $\tau_V + \mathit{view\_duration}$ (as local clocks do not drift after GST).
Thus, before time $\tau_V + \mathit{view\_duration}$, no correct process $P_i$ receives (1) $2t + 1$ \textsc{view-completed} messages for $V$ or any greater view (line~\ref{line:receive_view_completed}), nor (2) an \textsc{enter-view} message for a view greater than $V$ (line~\ref{line:receive_enter_view}), which means that $\mathit{view}_i$ cannot be updated before time $\tau_V + \mathit{view\_duration}$.
\end{proof}

The next lemma proves that every correct process enters $V$ by time $\tau_V + 2\delta$, for any view $V > V_{\mathit{max}}$.
Recall that $V_{\mathit{max}}$ is the greatest view entered by a correct process before GST.

\begin{lemma} \label{lemma:synchronized_start}
For any view $V > V_{\mathit{max}}$, every correct process enters $V$ by time $\tau_V + 2\delta$.
\end{lemma}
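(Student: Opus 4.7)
The plan is to trace how the broadcast emitted by the first correct process to enter $V$ pulls every other correct process into $V$ within $2\delta$ additional time. Since $V > V_{\mathit{max}}$, we have $\tau_V \geq \text{GST}$, so the $\delta$-bound on message delays applies from $\tau_V$ onwards. Let $P_j$ be the correct process that enters $V$ at time $\tau_V$; inspecting the dissemination timer expiration handler (lines~\ref{line:dissemination_timer_expire}--\ref{line:enter_v}), $P_j$ first broadcasts $\langle \textsc{enter-view}, V, \mathit{view\_sig}_j \rangle$ at line~\ref{line:broadcast_enter_epoch} and only then triggers $\mathsf{advance}(V)$ at line~\ref{line:enter_v}, so every correct process $P_i$ has this message delivered by some time $\tau' \leq \tau_V + \delta$.

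I would then do a case analysis on $\mathit{view}_i$ at that reception time $\tau'$. If $\mathit{view}_i < V$, then upon processing the message the handler at line~\ref{line:receive_enter_view} fires, $P_i$ sets $\mathit{view}_i \gets V$, and starts $\mathit{dissemination\_timer}_i$ for $\delta$ at line~\ref{line:measure_dissemination_timer_2}; that timer then expires by $\tau' + \delta \leq \tau_V + 2\delta$ and line~\ref{line:enter_v} triggers $\mathsf{advance}(V)$ at that point. If instead $\mathit{view}_i = V$ already at $\tau'$, then either $P_i$ has already entered $V$ and we are done, or its dissemination timer was armed for $\delta$ at the earlier time $\tau_s \leq \tau'$ when $\mathit{view}_i$ was first raised to $V$ (via either line~\ref{line:measure_dissemination_timer_1} or~\ref{line:measure_dissemination_timer_2}); that timer fires by $\tau_s + \delta \leq \tau_V + 2\delta$, again causing $P_i$ to enter $V$.

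The main obstacle is ruling out the remaining case, $\mathit{view}_i > V$ somewhere in the window $[\tau_V, \tau_V + 2\delta]$, because this would let $P_i$ skip $V$ entirely: the handler at line~\ref{line:receive_enter_view} requires $V > \mathit{view}_i$, and upon timer expiration the process triggers $\mathsf{advance}(\mathit{view}_i)$ rather than $\mathsf{advance}(V)$. Fortunately, this is precisely what Lemma~\ref{lemma:view_updated} rules out when applied to $V$: since $\tau_V \geq \text{GST}$, no correct process updates its $\mathit{view}$ variable above $V$ before $\tau_V + \mathit{view\_duration}$, and $\mathit{view\_duration} = \Delta + 2\delta > 2\delta$. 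Therefore, throughout $[\tau_V, \tau_V + 2\delta]$ every correct $P_i$ has $\mathit{view}_i \leq V$, so the dissemination timer armed in either of the two handled cases is never cancelled before expiring, and $P_i$ enters exactly view $V$ by $\tau_V + 2\delta$.
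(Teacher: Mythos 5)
Your argument follows the paper's own route: the first process to enter $V$ broadcasts an \textsc{enter-view} message at line~\ref{line:broadcast_enter_epoch}, every correct process receives it within $\delta$ (since $\tau_V \geq \text{GST}$), arms $\mathit{dissemination\_timer}$ for $\delta$, and Lemma~\ref{lemma:view_updated} guarantees the timer is not cancelled and the process cannot skip past $V$ before $\tau_V + \mathit{view\_duration} > \tau_V + 2\delta$. Your explicit case analysis on $\mathit{view}_i$ at reception time (in particular the case $\mathit{view}_i = V$ already, where the handler at line~\ref{line:receive_enter_view} does not even fire) is actually spelled out more carefully than in the paper's proof, which silently assumes $\mathit{view}_i < V$ upon reception.

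There is one case your argument does not cover: $V = 1$, which is a legitimate instance of the lemma when $V_{\mathit{max}} = 0$. In that case the first correct process to enter $V$ does so at line~\ref{line:enter_first_view} upon $\mathsf{init}$, not via the dissemination-timer handler, so no \textsc{enter-view} message for $V$ is ever broadcast and the starting point of your trace argument fails. The paper dispatches this as a separate base case: $V_{\mathit{max}} = 0$ means no correct process started \sync before GST, so (since all start by GST) every correct process enters view $1$ exactly at $\text{GST} = \tau_V$, and the bound holds trivially; you should add this. A second, minor imprecision: in your case $\mathit{view}_i = V$ with the timer armed at $\tau_s$, it is possible that $\tau_s < \text{GST}$ (e.g., $\mathit{view}_i$ was raised to $V$ before GST without entering), and then clock drift means the timer need not fire by $\tau_s + \delta$; the correct bound is $\max(\text{GST}, \tau_s) + \delta$, which still yields $\tau_V + 2\delta$, so the conclusion survives but the stated intermediate bound should be adjusted.
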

\begin{proof}
If $V = 1$ (which implies that $V_{\mathit{max}} = 0$, i.e., no correct process started executing \sync before GST), every correct process enters $V$ at GST.
Hence, the statement of the lemma holds.

Let $V > 1$.
Due to the definition of $V_{\mathit{max}}$, $\tau_V \geq \text{GST}$.
Every correct process $P_i$ receives an \textsc{enter-view} message for $V$ (line~\ref{line:receive_enter_view}) by time $\tau_V + \delta$.
When $P_i$ receives the aforementioned message, it updates $\mathit{view}_i$ to $V$ (line~\ref{line:update_view_enter_view}).
Moreover, $P_i$ does not update its $\mathit{view}_i$ variable before time $\tau_V + \mathit{view\_duration} > \tau_V + 2\delta$ (by Lemma~\ref{lemma:view_updated}).
Hence, $\mathit{dissemination\_timer}_i$ expires (line~\ref{line:dissemination_timer_expire}), and $P_i$ enters $V$ by time $\tau_V + 2\delta$ (line~\ref{line:enter_v}).
\end{proof}

The following lemma proves that \sync ensures the overlapping property.

\begin{lemma} [Overlapping] \label{lemma:overlapping}
For any view $V > V_{\mathit{max}}$, all correct processes overlap in $V$ for (at least) $\Delta$ time.
\end{lemma}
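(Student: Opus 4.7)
The plan is to identify two time instants that bracket the overlap interval during which all correct processes are simultaneously in view $V$: let $t_\text{in}$ denote the last instant at which any correct process triggers $\mathsf{advance}(V)$, and let $t_\text{out}$ denote the first instant at which any correct process triggers $\mathsf{advance}(V')$ for some $V' > V$. The overlap is then at least $t_\text{out} - t_\text{in}$, and the goal is to show this difference is at least $\Delta$.

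To bound $t_\text{in}$, I would invoke Lemma~\ref{lemma:synchronized_start} directly, which (since $V > V_{\mathit{max}}$) yields $t_\text{in} \leq \tau_V + 2\delta$. To bound $t_\text{out}$, I would first apply Lemma~\ref{lemma:view_updated} (applicable because $\tau_V \geq \text{GST}$, as $V > V_{\mathit{max}}$) to conclude that no correct process updates its $\mathit{view}_i$ variable to any $V' > V$ before $\tau_V + \mathit{view\_duration}$. I would then carefully account for the additional $\delta$ delay separating an update to $\mathit{view}_i$ from the corresponding $\mathsf{advance}$ event: whenever $\mathit{view}_i$ is updated (line~\ref{line:receive_view_completed} or line~\ref{line:receive_enter_view}), the process starts $\mathit{dissemination\_timer}_i$ for $\delta$ time (line~\ref{line:measure_dissemination_timer_1} or line~\ref{line:measure_dissemination_timer_2}), and only after this timer expires (line~\ref{line:dissemination_timer_expire}) is $\mathsf{advance}$ triggered at line~\ref{line:enter_v}. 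Hence $t_\text{out} \geq \tau_V + \mathit{view\_duration} + \delta = \tau_V + \Delta + 3\delta$.

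Combining the two bounds, the overlap is at least $(\tau_V + \Delta + 3\delta) - (\tau_V + 2\delta) = \Delta + \delta \geq \Delta$, as required. No serious obstacle is expected here: the argument is essentially a concatenation of the two preceding lemmas together with a careful accounting of timer delays. The only subtlety is to interpret ``being in view $V$'' consistently with the $\mathsf{advance}$-based definition used by \spread (line~\ref{line:sync_indication} in \Cref{algorithm:spread}) and by Lemma~\ref{lemma:entering}, so that the $\delta$ wait on $\mathit{dissemination\_timer}$ is properly included in the bound on $t_\text{out}$; even if one instead defined ``in $V$'' by $\mathit{view}_i = V$ and dropped the extra $\delta$, the overlap $\mathit{view\_duration} - 2\delta = \Delta$ would still suffice.
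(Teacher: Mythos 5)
Your proposal is correct and follows essentially the same route as the paper's proof: it combines Lemma~\ref{lemma:synchronized_start} (everyone enters $V$ by $\tau_V + 2\delta$) with Lemma~\ref{lemma:view_updated} (no one moves to a higher view before $\tau_V + \mathit{view\_duration}$), exactly as the paper does, yielding an overlap of at least $\mathit{view\_duration} - 2\delta = \Delta$. Your extra accounting of the $\delta$ wait on $\mathit{dissemination\_timer}_i$ only strengthens the bound to $\Delta + \delta$ and is not needed, as you yourself note.
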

\begin{proof}
For any view $V > V_{\mathit{max}}$, $\tau_V \geq \text{GST}$ due to the definition of $V_{\mathit{max}}$.
The following holds:
\begin{compactitem}
    \item Every correct process enters $V$ by time $\tau_V + 2\delta$ (by Lemma~\ref{lemma:synchronized_start}).

    \item No correct process enters another view before time $\tau_V + \mathit{view\_duration}$ (by Lemma~\ref{lemma:view_updated}).
\end{compactitem}
Thus, all correct process are in $V$ from $\tau_V + 2\delta$ until $\tau_V + \mathit{view\_duration}$, for any view $V > V_{\mathit{max}}$.
In other words, in any view $V > V_{\mathit{max}}$, all correct processes overlap for (at least) $\mathit{view\_duration} - 2\delta = \Delta$ time.
\end{proof}

We are ready to prove that \sync satisfies eventual synchronization.

\begin{theorem}
\sync is correct, i.e., it satisfies eventual synchronization.
\end{theorem}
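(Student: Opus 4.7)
The plan is to combine the four preceding lemmas with the round-robin leader rotation to produce a view that is both (i) entered by all correct processes with a sufficiently long overlap and (ii) equipped with a correct leader. I will first argue liveness of the view progression (every view is eventually entered after GST), then invoke the round-robin structure to locate a view with a correct leader within a bounded window, and finally appeal to Lemma~\ref{lemma:overlapping} to conclude $\Delta$-overlap in that view.

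\textbf{Step 1: Every view is eventually entered after GST.} I would prove by induction on $V \geq V_{\mathit{max}} + 1$ that some correct process enters $V$. The base case follows from Lemma~\ref{lemma:synchronized_start} (and the definition of $V_{\mathit{max}}$). For the inductive step, suppose every correct process has entered view $V$ by time $\tau_V + 2\delta$ (Lemma~\ref{lemma:synchronized_start}). Because clocks do not drift after GST and each correct process armed $\mathit{view\_timer}_i$ with duration $\mathit{view\_duration}$ upon entering $V$, every correct process' timer expires by time $\tau_V + 2\delta + \mathit{view\_duration}$ and broadcasts a \textsc{view-completed} message for $V$ at line~\ref{line:broadcast_view_completed} (or has already advanced to a larger view, in which case we are done). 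Within an additional $\delta$, each correct process receives $2t + 1$ such messages, combines them (line~\ref{line:receive_view_completed}), waits $\delta$ on $\mathit{dissemination\_timer}_i$, and enters view $V + 1$ at line~\ref{line:enter_v}.

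\textbf{Step 2: Locating a view with a correct leader.} By Step 1, the views $V_{\mathit{max}} + 1, V_{\mathit{max}} + 2, \ldots, V_{\mathit{max}} + \sqrt{n}$ are all eventually entered by some correct process. The round-robin schedule of line~\ref{line:round_robin} assigns disjoint $\sqrt{n}$-sized leader sets to these $\sqrt{n}$ consecutive views, covering all $n$ processes exactly once. Since at most $t < n/3$ processes are faulty, a counting argument shows that strictly fewer than $\sqrt{n}/3$ of these views can consist entirely of faulty leaders; hence at least one of them, call it $V_{\mathit{sync}}^*$, contains a correct leader. This identifies the synchronization view.

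\textbf{Step 3: Concluding eventual synchronization.} Since $V_{\mathit{sync}}^* > V_{\mathit{max}}$, Lemma~\ref{lemma:synchronized_start} guarantees that every correct process enters $V_{\mathit{sync}}^*$ by time $\tau_{V_{\mathit{sync}}^*} + 2\delta$, and Lemma~\ref{lemma:view_updated} ensures that no correct process leaves $V_{\mathit{sync}}^*$ before time $\tau_{V_{\mathit{sync}}^*} + \mathit{view\_duration}$. Setting $\tau_{\mathit{sync}} := \tau_{V_{\mathit{sync}}^*} + 2\delta$ and $V_{\mathit{sync}} := V_{\mathit{sync}}^*$, this gives an overlap of $\mathit{view\_duration} - 2\delta = \Delta$ time in $V_{\mathit{sync}}$, which has a correct leader by construction — exactly the statement of eventual synchronization.

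\textbf{Expected main obstacle.} The routine pieces (Steps 1 and 3) follow mechanically from the lemmas already established; the only genuinely content-bearing step is Step 2, the pigeonhole counting on faulty leader slots within a cycle of $\sqrt{n}$ views. The mild subtlety is that the leader function in line~\ref{line:round_robin} uses $V \bmod \sqrt{n}$, so the cyclic window to consider starts at an arbitrary offset determined by $V_{\mathit{max}}$; I will need to state precisely that any window of $\sqrt{n}$ consecutive views corresponds to a full partition of $\mathsf{Process}$, so that the bound $t < n/3 < n - \sqrt{n}$ on the number of Byzantine processes suffices to leave at least one view with a correct leader.
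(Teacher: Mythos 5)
Your proposal is correct and takes essentially the same route as the paper's own proof: a $\Delta$-overlap in every view beyond $V_{\mathit{max}}$ (which you re-derive from Lemma~\ref{lemma:synchronized_start} and Lemma~\ref{lemma:view_updated}, i.e., exactly the content of Lemma~\ref{lemma:overlapping}) combined with the round-robin rotation to exhibit a view with a correct leader. You additionally spell out the view-progress argument and the pigeonhole count that the paper's two-line proof leaves implicit; the only small fix needed is that your base case should rest on the stabilization property (Lemma~\ref{lemma:all_enter}) rather than Lemma~\ref{lemma:synchronized_start}, since the latter presupposes that the view in question has already been entered by some correct process.
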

\begin{proof}
By Lemma~\ref{lemma:overlapping}, all correct processes overlap for (at least) $\mathit{view\_duration} - 2\delta = \Delta$ time in every view $V > V_{\mathit{max}}$.
Given that the leaders are assigned in the round-robin manner (line~\ref{line:round_robin}), there exists a view $V_{\mathit{sync}} > V_{\mathit{max}}$ with a correct leader, which concludes the proof. 
\end{proof}

In the rest of the subsection, we prove that \sync satisfies other properties specified in \Cref{subsection:spread_algorithm}.
We start by proving the monotonicity property of \sync.

\begin{lemma} [Monotonicity]
Any correct process enters monotonically increasing views.
\end{lemma}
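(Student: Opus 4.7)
The plan is to track the local variable $\mathit{view}_i$ throughout the execution and show that (a) $\mathit{view}_i$ is strictly monotonically increasing over time, and (b) the value of $\mathit{view}_i$ at each $\mathsf{advance}$ trigger equals the view argument, so the sequence of views entered by $P_i$ is itself strictly increasing.

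First, I would establish the monotonicity of $\mathit{view}_i$ by inspection of the pseudocode. The variable is initialized to $1$ and thereafter assigned only at line~\ref{line:update_view_view_completed} and line~\ref{line:update_view_enter_view}. The first assignment sets $\mathit{view}_i \gets V + 1$ under the guard $V \geq \mathit{view}_i$ (line~\ref{line:receive_view_completed}), so it strictly increases $\mathit{view}_i$. The second assignment sets $\mathit{view}_i \gets V$ under the guard $V > \mathit{view}_i$ (line~\ref{line:receive_enter_view}), and is likewise strictly increasing. Hence $\mathit{view}_i$ only grows over time.

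Next, I would observe that $\mathsf{advance}$ is triggered in exactly two places: line~\ref{line:enter_first_view}, where the argument is the literal $1$ which equals the initial value of $\mathit{view}_i$; and line~\ref{line:enter_v}, where the argument is the current value of $\mathit{view}_i$. So it suffices to argue that the value of $\mathit{view}_i$ read at successive triggers is strictly increasing. This follows once we show that between any two consecutive triggers, $\mathit{view}_i$ has been updated at least once. The trigger at line~\ref{line:enter_v} fires only when $\mathit{dissemination\_timer}_i$ expires, and the timer is started only at lines~\ref{line:measure_dissemination_timer_1} and \ref{line:measure_dissemination_timer_2}, both reached only inside the handlers that strictly increase $\mathit{view}_i$ (lines~\ref{line:receive_view_completed} and \ref{line:receive_enter_view}). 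Therefore, every expiration of $\mathit{dissemination\_timer}_i$ is preceded by a strict update to $\mathit{view}_i$ performed after the previous trigger.

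The main subtlety will be the concurrent interplay between updates to $\mathit{view}_i$ and cancellations of $\mathit{dissemination\_timer}_i$: between its start and its expiration, $\mathit{view}_i$ may be updated again, which cancels and restarts the timer (lines~\ref{line:cancel_dissemination_timer_1}--\ref{line:measure_dissemination_timer_1} and \ref{line:cancel_dissemination_timer_2}--\ref{line:measure_dissemination_timer_2}). The explicit $\mathsf{cancel}()$ invocations, combined with the timer's semantics that purge pending expiration events, ensure that a single expiration event produced by any particular $\mathsf{measure}(\cdot)$ call is either consumed or discarded; in particular, no stale expiration can re-trigger line~\ref{line:enter_v} with an old value of $\mathit{view}_i$. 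Combining this with the strict monotonicity of $\mathit{view}_i$ and the fact that line~\ref{line:enter_v} reads $\mathit{view}_i$ after at least one update since the previous advance yields the desired strict monotonicity of the views entered by $P_i$.
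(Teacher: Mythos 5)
Your proof is correct and follows essentially the same route as the paper, whose argument is simply that $\mathit{view}_i$ only increases via the guarded assignments at lines~\ref{line:update_view_view_completed} and~\ref{line:update_view_enter_view}. Your additional bookkeeping about the $\mathsf{advance}$ trigger sites and the $\mathit{dissemination\_timer}_i$ cancellation semantics is a more careful (and slightly stronger, strict) version of the same observation, not a different approach.
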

\begin{proof}
Consider any correct process $P_i$.
The lemma follows as the value of the $\mathit{view}_i$ variable only increases (line~\ref{line:update_view_view_completed} or line~\ref{line:update_view_enter_view}).
\end{proof}

Next, we show a direct consequence of Lemma~\ref{lemma:view_updated}: no correct process $P_i$ updates its $\mathit{view}_i$ variable to a view greater than $V_{\mathit{max}} + 1$ by time $\text{GST} + 3\delta$.

\begin{lemma} \label{lemma:view_v_max_1}
No correct process $P_i$ updates its $\mathit{view}_i$ variable to any view greater than $V_{\mathit{max}} + 1$ by time $\text{GST} + 3\delta$.
\end{lemma}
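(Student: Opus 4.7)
The plan is to derive a contradiction by reducing to Lemma~\ref{lemma:view_updated}. I suppose some correct process $P_i$ updates its $\mathit{view}_i$ variable to a view $V' > V_{\mathit{max}} + 1$ at some time $\tau \leq \text{GST} + 3\delta$. First, I inspect the two update sites in Algorithm~\ref{algorithm:sync}: either $P_i$ collects $2t + 1$ \textsc{view-completed} messages for $V' - 1$ (line~\ref{line:receive_view_completed}), or $P_i$ accepts an \textsc{enter-view} message for $V'$ whose threshold signature attests to $V' - 1$ (line~\ref{line:receive_enter_view}). In either case, the existence of a valid $(2t+1)$-threshold signature on $V' - 1$ by time $\tau$ forces at least $t + 1$ correct processes to have partially signed $V' - 1$ via a \textsc{view-completed} broadcast (line~\ref{line:broadcast_view_completed}) no later than $\tau$.

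Next, I would argue that every such correct process $P_j$ entered view $V' - 1$ before performing that broadcast. This follows from the control flow of \sync: $\mathit{view\_timer}_j$ is (re)armed only in the initialization handler (line~\ref{line:measure_first}) or immediately before $\mathsf{advance}(\mathit{view}_j)$ is triggered (line~\ref{line:measure_second}), and any update of $\mathit{view}_j$ cancels the timer (lines~\ref{line:cancel_view_1} and~\ref{line:cancel_view_2}). Hence an uncancelled expiration of $\mathit{view\_timer}_j$ with $\mathit{view}_j = V' - 1$ can only occur if $P_j$ entered $V' - 1$.

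Since $V' - 1 \geq V_{\mathit{max}} + 1 > V_{\mathit{max}}$, by definition of $V_{\mathit{max}}$ no correct process enters $V' - 1$ before GST, so $\tau_{V'-1} \geq \text{GST}$. I then apply Lemma~\ref{lemma:view_updated} with $V = V' - 1$, which forbids any correct process from updating its $\mathit{view}_i$ variable to a view strictly greater than $V' - 1$ before $\tau_{V'-1} + \mathit{view\_duration} \geq \text{GST} + \delta\sqrt{n} + 5\delta > \text{GST} + 3\delta$. Since $V' > V' - 1$, this contradicts the assumption on $P_i$ and closes the argument.

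The main obstacle is the bridge between ``broadcasting \textsc{view-completed} for $V' - 1$'' and ``entering $V' - 1$,'' since the definition of $V_{\mathit{max}}$ is stated in terms of entries, not timer expirations. Once that bookkeeping on $\mathit{view\_timer}_j$ is in place, the rest is essentially an immediate invocation of Lemma~\ref{lemma:view_updated} combined with the inequality $\mathit{view\_duration} > 3\delta$.
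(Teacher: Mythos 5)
Your proof is correct and follows essentially the same route as the paper's: both reduce the claim to Lemma~\ref{lemma:view_updated}, combined with the definition of $V_{\mathit{max}}$ (which guarantees the relevant entry time is at least GST) and the inequality $\mathit{view\_duration} > 3\delta$. The quorum-unwinding you add to show that $V'-1$ was indeed entered by a correct process at or after GST is a more explicit (and slightly more careful) rendering of what the paper leaves implicit, essentially re-deriving the bookkeeping already contained in Lemmas~\ref{lemma:entering} and~\ref{lemma:view_updated}.
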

\begin{proof}
By the definition of $V_{\mathit{max}}$, no correct process enters $V_{\mathit{max}} + 1$ (or a greater view) before GST.
Moreover, $\mathit{view\_duration} > 3\delta$.
Thus, the lemma follows from Lemma~\ref{lemma:view_updated}.
\end{proof}

Next, we prove the stabilization property of \sync.

\begin{lemma} [Stabilization] \label{lemma:all_enter}
Any correct process enters a view $V \geq V_{\mathit{max}}$ by time $\text{GST} + 3\delta$.
\end{lemma}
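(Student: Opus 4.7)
The plan is to split on the value of $V_{\mathit{max}}$. When $V_{\mathit{max}} \leq 1$, the claim is immediate: every correct process starts \sync by GST and enters view $1$ at line~\ref{line:enter_first_view}. The substantive case is $V_{\mathit{max}} \geq 2$; here I would exploit the fact that some correct process $P_j$ entered $V_{\mathit{max}}$ before GST. Since $V_{\mathit{max}} \geq 2$, this entry must have gone through line~\ref{line:enter_v}, which is preceded (at the same real time, since local computation is instantaneous) by the broadcast at line~\ref{line:broadcast_enter_epoch}. Thus $P_j$ disseminated an \textsc{enter-view} message for $V_{\mathit{max}}$ prior to GST, and by reliable delivery together with the $\delta$-bound on post-GST message delays, every correct process receives it by time $\text{GST} + \delta$.

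Fix any correct $P_i$. At time $\text{GST} + \delta$, either $P_i$ has already triggered $\mathsf{advance}(V)$ for some $V \geq V_{\mathit{max}}$ (in which case we are done), or $\mathit{view}_i \geq V_{\mathit{max}}$ with a pending dissemination timer, started at some real time $\tau^{\bullet} \leq \text{GST} + \delta$ via line~\ref{line:measure_dissemination_timer_1} or \ref{line:measure_dissemination_timer_2}. I would then bound when this pending cycle completes: since local clocks may drift before GST but not after, the $\delta$-timer started at $\tau^{\bullet}$ expires in real time by $\max(\tau^{\bullet}, \text{GST}) + \delta \leq \text{GST} + 2\delta$. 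If it is not cancelled, $P_i$ triggers $\mathsf{advance}(\mathit{view}_i)$ with $\mathit{view}_i \geq V_{\mathit{max}}$ at that moment. Otherwise, it is cancelled at some time $\tau'$ by a further update of $\mathit{view}_i$. Here I would invoke Lemma~\ref{lemma:view_v_max_1}: throughout $[\text{GST}, \text{GST} + 3\delta]$ the variable $\mathit{view}_i$ cannot exceed $V_{\mathit{max}} + 1$, so at most one cancellation can occur (the jump from $V_{\mathit{max}}$ to $V_{\mathit{max}} + 1$). If $\tau' \leq \text{GST} + 2\delta$, the restarted timer expires by $\text{GST} + 3\delta$. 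If instead $\tau' > \text{GST} + 2\delta$, then the original timer had already expired by $\text{GST} + 2\delta < \tau'$, contradicting its cancellation at $\tau'$; hence $P_i$ had already entered a view $\geq V_{\mathit{max}}$.

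The hard part will be carefully reconciling pre-GST clock drift with the cancel-and-restart dynamics of the dissemination timer. Without the cap from Lemma~\ref{lemma:view_v_max_1}, an adversary could in principle trigger an unbounded chain of cancellations and delay entry arbitrarily; that lemma is precisely what forbids more than one cancellation within the relevant window, pinning the worst-case entry time to $\text{GST} + 3\delta$ and matching the bound in the statement.
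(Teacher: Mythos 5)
Your proposal is correct and follows essentially the same route as the paper's proof: it hinges on the \textsc{enter-view} broadcast for $V_{\mathit{max}}$ being received by $\text{GST}+\delta$, and on Lemma~\ref{lemma:view_v_max_1} capping $\mathit{view}_i$ at $V_{\mathit{max}}+1$ so that at most one cancel-and-restart of $\mathit{dissemination\_timer}_i$ can occur, yielding entry within two further $\delta$-periods; the paper phrases this as a case split on whether $\mathit{view}_i$ equals $V_{\mathit{max}}$ or $V_{\mathit{max}}+1$ at $\text{GST}+\delta$, which is the same argument. Your explicit handling of $V_{\mathit{max}}\leq 1$ (where no \textsc{enter-view} message for view $1$ exists, entry being via line~\ref{line:enter_first_view}) is a small but welcome refinement over the paper's $V_{\mathit{max}}=0$ case.
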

\begin{proof}
If $V_{\mathit{max}} = 0$ (no correct process started executing \sync before GST), then all correct processes enter view $V_{\mathit{max}} + 1 = 1$ at GST.
Thus, the lemma holds in this case.

Let $V_{\mathit{max}} > 0$.
Consider a correct process $P_i$.
By time $\text{GST} + \delta$, $P_i$ receives an \textsc{enter-view} message for $V_{\mathit{max}}$ (line~\ref{line:receive_enter_view}) as this message was previously broadcast (line~\ref{line:broadcast_enter_epoch}).
Hence, at time $\text{GST} + \delta$, $\mathit{view}_i$ is either $V_{\mathit{max}}$ or $V_{\mathit{max}} + 1$; note that it cannot be greater than $V_{\mathit{max}} + 1$ by Lemma~\ref{lemma:view_v_max_1}.
We consider two cases:
\begin{compactitem}
    \item $\mathit{view}_i = V_{\mathit{max}} + 1$ at time $\text{GST} + \delta$:
    In this case, $P_i$ does not cancel $\mathit{dissemination\_timer}_i$ (line~\ref{line:cancel_dissemination_timer_1} or line~\ref{line:cancel_dissemination_timer_2}) before it expires (by Lemma~\ref{lemma:view_v_max_1}).
    Therefore, $P_i$ enters $V_{\mathit{max}} + 1$ (line~\ref{line:enter_v}) by time $\text{GST} + 2\delta$ (after $\mathit{dissemination\_timer}_i$ expires at line~\ref{line:dissemination_timer_expire}).
    The claim of the lemma holds.

    \item $\mathit{view}_i = V_{\mathit{max}}$ at time $\text{GST} + \delta$:
    If $P_i$ does not update its $\mathit{view}_i$ variable by time $\text{GST} + 2\delta$, $\mathit{dissemination\_timer}_i$ expires (line~\ref{line:dissemination_timer_expire}), and $P_i$ enters $V_{\mathit{max}}$ (line~\ref{line:enter_v}) by time $\text{GST} + 2\delta$.
    
    Otherwise, $P_i$ updates its $\mathit{view}_i$ variable to $V_{\mathit{max}} + 1$ (by Lemma~\ref{lemma:view_v_max_1}), and measures $\delta$ time using $\mathit{dissemination\_timer}_i$ (line~\ref{line:measure_dissemination_timer_1} or line~\ref{line:measure_dissemination_timer_2}).
    Moreover, Lemma~\ref{lemma:view_v_max_1} shows that $P_i$ does cancel $\mathit{dissemination\_timer}_i$.
    Hence, when $\mathit{dissemination\_timer}_i$ expires (which happens by time $\text{GST} + 3\delta$ at line~\ref{line:dissemination_timer_expire}), $P_i$ enters $V_{\mathit{max}} + 1$ (line~\ref{line:enter_v}).
\end{compactitem}
The lemma is true as it holds in both possible cases.    
\end{proof} 

Next, we show that \sync satisfies the limited entrance property.

\begin{lemma} [Limited entrance] \label{lemma:5_delta}
In the time period $[\text{GST}, \text{GST} + 3\delta)$, any correct process enters $O(1)$ views.
\end{lemma}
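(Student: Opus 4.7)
The plan is to partition the view-entry events of $P_i$ into those triggered at line~\ref{line:enter_first_view} and those triggered at line~\ref{line:enter_v}, and bound each class separately. The former fires at most once per process (only at $\mathsf{init}$), so it contributes at most $1$ to the count inside the window. The substantive task is therefore to bound the number of firings of line~\ref{line:enter_v}, each of which is caused by an expiration of $\mathit{dissemination\_timer}_i$ at line~\ref{line:dissemination_timer_expire}.

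The central claim I intend to establish is that any two consecutive expirations of $\mathit{dissemination\_timer}_i$ that occur after GST are separated by at least $\delta$ in real time. I would prove this by tracking the life-cycle of the timer. It is (re)-armed only at lines~\ref{line:measure_dissemination_timer_1} and~\ref{line:measure_dissemination_timer_2}, each preceded by a $\mathsf{cancel}()$ call. If an expiration fires at a post-GST time $e$, let $t_m$ denote the time of the measurement responsible for it. No view-update event can occur in the open interval $(t_m, e)$, since any such event would cancel the pending expiration and prevent the firing at $e$. For the next expiration $e'$, the measurement time $t_{m'}$ that produces it must satisfy $t_{m'} \geq e$; otherwise a measurement in $(t_m, e)$ would contradict the preceding sentence. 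Because clocks do not drift after GST, the new measurement consumes exactly $\delta$ of real time, so $e' = t_{m'} + \delta \geq e + \delta$.

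To handle timers that were armed before GST, I would also argue separately that at most one such ``legacy'' expiration can occur post-GST, and that it must lie in $[\text{GST}, \text{GST} + \delta]$: at time GST the timer either holds no pending measurement or it has at most $\delta$ clock-ticks left to count, and since clock-ticks coincide with real time after GST the firing happens by $\text{GST} + \delta$. Combining the two facts, all expirations relevant to the window sit inside an interval of length $3\delta$ with pairwise gap at least $\delta$, so their count is at most a small constant. Adding the single potential entry from line~\ref{line:enter_first_view} yields the $O(1)$ bound.

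The main obstacle will be the careful bookkeeping around cancellation: the claim ``no view-update event in $(t_m, e)$'' has to be traced through both branches (lines~\ref{line:receive_view_completed} and~\ref{line:receive_enter_view}) that can update $\mathit{view}_i$, verifying that every such branch indeed cancels the outstanding expiration before re-arming the timer. The pre-GST clock-drift case is the other delicate point, but it affects only an additive constant and not the asymptotics.
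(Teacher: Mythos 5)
Your proof is correct and follows essentially the same argument as the paper, which simply observes that the $\delta$-measurement of $\mathit{dissemination\_timer}_i$ (lines~\ref{line:measure_dissemination_timer_1} and~\ref{line:measure_dissemination_timer_2}) forces a real-time wait of $\delta$ between consecutive view entries once clocks stop drifting at GST. Your version merely makes the paper's one-sentence reasoning explicit, including the careful handling of cancellations and of a timer armed before GST.
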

\begin{proof}
The lemma holds as any correct process waits $\delta$ time (line~\ref{line:measure_dissemination_timer_1} or line~\ref{line:measure_dissemination_timer_2}) before entering a new view (local clocks do not drift after GST).
\end{proof}

Next, we prove the limited synchronization view property of \sync.
Recall that $V_{\mathit{sync}}^*$ is the smallest synchronization view.

\begin{lemma} [Limited synchronization view] \label{lemma:views_difference}
$V_{\mathit{sync}}^* - V_{\mathit{max}} = O(\sqrt{n})$.
\end{lemma}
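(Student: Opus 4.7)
The plan is to combine the overlapping property (Lemma~\ref{lemma:overlapping}) with a counting argument on the round-robin leader schedule defined on line~\ref{line:round_robin}. The first observation is that, by the overlapping property, \emph{every} view $V > V_{\mathit{max}}$ already satisfies the first condition in the definition of a synchronization view: all correct processes are in $V$ for at least $\Delta$ time. Thus, the only obstacle to having $V$ be a synchronization view is the existence of a correct leader in $\mathsf{leaders}(V)$.

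Next, I would exploit the structure of $\mathsf{leaders}(\cdot)$. By line~\ref{line:round_robin}, the leader sets partition $\mathsf{Process}$ into $\sqrt{n}$ pairwise disjoint groups of $\sqrt{n}$ processes, and any window of $\sqrt{n}$ consecutive views covers each of these groups exactly once. Since at most $t < n/3$ processes are Byzantine, the number of leader groups composed entirely of Byzantine processes is at most $\lfloor t / \sqrt{n} \rfloor \leq \sqrt{n}/3 < \sqrt{n}$. Consequently, in any window of $\sqrt{n}$ consecutive views, at least one view has a leader set containing a correct process, i.e., ``has a correct leader'' in the sense used by eventual synchronization.

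I would then apply this to the window $V_{\mathit{max}}+1, V_{\mathit{max}}+2, \ldots, V_{\mathit{max}}+\sqrt{n}$. By the argument above, some view $V^\star$ in this window has a correct leader, and since $V^\star > V_{\mathit{max}}$, Lemma~\ref{lemma:overlapping} ensures correct processes overlap in $V^\star$ for at least $\Delta$ time. Hence $V^\star$ is a synchronization view, so by minimality $V_{\mathit{sync}}^\star \leq V^\star \leq V_{\mathit{max}} + \sqrt{n}$, yielding $V_{\mathit{sync}}^\star - V_{\mathit{max}} = O(\sqrt{n})$.

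The main obstacle I expect is a bookkeeping one: carefully justifying that ``correct leader'' means ``at least one correct process in $\mathsf{leaders}(V)$'' (matching the preconditions for \spread to terminate in view $V^\star$), and formalizing the pigeonhole count so that the bound $\lfloor t/\sqrt{n} \rfloor < \sqrt{n}$ is tight against $t < n/3$ and the partition structure on line~\ref{line:round_robin}. Everything else is a direct appeal to the already-established overlapping and entering lemmas.
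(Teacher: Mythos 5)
Your proposal is correct and follows essentially the same route as the paper: invoke the overlapping property for every view above $V_{\mathit{max}}$, then use the round-robin structure of $\mathsf{leaders}(\cdot)$ over the window $\{V_{\mathit{max}}+1,\ldots,V_{\mathit{max}}+\sqrt{n}\}$ to find a view with a correct leader, forcing $V_{\mathit{sync}}^* \leq V_{\mathit{max}}+\sqrt{n}$. Your pigeonhole count of fully-Byzantine leader groups is a slightly more elaborate (but equivalent) way of making the paper's observation that each process leads exactly one view in that window, so a correct leader must appear.
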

\begin{proof}
By Lemma~\ref{lemma:overlapping}, all correct processes overlap in any view $V > V_{\mathit{max}}$ for (at least) $\Delta$ time.
Consider the set of views $\mathbb{V} = \{V_{\mathit{max}} + 1, V_{\mathit{max}} + 2, ..., V_{\mathit{max}} + \sqrt{n}\}$.
Each process is a leader in exactly one view from $\mathbb{V}$ (due to the round-robin rotation of the leaders).
Hence, $V_{\mathit{sync}}^* \in \mathbb{V}$, which concludes the proof.
\end{proof}



Finally, we prove the complexity of \sync.

\begin{lemma} [Complexity]
\sync exchanges $O(n^{2.5}\kappa)$ bits during the time period $[\text{GST}, \tau_{\mathit{sync}}^* + \Delta]$, and it synchronizes all correct processes within $O(n)$ time after GST ($\tau_{\mathit{sync}}^* + \Delta - \text{GST} = O(n)$)
\end{lemma}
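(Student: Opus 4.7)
The plan is to separately establish the latency bound and the bit-complexity bound, using the already-proven properties of \sync (monotonicity, stabilization, limited entrance, overlapping, and limited synchronization view) as black boxes.

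For the latency, I would first observe that by the stabilization property (Lemma~\ref{lemma:all_enter}), every correct process enters a view $V \geq V_{\mathit{max}}$ by time $\text{GST} + 3\delta$. Combined with the limited synchronization view property (Lemma~\ref{lemma:views_difference}), there are only $O(\sqrt{n})$ views from $V_{\mathit{max}}$ up to and including $V_{\mathit{sync}}^*$. Each such view lasts at most $\mathit{view\_duration} + \delta = \delta\sqrt{n} + 6\delta = O(\sqrt{n})$ wall-clock time after GST (a correct process spends at most $\delta$ in the dissemination wait plus $\mathit{view\_duration}$ in the view proper, since clocks do not drift after GST). Summing, the time from GST to $\tau_{\mathit{sync}}^*$ is at most $3\delta + O(\sqrt{n}) \cdot O(\sqrt{n}) = O(n)$, and adding the final $\Delta = O(\sqrt{n})$ overlap yields $\tau_{\mathit{sync}}^* + \Delta - \text{GST} = O(n)$, as required.

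For the bit complexity, the idea is that the only messages \sync sends are \textsc{view-completed} and \textsc{enter-view}, each of size $O(\kappa)$, and each correct process broadcasts at most one message of each kind per view it participates in. Thus the per-view cost summed over all correct processes is $O(n) \cdot O(n) \cdot O(\kappa) = O(n^2\kappa)$ bits. It then suffices to bound the total number of (view, correct process) pairs during $[\text{GST}, \tau_{\mathit{sync}}^* + \Delta]$. I would split this interval into the unsynchronized window $[\text{GST}, \text{GST} + 3\delta)$, in which each correct process enters only $O(1)$ views by the limited entrance property (Lemma~\ref{lemma:5_delta}), and the remainder $[\text{GST} + 3\delta, \tau_{\mathit{sync}}^* + \Delta]$, during which by monotonicity and Lemma~\ref{lemma:views_difference} each correct process enters at most $V_{\mathit{sync}}^* - V_{\mathit{max}} + 1 = O(\sqrt{n})$ distinct views. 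Hence the total number of views touched is $O(\sqrt{n})$, so the total bits sent are $O(\sqrt{n}) \cdot O(n^2\kappa) = O(n^{2.5}\kappa)$.

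The main obstacle I anticipate is making sure that the accounting of views is tight across the GST boundary: one must argue that \textsc{view-completed} and \textsc{enter-view} messages sent before GST but delivered after GST do not cause more than $O(1)$ additional view changes per process in $[\text{GST}, \text{GST} + 3\delta)$, and that no correct process enters any view beyond $V_{\mathit{sync}}^*$ by time $\tau_{\mathit{sync}}^* + \Delta$. Both follow from Lemma~\ref{lemma:view_updated} (a new view change cannot occur until a full $\mathit{view\_duration} > 3\delta$ has elapsed after $\tau_V$) and from the overlapping property (Lemma~\ref{lemma:overlapping}) together with monotonicity, so the main work is simply combining these lemmas cleanly rather than proving anything new.
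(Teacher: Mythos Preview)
Your proposal is correct and follows essentially the same approach as the paper: both arguments bound the per-process communication by $O(n\kappa)$ bits per view, then combine limited entrance (for $[\text{GST},\text{GST}+3\delta)$) with stabilization and limited synchronization view (for the remaining interval) to conclude that each correct process touches only $O(\sqrt{n})$ views, giving $O(n^{2.5}\kappa)$ bits; and both bound the latency by summing $O(\sqrt{n})$ view-durations of $O(\sqrt{n})$ each. Your per-view latency estimate of $\mathit{view\_duration}+\delta$ is slightly loose (the paper uses $\mathit{view\_duration}+2\delta$, accounting for both the message-delivery delay and the dissemination-timer wait), but this does not affect the $O(n)$ conclusion.
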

\begin{proof}
Due to the definition of $V_{\mathit{sync}}^*$, \sync synchronizes all correct processes in $V_{\mathit{sync}}^*$.
For each view, each correct process sends $O(n\kappa)$ bits via the \textsc{view-completed} (line~\ref{line:broadcast_view_completed}) and \textsc{enter-view} (line~\ref{line:broadcast_enter_epoch}) messages.
By Lemma~\ref{lemma:5_delta}, each correct process $P_i$ enters $O(1)$ views during the time period $[\text{GST}, \text{GST} + 3\delta)$.
Moreover, Lemma~\ref{lemma:all_enter} shows that all correct processes enter a view $V \geq V_{\mathit{max}}$ by time $\text{GST} + 3\delta$.
As $V_{\mathit{sync}}^* - V_{\mathit{max}} = O(\sqrt{n})$, each correct process sends $O(1) \cdot O(n\kappa) + O(\sqrt{n}) \cdot O(n\kappa) = O(n^{1.5}\kappa)$ bits after GST and before $\tau_{\mathit{sync}}^* + \Delta$.
Hence, \sync exchanges $O(n^{2.5}\kappa)$ bits.

As for the latency, each correct process enters $V_{\mathit{max}} + 1$ by time $\tau_1 = \text{GST} + 2\delta + \mathit{view\_duration} + 2\delta$: after $V_{\mathit{max}}$ is concluded (at time $\text{GST} + 2\delta + \mathit{view\_duration}$, at the latest), every correct process receives $2t + 1$ \textsc{view-completed} messages (line~\ref{line:receive_view_completed}) for $V_{\mathit{max}}$ by time $\text{GST} + 2\delta + \mathit{view\_duration} + \delta$, and enters $V_{\mathit{max}} + 1$ (line~\ref{line:enter_v}) by time $\text{GST} + 2\delta + \mathit{view\_duration} + 2\delta$.
Then, for each view $V > V_{\mathit{max}} + 1$, $V$ is entered by all correct processes by time $\tau_1 + (V - V_{\mathit{max}} - 1)(\mathit{view\_duration} + 2\delta)$.
Hence, $V_{\mathit{sync}}^*$ is entered by time $\tau_{\mathit{sync}}^* =\text{GST} + 2\delta + \mathit{view\_duration} + 2\delta + O(\sqrt{n})(\mathit{view\_duration} + 2\delta)$.
Therefore, $\tau_{\mathit{sync}}^* + \Delta - \text{GST} = O(\sqrt{n}) +  \mathit{view\_duration} + O(\sqrt{n})\mathit{view\_duration} = O(n)$.
\end{proof}

\subsection{Proof of Correctness \& Complexity} \label{subsection:spread_proof}

This subsection proves the correctness and complexity of \spread.
We start by proving that \spread satisfies integrity.

\begin{lemma} \label{lemma:spread_integrity}
\spread satisfies integrity.
\end{lemma}
\begin{proof}
Before a correct process acquires a hash-signature pair $(h, \Sigma_h)$ (line~\ref{line:acquire_spread}), the process checks that $\mathsf{verify\_sig}(h, \Sigma_h) = \mathit{true}$ (line~\ref{line:receive_confirm}).
Thus, \spread satisfies integrity.
\end{proof}

Next, we prove redundancy of \spread.

\begin{lemma} \label{lemma:spread_redundancy}
\spread satisfies redundancy.    
\end{lemma}
\begin{proof}
Let a correct process obtain a threshold signature $\Sigma_h$ such that $\mathsf{verify\_sig}(h, \Sigma_h) = \mathit{true}$, for some hash value $h$.
As $\Sigma_h$ is a valid threshold signature for $h$, $2t + 1$ processes have partially signed $h$, which implies that at least $t + 1$ correct processes have done so (line~\ref{line:share_sign_hash}).
These processes have obtained a value $v$ for which $\mathsf{hash}(v) = h$ (line~\ref{line:obtain}), which concludes the proof.
\end{proof}

Next, we prove termination of \spread.
We start by showing that, if a correct process acquires a hash-signature pair, all correct processes eventually acquire a hash-signature pair.

\begin{lemma} \label{lemma:uniformity}
If a correct process acquires a hash-signature pair at some time $\tau$, then every correct process acquires a hash-signature pair by time $\max(\text{{\emph{GST}}},\tau)+\delta$.
\end{lemma}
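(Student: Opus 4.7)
The plan is to exploit the fact that acquisition (line~\ref{line:acquire_spread}) is immediately followed by a broadcast of the same \textsc{confirm} message (line~\ref{line:echo_confirm}), so every other correct process receives that message within $\delta$ time of $\max(\text{GST}, \tau)$ and can itself acquire the pair.

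More concretely, first I would argue that when $P_i$ acquires $(h, \Sigma_h)$ at time $\tau$ at line~\ref{line:acquire_spread}, it must have just satisfied the guard at line~\ref{line:receive_confirm}, i.e., $\mathsf{verify\_sig}(h, \Sigma_h) = \mathit{true}$. Since local computation takes zero time, $P_i$ also executes the broadcast at line~\ref{line:echo_confirm} at time $\tau$, sending $\langle \textsc{confirm}, h, \Sigma_h \rangle$ to every process. Only afterwards does $P_i$ stop executing \spread at line~\ref{line:stop_executing_spread}, which does not affect the delivery of the already-sent broadcast.

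Next I would invoke the partial synchrony assumption: every message sent at time $\tau' \geq \text{GST}$ is delivered within $\delta$ time, and a message sent before GST is delivered by $\text{GST} + \delta$. Hence, every correct process $P_j$ receives the above \textsc{confirm} message by time $\max(\text{GST}, \tau) + \delta$. I would then perform a simple case analysis on $P_j$ at the moment it receives this message: either (i) $P_j$ has already acquired some hash-signature pair at an earlier time (in which case the claim holds trivially for $P_j$), or (ii) $P_j$ has not yet acquired any pair and is therefore still participating in \spread (it has not yet executed line~\ref{line:stop_executing_spread}). In case (ii), the guard at line~\ref{line:receive_confirm} is met because $\mathsf{verify\_sig}(h, \Sigma_h) = \mathit{true}$ (the same signature $P_i$ verified), so $P_j$ triggers $\mathsf{acquire}(h, \Sigma_h)$ at line~\ref{line:acquire_spread} no later than $\max(\text{GST}, \tau) + \delta$.

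There is no real obstacle here; the only subtlety to flag is that the ``stop executing'' step at line~\ref{line:stop_executing_spread} happens strictly after the echo broadcast at line~\ref{line:echo_confirm}, so $P_i$'s own termination does not prevent dissemination, and the verification check at line~\ref{line:receive_confirm} at $P_j$ is guaranteed to succeed because the threshold signature being forwarded is exactly the one $P_i$ itself verified.
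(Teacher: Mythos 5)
Your proof is correct and follows essentially the same route as the paper's: $P_i$ echoes the \textsc{confirm} message at line~\ref{line:echo_confirm} immediately after acquiring (and before stopping at line~\ref{line:stop_executing_spread}), partial synchrony delivers it to every correct $P_j$ by $\max(\text{GST},\tau)+\delta$, and a case analysis on whether $P_j$ has already acquired finishes the argument. No gaps; the subtleties you flag (echo-before-stop, and the forwarded signature passing the check at line~\ref{line:receive_confirm}) are exactly the points the paper's proof relies on.
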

\begin{proof}
Let $P_i$ be a correct process that acquires a hash-signature pair at time $\tau$ (line~\ref{line:acquire_spread}). 
Immediately after acquiring the pair, $P_i$ disseminates the pair to all processes (line~\ref{line:echo_confirm}).
Hence, every correct process $P_j$ receives the disseminated pair (via a \textsc{confirm} message) by time $\max(\text{GST}, \tau) + \delta$.
If $P_j$ has not previously received a \textsc{confirm} message and has not stopped participating in \spread (by executing line~\ref{line:stop_executing_spread}), $P_j$ acquires the pair (line~\ref{line:acquire_spread}).
Otherwise, $P_j$ has already acquired a hash-signature pair (by time $\max(\text{GST}, \tau) + \delta$).
\end{proof}

The following lemma proves that, if all correct processes overlap for ``long enough'' in a view with a correct leader after GST, they all acquire a hash-signature pair.

\begin{lemma}\label{lemma:termination_under_synchronization}
If (1) all correct processes are in the same view $V$ from some time $\tau \geq \text{GST}$ until time $\tau' = \tau + \Delta = \tau + \delta\sqrt{n} + 3\delta$, and (2) $V$ has a correct leader, then all correct processes acquire a hash-signature pair by time $\tau'$ (and terminate \spread).
\end{lemma}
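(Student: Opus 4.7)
The plan is to trace a single correct leader $P_l \in \mathsf{leaders}(V)$ (guaranteed by hypothesis (2)) through the three consecutive phases of \spread in view $V$---dispersal, ack collection, confirm dissemination---and to show that each phase completes within the $\Delta = \delta\sqrt{n} + 3\delta$ window provided by hypothesis (1). Since $\tau \geq \text{GST}$, every message delay in the accounting is at most $\delta$, so the analysis reduces to arithmetic over four consecutive one-way hops.

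First, I would bound the dispersal phase by unrolling $P_l$'s loop at line~\ref{line:loop_dispersal}: the loop emits $\sqrt{n}$ batches of \textsc{dispersal} messages (line~\ref{line:multicast_dispersal}) separated by $\delta$-waits (line~\ref{line:wait_spread}), so the last batch leaves $P_l$ by time $\tau + (\sqrt{n}-1)\delta$ and every correct process receives its \textsc{dispersal} message by $\tau + \sqrt{n}\delta$; since $\mathsf{valid}(\mathit{proposal}_l) = \mathit{true}$, each recipient accepts the message (line~\ref{line:receive_dispersal}), obtains $\mathit{proposal}_l$ (line~\ref{line:obtain}), and returns a partial signature on $h = \mathsf{hash}(\mathit{proposal}_l)$ to $P_l$ (line~\ref{line:send_ack}).

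Next, I would argue that all such acks arrive at $P_l$ by $\tau + \sqrt{n}\delta + \delta$; since at least $2t+1$ correct processes ack, $P_l$ fires the rule at line~\ref{line:receive_ack}, combines the shares into a threshold signature $\Sigma_h$ (line~\ref{line:combine_ack}), and broadcasts $\langle \textsc{confirm}, h, \Sigma_h \rangle$ (line~\ref{line:broadcast_confirm}). The \textsc{confirm} reaches every correct process by $\tau + \sqrt{n}\delta + 2\delta$ and, because $\Sigma_h$ verifies, triggers $\mathsf{acquire}(h, \Sigma_h)$ (line~\ref{line:acquire_spread}) and then line~\ref{line:stop_executing_spread}; since $\tau + \sqrt{n}\delta + 2\delta < \tau' = \tau + \sqrt{n}\delta + 3\delta$, termination within $\tau'$ follows.

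The main point to be careful about is that each receiving process must still be in view $V$ at the instant it processes the relevant \textsc{dispersal}, \textsc{ack}, or \textsc{confirm} event, since otherwise the guards at lines~\ref{line:receive_dispersal}, \ref{line:receive_ack}, and \ref{line:receive_confirm} would be silent. Hypothesis (1) dispatches this uniformly: every event in the timing chain falls inside $[\tau, \tau + \sqrt{n}\delta + 2\delta] \subseteq [\tau, \tau']$, during which all correct processes (including $P_l$) remain in $V$. Once this is noted, the argument is essentially mechanical and the budget matches $\Delta$ with exactly $\delta$ of slack.
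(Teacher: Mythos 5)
Your proof is correct and follows essentially the same route as the paper's: trace the correct leader of $V$ through the \textsc{dispersal}, \textsc{ack}, and \textsc{confirm} hops, each bounded by $\delta$ after GST, and check the total fits within $\Delta$. The only differences are cosmetic — you account the last dispersal batch at $\tau + (\sqrt{n}-1)\delta$ (gaining $\delta$ of slack where the paper uses the full budget) and you make explicit the view-membership point that the paper leaves implicit via hypothesis (1).
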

\begin{proof}
By the statement of the lemma, the view $V$ has a correct leader $P_l$.
$P_l$ sends a \textsc{dispersal} message (line~\ref{line:multicast_dispersal}) to every correct process by time $\tau + \delta\sqrt{n}$.
Thus, by time $\tau + \delta\sqrt{n} + \delta$, every correct process has received a \textsc{dispersal} message from $P_l$ (line~\ref{line:receive_dispersal}), and sent an \textsc{ack} message back (line~\ref{line:send_ack}).
Hence, $P_l$ receives $2t + 1$ \textsc{ack} messages (line~\ref{line:receive_ack}) by time $\tau + \delta\sqrt{n} + 2\delta$, which implies that $P_l$ sends a \textsc{confirm} message to all processes by this time.
Finally, all correct processes receive the aforementioned \textsc{confirm} message (line~\ref{line:receive_confirm}), and acquire a hash-signature pair (line~\ref{line:acquire_spread}) by time $\tau' = \tau + \Delta = \tau + \delta\sqrt{n} + 3\delta$.
\end{proof}

We are now ready to prove termination of \spread.

\begin{lemma} \label{lemma:spread_termination}
\spread satisfies termination.    
\end{lemma}
\begin{proof}
If a correct process acquires a hash-signature pair, termination is ensured by Lemma~\ref{lemma:uniformity}.
Otherwise, \sync ensures that there exists a synchronization time $\tau \geq \text{GST}$ such that (1) all correct processes are in the same view $V$ from $\tau$ until $\tau + \Delta = \tau + \delta\sqrt{n} + 3\delta$, and (2) $V$ has a correct leader.
In this case, termination is guaranteed by Lemma~\ref{lemma:termination_under_synchronization}.
\end{proof}

As it is proven that \spread satisfies integrity (Lemma~\ref{lemma:spread_integrity}), redundancy (Lemma~\ref{lemma:spread_redundancy}) and termination (Lemma~\ref{lemma:spread_termination}), \spread is correct.

\begin{theorem}
\spread is correct.
\end{theorem}

Now, we address \spread's complexity.
We start by proving that (at most) $O(n^{1.5})$ \textsc{dispersal} messages are sent after GST.



\begin{lemma}\label{lemma:dispersal_complexity}
At most $O(n^{1.5})$ \textsc{dispersal} messages are sent by correct processes after GST.
\end{lemma}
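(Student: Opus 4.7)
\textit{Proof proposal.}
The plan is to bound the number of \textsc{dispersal} messages by partitioning the post-GST timeline into an \emph{unsynchronized} phase $[\text{GST}, \text{GST}+3\delta)$ and a \emph{synchronized} phase $[\text{GST}+3\delta, \tau_{\mathit{sync}}^*+\Delta]$, and arguing that the contribution of each phase is $O(n^{1.5})$. A correct process only sends a \textsc{dispersal} message when it is a leader of its current view (line~\ref{line:multicast_dispersal}), and the $\delta$-waiting step (line~\ref{line:wait_spread}) forces a leader to split the transmission of its proposal into $\sqrt{n}$ bursts of $\sqrt{n}$ recipients each, spaced $\delta$ apart. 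Together with Lemma~\ref{lemma:termination_under_synchronization}, which guarantees that all correct processes stop executing \spread by the end of $V_{\mathit{sync}}^*$, this already limits the time window in which any \textsc{dispersal} message can be sent after GST.

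For the unsynchronized phase, I would combine the limited-entrance property of \sync (at most $O(1)$ views entered by each correct process in $[\text{GST}, \text{GST}+3\delta)$) with the fact that, since local clocks do not drift after GST, a leader can complete at most $O(1)$ bursts of size $\sqrt{n}$ in a $3\delta$ window (each burst is separated from the next by the $\delta$-wait). Hence each correct process sends $O(1)\cdot O(\sqrt{n}) = O(\sqrt{n})$ \textsc{dispersal} messages during this phase, for a total of $O(n \cdot \sqrt{n}) = O(n^{1.5})$ across all correct processes.

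For the synchronized phase, I would first use the stabilization property to deduce that by time $\text{GST}+3\delta$ every correct process has entered some view $V \geq V_{\mathit{max}}$, and then use monotonicity together with Lemma~\ref{lemma:termination_under_synchronization} to conclude that no correct process enters a view greater than $V_{\mathit{sync}}^*$ after GST. The key observation is that for every view $V$ with $V_{\mathit{max}} < V < V_{\mathit{sync}}^*$, the overlapping property of \sync ensures that all correct processes overlap in $V$ for at least $\Delta$ time; if such a $V$ also had a correct leader, it would itself be a synchronization view, contradicting the minimality of $V_{\mathit{sync}}^*$. Therefore $\mathsf{leaders}(V)$ contains no correct process and no \textsc{dispersal} message is sent in $V$. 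The only remaining views where correct leaders may act are $V_{\mathit{max}}$ and $V_{\mathit{sync}}^*$, each contributing at most $\sqrt{n}$ correct leaders, each sending at most $n$ \textsc{dispersal} messages (a total of $\sqrt{n}$ bursts of $\sqrt{n}$ recipients), yielding $2 \cdot O(n^{1.5}) = O(n^{1.5})$ messages in the synchronized phase.

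The main obstacle will be the ``no-correct-leader'' argument for the intermediate views $V_{\mathit{max}} < V < V_{\mathit{sync}}^*$: it requires a careful reduction to the minimality of $V_{\mathit{sync}}^*$ via the overlapping property, and a clean treatment of the boundary cases $V = V_{\mathit{max}}$ and $V = V_{\mathit{sync}}^*$ (which straddle both phases and must be double-counted only once). Once that is handled, summing the two bounds yields the claimed $O(n^{1.5})$ upper bound.
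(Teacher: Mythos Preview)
Your proposal is correct and follows essentially the same approach as the paper's proof: split the post-GST timeline at $\text{GST}+3\delta$, bound the unsynchronized phase via the $\delta$-waiting step (the paper is slightly terser and omits the explicit appeal to limited entrance, but your more careful treatment is fine), and argue that in the synchronized phase only $V_{\mathit{max}}$ and $V_{\mathit{sync}}^*$ can have correct leaders because any intermediate view with a correct leader would itself be a synchronization view by the overlapping property, contradicting the minimality of $V_{\mathit{sync}}^*$. Your worry about double-counting the boundary views is unnecessary, since the paper simply adds the three $O(n^{1.5})$ contributions without tracking overlaps.
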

\begin{proof}
During the time period $[\text{GST}, \text{GST} + 3\delta)$, each correct process can send \textsc{dispersal} messages (line~\ref{line:multicast_dispersal}) to only $O(1)$ groups of $\sqrt{n}$ processes due to the $\delta$-waiting step (line~\ref{line:wait_spread}).
Moreover, Lemma~\ref{lemma:termination_under_synchronization} claims that all correct processes stop executing \spread by the end of $V_{\mathit{sync}}^*$ (i.e., by time $\tau_{\mathit{sync}}^* + \Delta$).
Furthermore, the stabilization property of \sync guarantees that each correct process enters a view $V \geq V_{\mathit{max}}$ by time $\text{GST} + 3\delta$.
Similarly, the overlapping property of \sync states that, for any view greater than $V_{\mathit{max}}$, all correct processes overlap in that view for (at least) $\Delta$ time.

Due to the definition of $V_{\mathit{sync}}^*$ and $V_{\mathit{max}}$, no correct leader exists in any view $V$ with $V_{\mathit{max}} < V < V_{\mathit{sync}}^*$.
During the time period $[\text{GST} + 3\delta, \tau_{\mathit{sync}}^* + \Delta]$, only the correct leaders of $V_{\mathit{max}}$ (if any) and $V_{\mathit{sync}}^*$ send \textsc{dispersal} messages.
Hence, the number of sent \textsc{dispersal} messages is
\begin{equation*}
    \underbrace{O(n^{1.5})}_{[\text{GST}, \text{GST} + 3\delta)} + \underbrace{O(n^{1.5})}_{V_{\mathit{max}}} + \underbrace{O(n^{1.5})}_{V_{\mathit{sync}}^*} = O(n^{1.5}).
\end{equation*}
Thus, the lemma holds.

\end{proof}

Finally, we are ready to prove the complexity of \spread.

\begin{theorem}
\spread exchanges $O(n^{1.5}L + n^{2.5}\kappa)$ bits after GST (and before it terminates), and terminates in $O(n)$ time after GST.
\end{theorem}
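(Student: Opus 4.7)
The plan is to decompose the analysis into a bit-count part and a latency part, and within the bit-count part separately bound the four sources of messages: \textsc{dispersal}, \textsc{ack}, \textsc{confirm}, and the messages of \sync itself. The latency bound comes out almost for free: Lemma~\ref{lemma:termination_under_synchronization} guarantees that all correct processes acquire a hash-signature pair and terminate \spread by time $\tau_{\mathit{sync}}^* + \Delta$, and the complexity property of \sync established in \Cref{subsection:sync_pseudocode} gives $\tau_{\mathit{sync}}^* + \Delta - \text{GST} = O(n)$. The bit contribution of \sync itself is, by the same property, $O(n^{2.5}\kappa)$.

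For the \textsc{dispersal} messages I would directly invoke Lemma~\ref{lemma:dispersal_complexity}, which bounds their number by $O(n^{1.5})$; since each carries a proposal of $L$ bits, their total cost is $O(n^{1.5}L)$. For the \textsc{ack} messages I would partition the post-GST interval $[\text{GST}, \tau_{\mathit{sync}}^* + \Delta]$ into the unsynchronized interval $[\text{GST}, \text{GST} + 3\delta)$ and the synchronized interval $[\text{GST} + 3\delta, \tau_{\mathit{sync}}^* + \Delta]$. In the unsynchronized interval the limited entrance property of \sync puts every correct process in only $O(1)$ views, and in each view a process sends at most $\sqrt{n}$ \textsc{ack} messages (one per leader), giving $O(n^{1.5}\kappa)$ bits in total. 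In the synchronized interval, the stabilization, monotonicity, and limited synchronization view properties bound the number of views traversed by any correct process by $O(\sqrt{n})$, so the $n$ correct processes jointly send at most $O(n \cdot \sqrt{n} \cdot \sqrt{n}) = O(n^2)$ \textsc{ack} messages, i.e., $O(n^2\kappa)$ bits.

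The \textsc{confirm} messages I would handle using the round-robin leader schedule together with the stopping rule at line~\ref{line:stop_executing_spread}: across the $O(\sqrt{n})$ views a correct process enters after GST and before terminating, it serves as a leader at most $O(1)$ times, so it performs $O(1)$ \textsc{confirm} broadcasts from its own leadership plus at most one echo broadcast; each such broadcast costs $O(n\kappa)$ bits, summing to $O(n^2\kappa)$ bits over all correct processes. Adding the four contributions gives $O(n^{1.5}L) + O(n^2\kappa) + O(n^2\kappa) + O(n^{2.5}\kappa) = O(n^{1.5}L + n^{2.5}\kappa)$, as required. The main subtlety, which I expect to be the one point needing care, is the \textsc{ack} accounting in the synchronized period: one must rule out the possibility that correct processes keep replying to \textsc{dispersal}s from \emph{faulty} leaders in all the intermediate views between $V_{\mathit{max}}$ and $V_{\mathit{sync}}^*$ for long enough to blow up the count, and the $O(\sqrt{n})$ bound on the number of such views coming from the limited synchronization view property of \sync is exactly what absorbs this.
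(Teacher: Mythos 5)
Your proposal is correct and takes essentially the same route as the paper's proof: the same decomposition into \textsc{dispersal}, \textsc{ack}, \textsc{confirm} and \sync costs, the same split into the unsynchronized period $[\text{GST}, \text{GST}+3\delta)$ and the synchronized period up to $\tau_{\mathit{sync}}^* + \Delta$, and the same key ingredients (Lemma~\ref{lemma:dispersal_complexity}, Lemma~\ref{lemma:termination_under_synchronization}, and the stated properties of \sync), with the latency following identically from $\tau_{\mathit{sync}}^* + \Delta - \text{GST} = O(n)$. The only cosmetic difference is your \textsc{confirm} accounting via the round-robin schedule ($O(1)$ leaderships per process over the $O(\sqrt{n})$ relevant views, plus one echo each), whereas the paper argues that no correct leader exists in views strictly between $V_{\mathit{max}}$ and $V_{\mathit{sync}}^*$ and bounds the echo broadcasts separately; both give the same $O(n^2\kappa)$ contribution.
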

\begin{proof}
All correct processes stop executing \spread by the end of $V_{\mathit{sync}}^*$ (i.e., by time $\tau_{\mathit{sync}}^* + \Delta$) due to Lemma~\ref{lemma:termination_under_synchronization}.
Due to the limited entrance property of \sync, each correct process enters $O(1)$ views during the time period $[\text{GST}, \text{GST} + 3\delta)$. 
Each correct process sends $O(\sqrt{n})$ \textsc{ack} messages (line~\ref{line:send_ack}) in each view.
Similarly, each correct leader sends $O(n)$ \textsc{confirm} messages (line~\ref{line:broadcast_confirm}) in a view.
Thus, during the time period $[\text{GST}, \text{GST} + 3\delta)$, each correct process sends $O(1) \cdot \big( O(\sqrt{n}\kappa) + O(n\kappa) \big) = O(n\kappa)$ bits through \textsc{ack} and \textsc{confirm} messages.

Due to the stabilization property of \sync, each correct process enters (at least) $V_{\mathit{max}}$ by time $\text{GST} + 3\delta$.
For each view $V$ with $V_{\mathit{max}} < V < V_{\mathit{sync}}^*$, no correct leader is associated with $V$ (due to the overlapping property of \sync  and definition of $V_{\mathit{sync}}^*$): each correct process sends $O(\sqrt{n}\kappa)$ bits (exclusively via \textsc{ack} messages) in any view $V$ with $V_{\mathit{max}} < V < V_{\mathit{sync}}^*$.
Hence, during the time period $[\text{GST} + 3\delta, \tau_{\mathit{sync}}^* + \Delta]$, each correct process sends (via \textsc{ack} and \textsc{confirm} messages)
\begin{equation*}
        \underbrace{O(\sqrt{n})}_\text{$V_{\mathit{sync}}^* - V_{\mathit{max}}$} \cdot \underbrace{O(\sqrt{n}\kappa)}_\text{\textsc{ack}} + \underbrace{O(\sqrt{n}\kappa)}_\text{\textsc{ack} in $V_{\mathit{max}}$ and $V_{\mathit{sync}}^*$} + \underbrace{O(n\kappa)}_\text{\textsc{confirm} in $V_{\mathit{max}}$ and $V_{\mathit{sync}}^*$} = O(n\kappa) \text{ bits.}
\end{equation*}
Furthermore, at most $O(n^{1.5})$ \textsc{dispersal} messages are sent after GST (by Lemma~\ref{lemma:dispersal_complexity}), which implies that $O(n^{1.5}L)$ bits are sent through \textsc{dispersal} messages.
Next, \sync itself exchanges $O(n^{2.5}\kappa)$ bits.
Hence, \spread exchanges 
\begin{equation*}
    \underbrace{O(n^{2.5}\kappa)}_\text{\sync} + \underbrace{O(n^{1.5}L)}_\text{\textsc{dispersal}} + \underbrace{n \cdot O(n\kappa)}_\text{\textsc{ack} \& \textsc{confirm}} = O(n^{1.5}L + n^{2.5}\kappa) \text{ bits after GST and before termination}.
\end{equation*}
Note that the equation above neglects \textsc{confirm} messages sent by correct processes at line~\ref{line:echo_confirm}. 
However, as each correct process broadcasts this message only once (for a total of $O(n^2\kappa)$ sent bits by all correct processes), the complexity of \spread is not affected.
The latency of \spread is $\tau_{\mathit{sync}}^* + \Delta - \text{GST} = O(n)$ (due to the complexity of \sync).
\end{proof}
\section{\darestark: Proof}
\label{section:darestark_proof}

We start by proving agreement of \darestark.

\begin{lemma} \label{lemma:stark_agreement}
\darestark satisfies agreement.    
\end{lemma}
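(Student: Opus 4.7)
The plan is to reduce agreement of \darestark to the agreement property of \q combined with \Cref{theorem:starkshard}. First, I would observe that any correct process that triggers $\mathsf{decide}(\cdot)$ (line~\ref{line:stark_decide}) must have previously decided some pair $(h, \Sigma)$ from \q (line~\ref{line:stark_q_decide}). By the agreement property of \q, any two correct processes that decide from \q decide the same pair $(h, \Sigma)$. Hence, it suffices to prove that any two correct processes that decide from \darestark having obtained the same $h$ from \q decide the same value $v$.

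Next, I would analyze what a correct process $P_i$ does after deciding $h$ from \q. It waits for $t + 1$ symbols $s_1, \ldots, s_{t+1}$ to be collected in $\mathit{decision\_symbols}_i[h]$ (line~\ref{line:stark_decide_rule}), and each such symbol $s_k$ was received in a \textsc{retrieve} message from some process $P_{j_k}$ along with a STARK proof $\mathit{stark}_k$ that was verified to prove $\mathsf{shard}_{j_k}(?) = (h, s_k)$ (line~\ref{line:stark_check_sample}). The indices $j_1, \ldots, j_{t+1}$ are distinct because $\mathit{decision\_symbols}_i[h]$ is populated from \textsc{retrieve} messages tagged by sender. Applying \Cref{theorem:starkshard} to this collection, $P_i$ decides $v_i = \mathsf{decode}(\{s_1, \ldots, s_{t+1}\})$, and $v_i$ is the unique value satisfying $\mathsf{valid}(v_i) = \mathit{true}$ and $\mathsf{hash}(v_i) = h$.

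Finally, I would apply the same reasoning to any other correct process $P_{i'}$ that decides: its decision $v_{i'}$ also satisfies $\mathsf{hash}(v_{i'}) = h$. By the collision-resistance of $\mathsf{hash}(\cdot)$, $v_i = v_{i'}$, which yields agreement. The only subtle point, which I expect to be the main obstacle, is justifying that the $t + 1$ symbols collected by different correct processes need not be the same subset of RS symbols; \Cref{theorem:starkshard} is precisely the tool that removes this obstacle, since it guarantees that \emph{any} $t + 1$ STARK-certified symbols for the same hash $h$ decode to the same preimage of $h$ under $\mathsf{hash}(\cdot)$.
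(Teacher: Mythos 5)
There is a genuine gap in your first step. You claim that any correct process that triggers $\mathsf{decide}(\cdot)$ at line~\ref{line:stark_decide} must have previously decided some pair from \q, and you then identify the hash $h$ in the decision rule (line~\ref{line:stark_decide_rule}) with that process's own \q decision. Neither holds: the rule at line~\ref{line:stark_decide_rule} fires as soon as \emph{some} hash $h$ has accumulated $t+1$ STARK-certified symbols via \textsc{retrieve} messages (line~\ref{line:stark_check_sample}), with no reference to the process's own state in \q. A correct process that is slow in \q can therefore decide in \darestark before (or without regard to) deciding in \q, and even a process that has decided $(h,\Sigma)$ from \q is not syntactically prevented by the pseudocode from triggering the rule on a different hash $h'$. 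So your reduction ``both deciders obtained the same $h$ from \q'' does not cover all deciding processes, and the link between the decision-rule hash and \q is left unproven.

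The missing idea, which is how the paper closes this hole, is to tie the decision-rule hash to \q through the \emph{senders} of the \textsc{retrieve} messages rather than through the decider itself: since each sender $P_j$ can contribute only its own (STARK-verified) $j$-th symbol for a fixed $h$, collecting $t+1$ symbols in $\mathit{decision\_symbols}_i[h]$ requires $t+1$ distinct senders, hence at least one correct sender; a correct process broadcasts a \textsc{retrieve} message only for the hash it decided from \q (line~\ref{line:stark_q_decide}), so by the agreement property of \q the hashes $h_i$ and $h_j$ underlying any two \darestark decisions coincide. From that point on, your final paragraph is exactly the paper's argument: \Cref{theorem:starkshard} gives $\mathsf{hash}(v_i) = \mathsf{hash}(v_j) = h$, and collision-resistance of $\mathsf{hash}(\cdot)$ gives $v_i = v_j$. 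So your concluding tools are right, but the hash-linking step needs the one-correct-sender argument, not the assumption that every decider has already decided in \q.
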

\begin{proof}
By contradiction, suppose that \darestark does not satisfy agreement.
Hence, there exists an execution in which two correct processes $P_i$ and $P_j$ disagree; let $P_i$ decide $v_i$ and $P_j$ decide $v_j \neq v_i$.
Process $P_i$ (resp., $P_j$) has received $t + 1$ RS symbols (line~\ref{line:stark_decide_rule}) which correspond to some hash value $h_i$ (resp., $h_j$).
Thus, one correct process has broadcast a \textsc{retrieve} message for $h_i$, and one correct process has broadcast a \textsc{retrieve} message for $h_j$ (line~\ref{line:stark_broadcast_retrieval}).
As correct processes broadcast \textsc{retrieve} messages only for hash values decided by \q (line~\ref{line:stark_q_decide}), $h_i = h_j = h$.
Moreover, $\mathit{decision\_symbols}_i[h]$ and $\mathit{decision\_symbols}_j[h]$ contain $t + 1$ correct RS symbols (due to the check at line~\ref{line:stark_check_sample}).
Therefore, \Cref{theorem:starkshard} shows that $\mathsf{hash}(v_i) = \mathsf{hash}(v_j) = h$, which implies that $v_i = v_j$ (as $\mathsf{hash}(\cdot)$ is collision-resistant).
We reach a contradiction, which concludes the proof.
\end{proof}

Next, we prove the validity property of \darestark.

\begin{lemma} \label{lemma:stark_validity}
\darestark satisfies validity.
\end{lemma}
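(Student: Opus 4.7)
The plan is to derive validity of \darestark as a nearly direct corollary of Theorem~\ref{theorem:starkshard}. The proof would begin by fixing an arbitrary correct process $P_i$ that decides some value $v'$. Inspecting the pseudocode, the only decision point is line~\ref{line:stark_decide}, so by the guard at line~\ref{line:stark_decide_rule} there must exist a hash $h$ for which $S_i := \mathit{decision\_symbols}_i[h]$ contains exactly $t + 1$ elements, with $v' = \mathsf{decode}(S_i)$.

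Next, I would examine how each element enters $S_i$. By the check at line~\ref{line:stark_check_sample}, a symbol $s$ is admitted into $\mathit{decision\_symbols}_i[h]$ only if $P_i$ has received $\langle \textsc{retrieve}, h, s, \mathit{stark} \rangle$ from some process $P_j$ whose STARK verifies the statement $\mathsf{shard}_j(?) = (h, s)$. The crux of the argument is that, for a fixed hash $h$ and a fixed sender index $j$, the statement $\mathsf{shard}_j(?) = (h, s)$ admits a valid proof for at most one symbol $s$: by definition of $\mathsf{shard}_j$, any witness $v$ must satisfy $\mathsf{hash}(v) = h$, which pins down $v$ via the collision resistance of $\mathsf{hash}(\cdot)$, and consequently $s = \mathsf{encode}_j(v)$ is also uniquely determined. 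Hence the $t + 1$ distinct symbols in $S_i$ must originate from $t + 1$ pairwise distinct sender indices $j_1, \ldots, j_{t+1}$, each carrying a STARK that certifies the corresponding shard relation.

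With this in hand, I can invoke Theorem~\ref{theorem:starkshard} on the indices $j_1, \ldots, j_{t+1}$, the symbols $s_1, \ldots, s_{t+1}$, and the associated STARK proofs to conclude that $v' = \mathsf{decode}(S_i)$ satisfies $\mathsf{valid}(v') = \mathit{true}$, which is precisely the validity requirement. No appeal to \q or to the dispersal phase is needed beyond what is already baked into Theorem~\ref{theorem:starkshard}.

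The main obstacle is the injectivity step in the second paragraph: bridging the set-based membership of $S_i$ (which stores symbols, not symbol-index pairs) and the distinct-index hypothesis required by Theorem~\ref{theorem:starkshard}. This step implicitly combines collision resistance of $\mathsf{hash}(\cdot)$ with the soundness of STARKs, so the formal write-up should make the dependency on these two assumptions explicit and note that both are assumed throughout the model in \Cref{section:preliminaries} and \Cref{subsection:starks}.
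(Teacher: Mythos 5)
Your proof is correct and follows essentially the same route as the paper: both reduce validity to the check at line~\ref{line:stark_check_sample} ensuring that $\mathit{decision\_symbols}_i[h]$ holds $t+1$ correct RS symbols and then invoke Theorem~\ref{theorem:starkshard}. Your extra injectivity step (collision resistance plus STARK soundness forcing the $t+1$ distinct symbols to come from distinct sender indices) merely makes explicit a detail the paper's one-line proof leaves implicit, so no substantive difference remains.
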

\begin{proof}
Let a correct process $P_i$ decide a value $v$ (line~\ref{line:stark_decide}).
The value $v$ is ``constructed'' from $\mathit{decision\_symbols}_i$, which contains $t + 1$ correct RS symbols (due to the check at line~\ref{line:stark_check_sample}).
By \Cref{theorem:starkshard}, $\mathsf{valid}(v) = \mathit{true}$.
\end{proof}

The next lemma proves termination.

\begin{lemma} \label{lemma:stark_termination}
\darestark satisfies termination.
\end{lemma}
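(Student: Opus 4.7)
The plan is to trace the three phases of \darestark in order, showing that each correct process eventually advances from one phase to the next, and finally triggers $\mathsf{decide}(\cdot)$ at line~\ref{line:stark_decide}. Fix a correct process $P_i$.

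\textbf{Dispersal phase.} I would first argue that every correct process $P_j$ eventually sends to $P_i$ a well-formed \textsc{dispersal} message carrying $\mathsf{shard}_i(v_j)$ together with $\mathsf{proof}_i(v_j)$ (lines~\ref{line:compute_shard}--\ref{line:stark_send_dispersal}). Since the STARK proof is constructed honestly, the check at line~\ref{line:stark_receive_shard} passes at $P_i$, so $P_i$ responds with a partial signature on $\mathsf{hash}(v_j)$ (line~\ref{line:stark_send_ack}). By the same argument (swapping the roles of $P_i$ and $P_j$), every correct $P_j$ eventually receives partial signatures from all $2t + 1$ correct processes on $\mathsf{hash}(v_j)$, and so the guard at line~\ref{line:stark_received_acks} is satisfied. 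Thus $P_j$ combines these shares into a valid threshold signature $\Sigma_j$ (line~\ref{line:combine_ack_2}) and invokes $\mathit{agreement}.\mathsf{propose}(\mathsf{hash}(v_j), \Sigma_j)$ (line~\ref{line:stark_agreement_propose}).

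\textbf{Agreement phase.} Since every correct process proposes a pair $(h, \Sigma)$ with $\mathsf{verify\_sig}(h, \Sigma) = \mathit{true}$ to \q, the termination and validity properties of \q (as stated in \Cref{subsubsection:quad_overview}) ensure that every correct process eventually fires the handler at line~\ref{line:stark_q_decide} with some common $(h, \Sigma)$ such that $\mathsf{verify\_sig}(h, \Sigma) = \mathit{true}$.

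\textbf{Retrieval phase.} This is the most delicate step. I need to show that at least $t + 1$ correct processes actually broadcast \textsc{retrieve} messages for $h$, i.e., satisfy $\mathit{proposal\_shards}_i[h] \neq \bot$ at line~\ref{line:stark_q_decide}. Because $\Sigma$ is a valid $(2t + 1, n)$-threshold signature on $h$, at least $t + 1$ correct processes partially signed $h$ at line~\ref{line:stark_send_ack}; by inspection of the code, each such correct $P_k$ first passed the STARK check at line~\ref{line:stark_receive_shard} and then stored the shard at line~\ref{line:stark_store_shard}, so $\mathit{proposal\_shards}_k[h] \neq \bot$. Each of these $t + 1$ correct processes therefore broadcasts a \textsc{retrieve} message carrying a correctly-encoded pair $(s, \mathit{stark})$ after deciding from \q (line~\ref{line:stark_broadcast_retrieval}). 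After GST, all such messages reach every correct process within $\delta$ time, pass the STARK verification at line~\ref{line:stark_check_sample}, and are accumulated in $\mathit{decision\_symbols}[h]$. Once $t + 1$ distinct symbols are collected the guard at line~\ref{line:stark_decide_rule} fires, and $P_i$ invokes $\mathsf{decode}(\cdot)$ and triggers $\mathsf{decide}$ at line~\ref{line:stark_decide}.

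\textbf{Main obstacle.} No step is individually deep, but the argument hinges on tying the threshold signature produced by \q back to the dispersal phase to guarantee that the $t + 1$ retrievers exist. Once that bridge is made explicit, termination follows from the termination of \q together with the bounded message delay after GST; no additional machinery is required.
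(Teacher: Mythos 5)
Your proposal is correct and follows essentially the same route as the paper's proof: honest dispersal yields $2t+1$ partial signatures so every correct process proposes to \q, \q's guarantees give a common decided pair $(h,\Sigma)$, and the threshold signature ties back to the dispersal phase to certify that at least $t+1$ correct processes stored (and hence broadcast) provably correct RS symbols for $h$, so every correct process collects $t+1$ symbols and decides. One small nit: the commonality of the decided hash needed in the retrieval step comes from \q's \emph{agreement} property (which the paper invokes), not its validity, and the handler at line~\ref{line:stark_q_decide} fires only at processes with $\mathit{proposal\_shards}_i[h] \neq \bot$ — neither point affects the soundness of your argument.
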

\begin{proof}
Each correct process $P_i$ successfully disperses its value's symbols (line~\ref{line:stark_send_dispersal}), each with a valid STARK proof.
When a correct process receives its symbol from $P_i$, the process eventually verifies the received proof (line~\ref{line:stark_receive_shard}), and partially signs the hash of $P_i$'s value (line~\ref{line:stark_send_ack}).
Therefore, $P_i$ eventually receives $2t + 1$ partial signatures (line~\ref{line:stark_received_acks}), and proposes to \q (line~\ref{line:stark_agreement_propose}).
Due to the termination and agreement properties of \q, all correct processes decide the same hash-signature pair $(h, \Sigma)$ from \q (line~\ref{line:stark_q_decide}).

At least $t + 1$ correct processes have obtained (provably correct) RS symbols for a value $v$ with $\mathsf{hash}(v) = h$ (as $2t + 1$ processes have partially signed $h$ at line~\ref{line:stark_send_ack}).
All of these correct processes disseminate their symbols to all other correct processes (line~\ref{line:stark_broadcast_retrieval}).
Hence, all correct processes eventually obtain $t + 1$ correct RS symbols (line~\ref{line:stark_decide_rule}), and decide (line~\ref{line:stark_decide}).
\end{proof}

As \darestark satisfies agreement (Lemma~\ref{lemma:stark_agreement}), validity (Lemma~\ref{lemma:stark_validity}) and termination (Lemma~\ref{lemma:stark_termination}), \darestark is correct.

\begin{theorem}
\darestark is correct.
\end{theorem}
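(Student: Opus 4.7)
The statement ``\darestark is correct'' unpacks, by the definition of Byzantine consensus recalled in \Cref{section:introduction}, into the conjunction of three properties: \emph{Agreement}, \emph{Termination}, and \emph{(External) Validity}. My plan is therefore to observe that each of these three properties has already been established as a dedicated lemma immediately above the theorem, and simply to assemble them into a single conclusion.

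Concretely, I will first invoke Lemma~\ref{lemma:stark_agreement} to conclude Agreement: no two correct processes decide distinct values, because any decision is derived from $t+1$ provably correct RS symbols tied (via STARK verification on line~\ref{line:stark_check_sample}) to a single hash that was output by \q, after which collision-resistance of $\mathsf{hash}(\cdot)$ together with \Cref{theorem:starkshard} forces uniqueness of the decoded preimage. Next, I will invoke Lemma~\ref{lemma:stark_validity} to conclude Validity, leaning again on \Cref{theorem:starkshard}, which guarantees that decoding any $t+1$ provably correct RS symbols yields a value $v$ with $\mathsf{valid}(v) = \mathit{true}$. Finally, I will invoke Lemma~\ref{lemma:stark_termination} to conclude Termination, via the established chain: every correct process eventually gathers $2t+1$ acknowledgments, proposes to \q, decides some hash $h$ from \q by \q's termination and agreement, and since at least $t+1$ correct processes hold a provably correct RS symbol for $h$ and rebroadcast it at line~\ref{line:stark_broadcast_retrieval}, every correct process eventually accumulates $t+1$ such symbols and decides at line~\ref{line:stark_decide}.

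There is no substantive obstacle here: the theorem is a direct corollary of the three preceding lemmas, and the proof reduces to citing them in sequence. The one judgment call is whether to also restate the already-established complexity bounds of $O(nL + n^2\mathit{poly}(\kappa))$ bits and $O(n)$ latency at the end for emphasis; I would keep the proof focused strictly on the three correctness properties, since the complexity analysis is already carried out in the prose of \Cref{subsection:darestark_pseudocode} and is not part of the ``correctness'' claim proper.
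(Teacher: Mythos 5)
Your proposal matches the paper exactly: the paper proves agreement, validity, and termination as the three lemmas preceding the theorem (Lemmas~\ref{lemma:stark_agreement}, \ref{lemma:stark_validity}, \ref{lemma:stark_termination}) and then states correctness as their immediate conjunction, just as you do. Your summaries of the lemmas' internal arguments (STARK-verified symbols, \Cref{theorem:starkshard}, collision-resistance, and the $t+1$ rebroadcast chain) also agree with the paper's, and you are right to exclude the complexity bounds from the correctness claim.
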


Finally, we prove \darestark's complexity.

\begin{theorem}
\darestark achieves $O(nL + n^2\mathit{poly}(\kappa))$ bit complexity and $O(n)$ latency.
\end{theorem}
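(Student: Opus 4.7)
The plan is to bound the bit complexity by accounting separately for the three phases---Dispersal, Agreement, and Retrieval---and then, for the latency, to observe that each of these phases progresses in a bounded number of $\delta$-round trips after GST, with the only nontrivial contribution coming from \q.

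First I would count the bits a single correct process $P_i$ sends, noting that every \textbf{upon} rule in \Cref{algorithm:darestark} fires at most a bounded number of times per peer (and in most cases exactly once in total). Concretely, $P_i$ sends at most $n$ \textsc{dispersal} messages at line~\ref{line:stark_send_dispersal}, each of size $O(L/n + \mathit{poly}(\kappa))$ (one RS symbol plus one STARK proof), and at most $n$ \textsc{ack} messages at line~\ref{line:stark_send_ack} of size $O(\kappa)$; after deciding from \q, it broadcasts $n$ \textsc{retrieve} messages at line~\ref{line:stark_broadcast_retrieval}, again of size $O(L/n + \mathit{poly}(\kappa))$. Summing over all $n$ correct processes gives $O(nL + n^2\mathit{poly}(\kappa))$ bits for Dispersal and Retrieval combined, and it remains to add the single invocation of \q, which contributes the $O(n^2\kappa)$ term stated in \Cref{subsubsection:quad_overview}. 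Since all of these bounds hold regardless of when GST occurs (the number of messages per \textbf{upon} rule is bounded a priori, not by the number of views or epochs a process traverses), the post-GST bit complexity is at most the total bit complexity, namely $O(nL + n^2\mathit{poly}(\kappa))$.

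For the latency argument, I would follow the execution after GST through the three phases. The Dispersal phase completes within $O(\delta)$ time after GST: any correct $P_i$ that has already issued $\mathsf{propose}$ has its \textsc{dispersal} messages delivered within $\delta$ of GST, the corresponding \textsc{ack} messages return within another $\delta$, and $P_i$ thus invokes $\mathit{agreement}.\mathsf{propose}$ at line~\ref{line:stark_agreement_propose} within $O(\delta)$ of GST. The Agreement phase then terminates in $O(n)$ time by the linear latency of \q recalled in \Cref{subsubsection:quad_overview}. The Retrieval phase adds a further $O(\delta)$, since as soon as the first correct process decides from \q it begins broadcasting its shard, and every correct process collects $t+1$ provably correct RS symbols within $O(\delta)$ of the \q decision, after which it decides at line~\ref{line:stark_decide}. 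The three phases combined yield $O(n)$ latency.

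The main obstacle, such as it is, is purely bookkeeping: one must verify that no \textbf{upon} rule can be triggered unboundedly many times, either by duplicated messages from Byzantine senders or by spurious decisions from \q. In each case the guard either restricts to a per-sender action (\textsc{dispersal}, \textsc{ack}, \textsc{retrieve}) or is rendered idempotent by the $\mathit{decided}_i$ flag used at line~\ref{line:stark_decide_rule}, so the per-process message counts listed above are tight. Beyond this, the calculation is a direct sum of the three phase contributions and of the complexity of \q.
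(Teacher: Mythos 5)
Your proposal is correct and follows essentially the same route as the paper's proof: bound each correct process's \textsc{dispersal}, \textsc{ack}, and \textsc{retrieve} traffic at $O(L + n\,\mathit{poly}(\kappa))$ bits, add \q's $O(n^2\kappa)$, and attribute the $O(n)$ latency to \q (with only $O(\delta)$ overhead for the dispersal and retrieval steps). Your extra bookkeeping about \textbf{upon}-rule firing counts is a slightly more careful articulation of what the paper leaves implicit, but it is not a different argument.
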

\begin{proof}
Recall that the bit complexity of \q is $O(n^2\kappa)$.
Moreover, each process sends (1) $O(n)$ \textsc{dispersal} messages (line~\ref{line:stark_send_dispersal}), each with $O(\frac{L}{n} + \mathit{poly}(\kappa))$ bits, (2) $O(n)$ \textsc{ack} messages (line~\ref{line:stark_send_ack}), each with $O(\kappa)$ bits, and (3) $O(n)$ \textsc{retrieve} messages (line~\ref{line:stark_broadcast_retrieval}), each with $O(\frac{L}{n} + \mathit{poly}(\kappa))$ bits.
Hence, through \textsc{dispersal}, \textsc{ack} and \textsc{retrieve} messages, each correct process sends 
\begin{equation*}
    \underbrace{n \cdot O(\frac{L}{n} + \mathit{poly}(\kappa))}_{\text{\textsc{dispersal}}} + \underbrace{n \cdot O(\kappa)}_{\text{\textsc{ack}}} + \underbrace{n \cdot O(\frac{L}{n} + \mathit{poly}(\kappa))}_{\text{\textsc{retrieve}}} = O(L + n\mathit{poly}(\kappa)) \text{ bits}.
\end{equation*}
Thus, the bit complexity of \darestark is $O(nL + n^2\mathit{poly}(\kappa))$.
Moreover, the latency of \darestark is $O(n)$ due to \q's linear latency.
\end{proof}
\section{Further Analysis of \name}

In this section, we provide a brief good-case analysis of \name and discuss how \name can be adapted to a model with unknown $\delta$.

\subsection{Good-Case Complexity}
\label{appendix:best_case}

For the good-case complexity, we consider only executions where $\text{GST} = 0$ and where all processes behave correctly.
This is sometimes also regarded as the common case since, in practice, there are usually no failures and the network behaves synchronously.
Throughout the entire subsection, $X = Y = \sqrt{n}$.

In such a scenario, the good-case bit complexity of \name is $O(n^{1.5}L + n^{2}\kappa)$.
As all processes are correct and synchronized at the starting view, \spread terminates after only one view.
The $n^{1.5}L$ term comes from this view: the first $\sqrt{n}$ correct leaders broadcast their full $L$-bit proposal to all other processes.
The $n^{2.5}\kappa$ term is reduced to only $n^{2}\kappa$ (only the \textsc{confirm} messages sent by correct processes at line~\ref{line:echo_confirm}) since \spread terminates after just one view.

The good-case latency of \name is essentially the sum of the good-case latencies of the Dispersal, Agreement, and Retrieval phases:

\begin{equation*}
        \begin{split}
        &\underbrace{O(\sqrt{n} \cdot \delta)}_\text{\textsc{dispersal}} + \underbrace{O(\delta)}_{\text{\textsc{agreement}}} + \underbrace{O(\delta)}_\text{\textsc{retrieval}} = O(\sqrt{n} \cdot \delta).
        \end{split}
    \end{equation*}

Thus, the good-case latency of \name is $O(\sqrt{n} \cdot \delta)$.

\subsection{\name (and \darestark) with Unknown \texorpdfstring{$\delta$}{Delta}} \label{appendix:dare_unknown}

To accommodate for unknown $\delta$, two modifications to \name are required:
\begin{compactitem}
    \item \spread must accommodate for unknown $\delta$.
    We can achieve this by having \sync increase the ensured overlap with every new view (by increasing $\mathit{view\_duration}$ for every new view).

    \item \q must accommodate for unknown $\delta$.
    Using the same strategy as for \sync, \q can tolerate unknown $\delta$.
    (The same modification makes \darestark resilient to unknown $\delta$.)
\end{compactitem}

\end{document}